\documentclass[11pt,a4paper,intlimits]{amsart}

\usepackage{chicago}
\usepackage{amsmath}
\usepackage{amssymb}
\usepackage{latexsym}
\usepackage{enumerate}
\usepackage{xspace}

\usepackage[bookmarksopen,pdfstartview=FitH]{hyperref}
\hypersetup{colorlinks,%
            citecolor=black,%
            filecolor=black,%
            linkcolor=black,%
            urlcolor=black}

\newif\ifpdf
\ifx\pdfoutput\undefined
   \pdffalse        
\else
   \pdfoutput=1     
   \pdftrue
\fi

\ifpdf
   \usepackage[pdftex]{epsfig,rotating}
   \pdfadjustspacing=1
   \pdfcompresslevel=9
\else
   \usepackage{graphicx}
\fi

\begin{document}

\theoremstyle{plain}
\newtheorem{thm}{Theorem}[section]
\newtheorem{lem}[thm]{Lemma}
\newtheorem{cor}[thm]{Corollary}
\newtheorem{prop}[thm]{Proposition}

\theoremstyle{definition}
\newtheorem{defin}[thm]{Definition}
\newtheorem{example}[thm]{Example}
\newtheorem{rem}[thm]{Remark}
\newtheorem{character}[thm]{Characterization}
\newtheorem*{assum}{Assumption ($\mathbb{M}$)}
\newtheorem*{assuem}{Assumption ($\mathbb{EM}$)}
\def\E{\mathrm{I\kern-.2em E}}
\def\Et{\bar{\E}}
\def\P{\bar{P}}
\def\R{\mathbb{R}}
\def\C{\mathbb{C}}
\def\F{\mathcal{F}}
\def\bF{\mathbf{F}}

\def\ud{\mathrm{d}}
\def\dx{\mathrm{d}x}
\def\dt{\mathrm{d}t}
\def\ds{\mathrm{d}s}
\def\dsdx{\mathrm{d}s,\mathrm{d}x}

\def\1{1}
\def\e{\mathrm{e}}
\def\ap{\infty}

\def\eqlaw{\stackrel{\mathrm{d}}{=}}
\def\Lt{(L_t)_{0\leq t\leq T}}
\def\ott{{0\leq t\leq T}}

\def\lev{L\'evy\xspace}
\def\proc{process\xspace}
\def\procs{processes\xspace}
\def\cad{c\`{a}dl\`{a}g\xspace}
\newcommand{\prozess}[1][L]{{\ensuremath{#1=(#1_t)_{0\le t\le T}}}\xspace}
\newcommand{\prazess}[1][L]{{\ensuremath{#1=(#1_t)_{0\le t\le T^*}}}\xspace}

\newcommand{\Log}{\ensuremath{\mathop{\mathcal{L}\mathrm{og}}}}

\renewcommand{\thefigure}{\thesection.\arabic{figure}}

\renewcommand{\theequation}{\thesection.\arabic{equation}}
\numberwithin{equation}{section}

\newcommand\llambda{{\mathchoice
 {\lambda\mkern-4.5mu{\raisebox{.4ex}{\scriptsize$\backslash$}}}
 {\lambda\mkern-4.83mu{\raisebox{.4ex}{\scriptsize$\backslash$}}}
 {\lambda\mkern-4.5mu{\raisebox{.2ex}{\footnotesize$\scriptscriptstyle\backslash$}}}
 {\lambda\mkern-5.0mu{\raisebox{.2ex}{\tiny$\scriptscriptstyle\backslash$}}}}}

\frenchspacing

\title[introduction to L\'evy processes]{An introduction to L\'evy processes\\with applications in Finance}

\author{Antonis Papapantoleon}

\address{Financial and Actuarial Mathematics, Vienna University of Technology,
         Wiedner Hauptstrasse 8/105, 1040 Vienna, Austria }
\email{papapan@fam.tuwien.ac.at}
\urladdr{http://www.fam.tuwien.ac.at/\~{}papapan}

\subjclass[2000]{60G51,60E07,60G44,91B28}

\keywords{L\'evy processes, jump-diffusion, infinitely divisible laws,
          L\'evy measure, Girsanov's theorem, asset price modeling,
          option pricing}

\thanks{These lecture notes were prepared for mini-courses taught at the University
        of Piraeus in April 2005 and March 2008, at the University of Leipzig in
        November 2005 and at the Technical University of Athens in September 2006 and
        March 2008. I am grateful for the opportunity of lecturing on these topics to
        George Skiadopoulos, Thorsten Schmidt, Nikolaos Stavrakakis and Gerassimos
        Athanassoulis.}

\maketitle

\pagestyle{myheadings}

\begin{abstract}
These lectures notes aim at introducing \lev processes in an
informal and intuitive way, accessible to non-specialists in the
field. In the first part, we focus on the theory of \lev processes.
We analyze a `toy' example of a \lev process, viz. a \lev
jump-diffusion, which yet offers significant insight into the
distributional and path structure of a \lev process. Then, we
present several important results about \lev processes, such as
infinite divisibility and the L\'evy-Khintchine formula, the
L\'evy-It\^o decomposition, the It\^o formula for \lev \procs and
Girsanov's transformation. Some (sketches of) proofs are presented,
still the majority of proofs is omitted and the reader is referred
to textbooks instead. In the second part, we turn our attention to
the applications of \lev processes in financial modeling and option
pricing. We discuss how the price process of an asset can be modeled
using \lev \procs and give a brief account of market incompleteness.
Popular models in the literature are presented and revisited from
the point of view of \lev \procs, and we also discuss three methods
for pricing financial derivatives. Finally, some indicative evidence
from applications to market data is presented.
\end{abstract}

\tableofcontents

\part{Theory}

\section{Introduction}

\lev processes play a central role in several fields of science,
such as \textit{physics}, in the study of turbulence, laser cooling
and in quantum field theory; in \textit{engineering}, for the study
of networks, queues and dams; in \textit{economics}, for continuous
time-series models; in the \textit{actuarial science}, for the
calculation of insurance and re-insurance risk; and, of course, in
\textit{mathematical finance}. A comprehensive overview of several
applications of \lev processes can be found in \citeN{Prabhu98}, in
\citeN{Barndorff-NielsenMikoschResnick01}, in
\citeN{KyprianouSchoutensWilmott05} and in \citeN{Kyprianou06}.

In mathematical finance, \lev processes are becoming extremely
fashionable because they can describe the observed reality of
financial markets in a more accurate way than models based on
Brownian motion. In the `real' world, we observe that asset price
processes have jumps or spikes, and risk managers have to take them
into consideration; in Figure \ref{usdjpy} we can observe some big
price changes (jumps) even on the very liquid USD/JPY exchange rate.
Moreover, the empirical distribution of asset returns exhibits fat
tails and skewness, behavior that deviates from normality; see
Figure \ref{returns} for a characteristic picture. Hence, models
that accurately fit return distributions are essential for the
estimation of profit and loss (P\&L) distributions. Similarly, in
the `risk-neutral' world, we observe that implied volatilities are
constant neither across strike nor across maturities as stipulated
by the \citeN{BlackScholes73} (actually, \citeNP{Samuelson65})
model; Figure \ref{surface} depicts a typical volatility surface.
Therefore, traders need models that can capture the behavior of the
implied volatility smiles more accurately, in order to handle the
risk of trades. \lev processes provide us with the appropriate tools
to adequately and consistently describe all these observations, both
in the `real' and in the `risk-neutral' world.

\begin{figure}
\begin{center}
 \includegraphics[width=8cm,keepaspectratio=true]{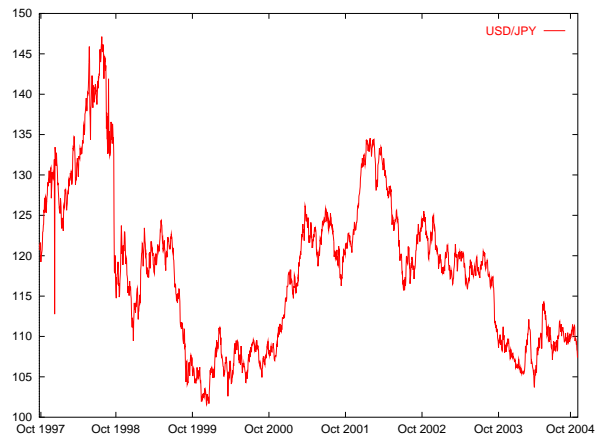}
  \caption{USD/JPY exchange rate, Oct. 1997--Oct. 2004.}
   \label{usdjpy}
\end{center}
\end{figure}

The main aim of these lecture notes is to provide an accessible
overview of the field of \lev \procs and their applications in
mathematical finance to the non-specialist reader. To serve that
purpose, we have avoided most of the proofs and only sketch a number
of proofs, especially when they offer some important insight to the
reader. Moreover, we have put emphasis on the intuitive
understanding of the material, through several pictures and
simulations.

We begin with the definition of a \lev \proc and some known
examples. Using these as the reference point, we construct and study
a \lev jump-diffusion; despite its simple nature, it offers
significant insights and an intuitive understanding of general \lev
\procs. We then discuss infinitely divisible distributions and
present the celebrated \lev--Khintchine formula, which links
processes to distributions. The opposite way, from distributions to
processes, is the subject of the \lev-It\^o decomposition of a \lev
\proc. The \lev measure, which is responsible for the richness of
the class of \lev processes, is studied in some detail and we use it
to draw some conclusions about the path and moment properties of a
\lev \proc. In the next section, we look into several subclasses
that have attracted special attention and then present some
important results from semimartingale theory. A study of martingale
properties of \lev \procs and the It\^o formula for \lev \procs
follows. The change of probability measure and Girsanov's theorem
are studied is some detail and we also give a complete proof in the
case of the Esscher transform. Next, we outline three ways for
constructing new \lev \procs and the first part closes with an
account on simulation methods for some \lev \procs.

\begin{figure}
 \begin{center}
  \includegraphics[width=8cm,keepaspectratio=true]{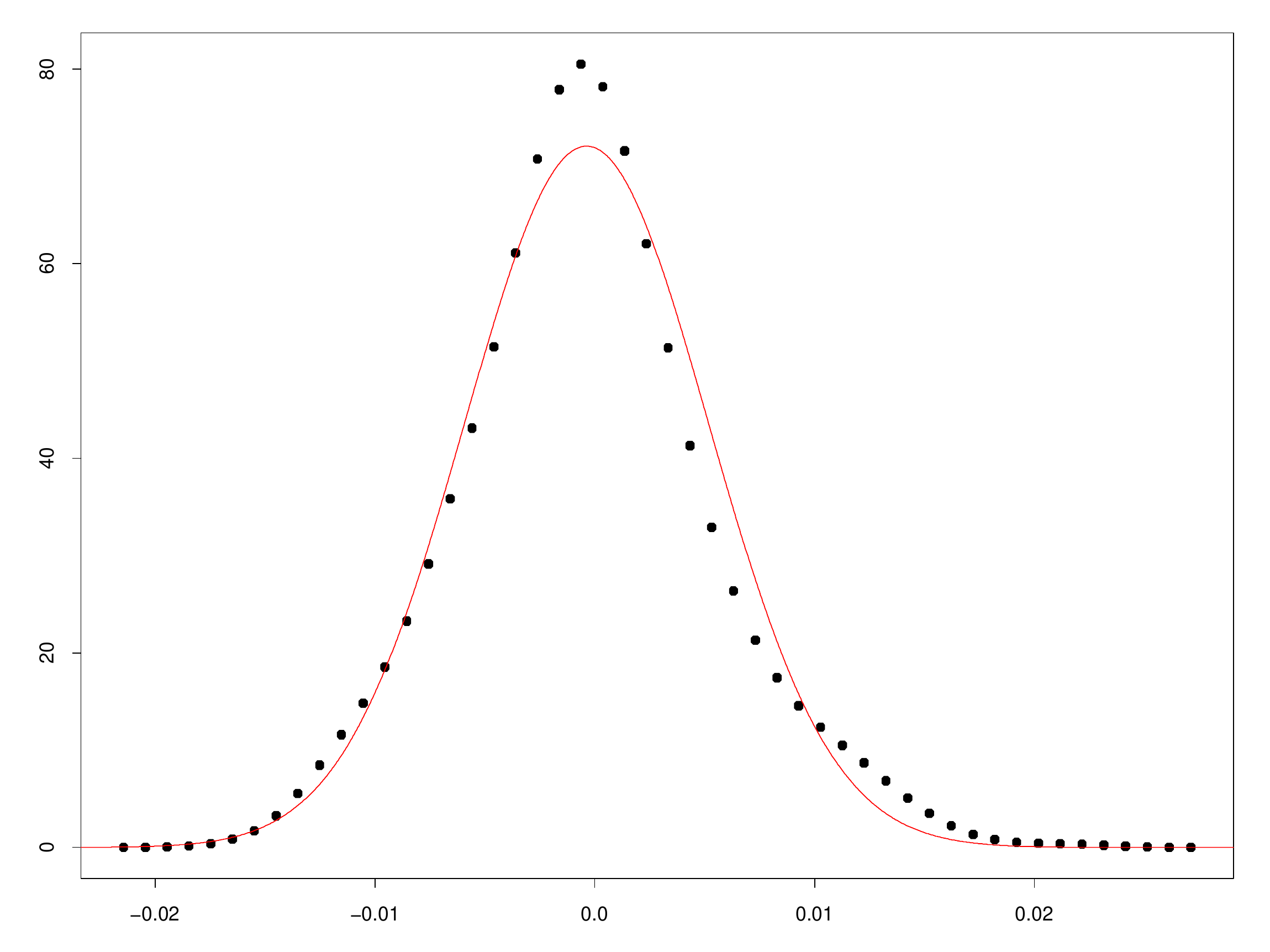}
  \caption{Empirical distribution of daily log-returns for the
            GBP/USD exchange rate and fitted Normal distribution.}
    \label{returns}
 \end{center}
\end{figure}

The second part of the notes is devoted to the applications of \lev
\procs in mathematical finance. We describe the possible approaches
in modeling the price process of a financial asset using \lev \procs
under the `real' and the `risk-neutral' world, and give a brief
account of market incompleteness which links the two worlds. Then,
we present a primer of popular \lev models in the mathematical
finance literature, listing some of their key properties, such as
the characteristic function, moments and densities (if known). In
the next section, we give an overview of three methods for pricing
options in \lev-driven models, viz. transform, partial
integro-differential equation (PIDE) and Monte Carlo methods.
Finally, we present some empirical results from the application of
\lev \procs to real market financial data. The appendices collect
some results about Poisson random variables and processes, explain
some notation and provide information and links regarding the data
sets used.

Naturally, there is a number of sources that the interested reader
should consult in order to deepen his knowledge and understanding of
\lev \procs. We mention here the books of \citeN{Bertoin96},
\citeN{Sato99}, Applebaum \citeyear{Applebaum04}, \citeN{Kyprianou06} on various
aspects of \lev \procs. Cont and Tankov \citeyear{ContTankov03} and \citeN{Schoutens03}
focus on the applications of L\'evy processes in finance. The books
of \citeN{JacodShiryaev03} and Protter \citeyear{Protter04} are essential
readings for semimartingale theory, while \citeN{Shiryaev99} blends
semimartingale theory and applications to finance in an impressive
manner. Other interesting and inspiring sources are the papers by
\citeN{Eberlein01a}, \citeN{Cont01},
\citeN{Barndorff-NielsenPrause01}, Carr et al. \citeyear{Carretal02},
Eberlein and \"Ozkan\citeyear{EberleinOezkan03} and \citeN{Eberlein07}.

\begin{figure}
 \begin{center}
  \includegraphics[width=8.75cm,keepaspectratio=true]{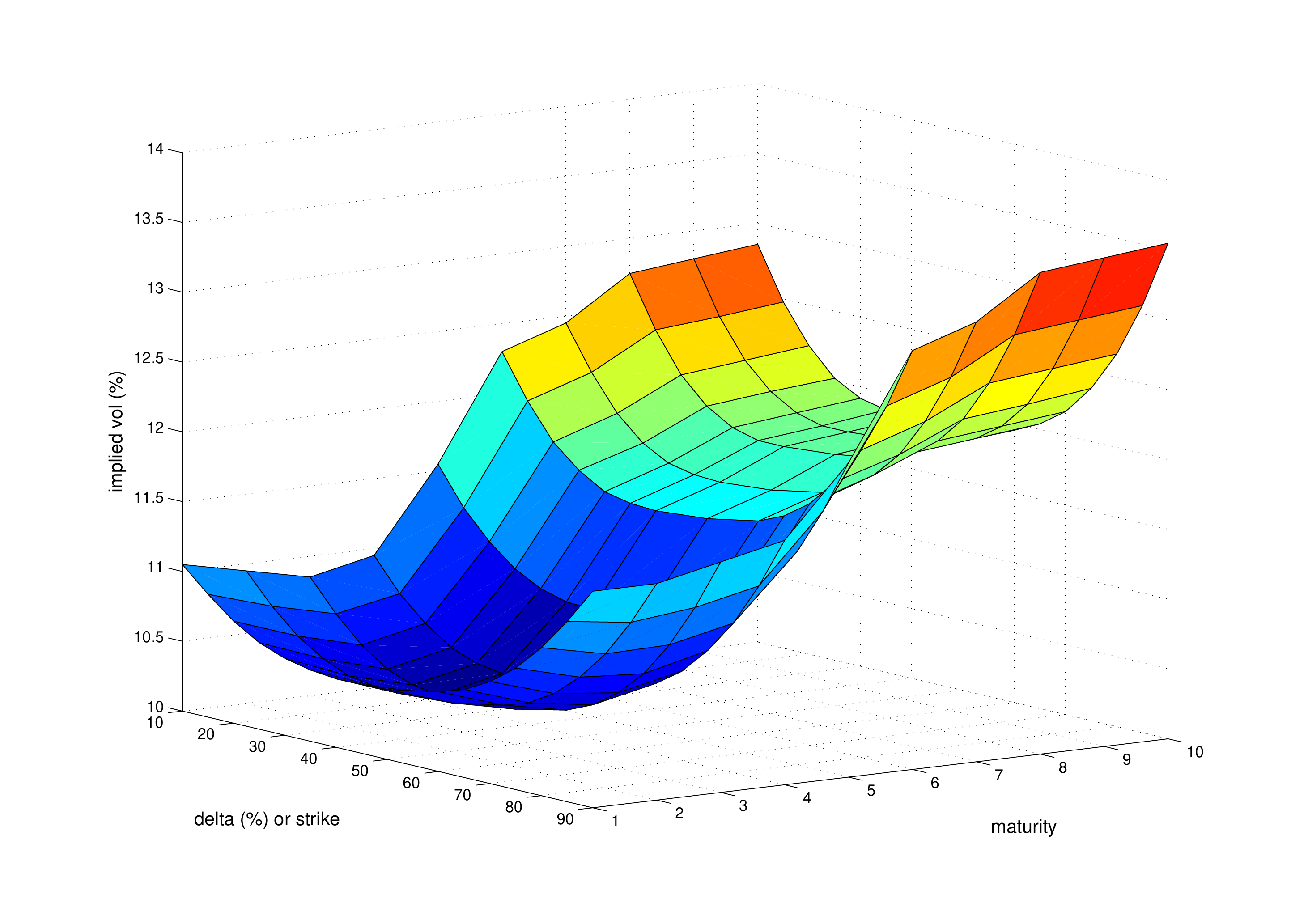}
   \caption{Implied volatilities of vanilla options on the
            EUR/USD exchange rate on November 5, 2001.}
    \label{surface}
 \end{center}
\end{figure}

\section{Definition}

Let ($\Omega, \mathcal{F}, \bF, P$) be a filtered probability space,
where $\F=\F_T$ and the filtration $\bF=(\F_t)_{t\in[0,T]}$
satisfies the usual conditions. 
Let $T\in[0,\ap]$ denote the time horizon which, in general, can be
infinite.

\begin{defin}
A c\`{a}dl\`{a}g, adapted, real valued stochastic process \prozess
with $L_{0}=0$ a.s. is called a \emph{L\'evy process} if the
following conditions are satisfied:
\begin{description}
    \item[(L1)] $L$ has \emph{independent increments}, i.e.
                $L_t - L_{s}$ is independent of $\F_s$ for any $0\leq s<t\leq T$.
    \item[(L2)] $L$ has \emph{stationary increments}, i.e.
                for any $0\leq s,t\leq T$ the distribution of $L_{t+s} - L_{t}$ does
                not depend on $t$.
    \item[(L3)] $L$ is \emph{stochastically continuous}, i.e.
                for every $\ott$ and $\epsilon>0$:
                $\lim_{s\rightarrow t}P(|L_t-L_s|>\epsilon)=0$.
\end{description}
\end{defin}

The simplest \lev process is the linear drift, a deterministic
process. Brownian motion is the only (non-deterministic) \lev
process with continuous sample paths.
\begin{figure}
\begin{center}
 \includegraphics[height=6cm,keepaspectratio=true,angle=90]{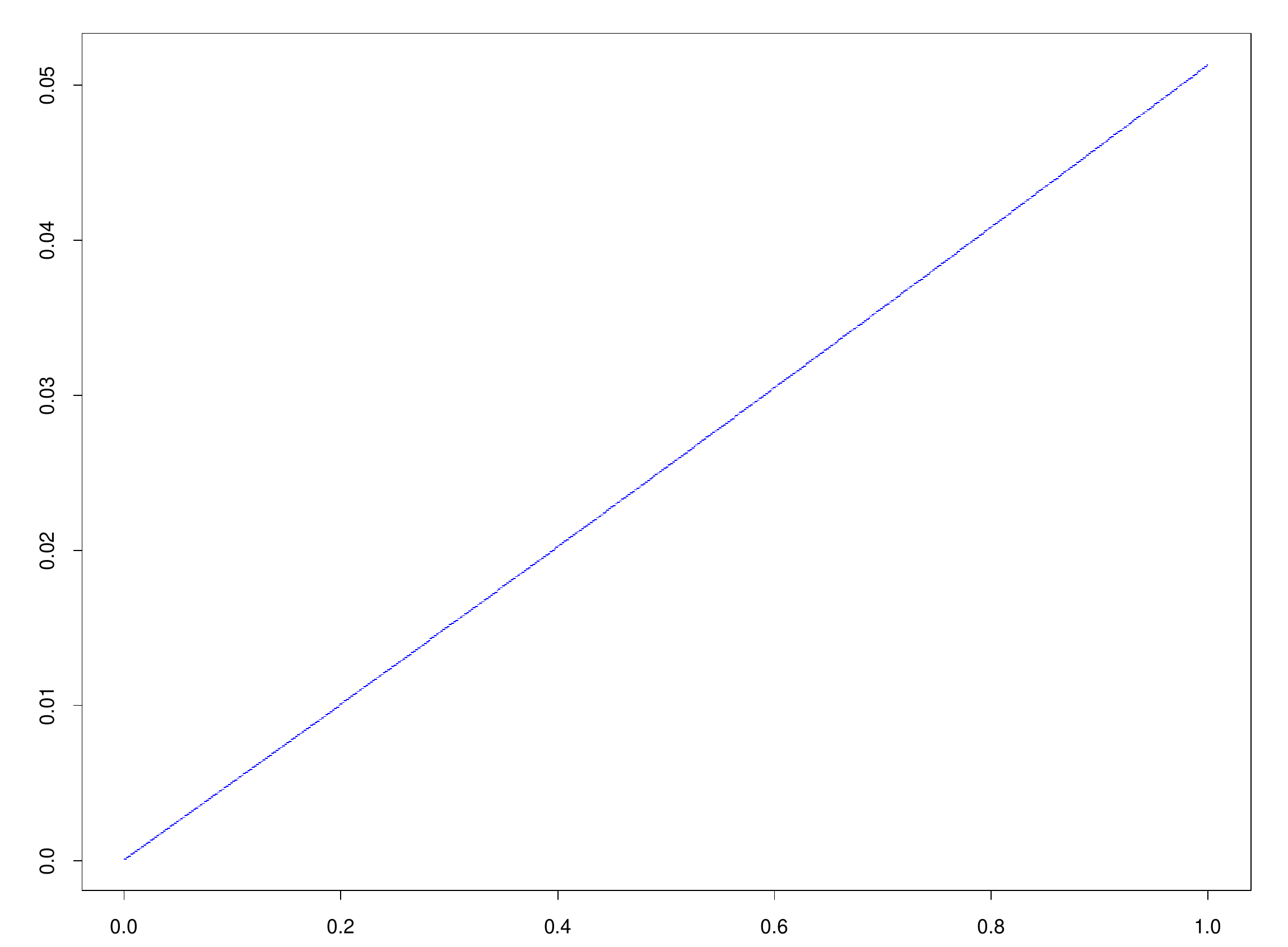}
 \includegraphics[width=6cm,keepaspectratio=true]{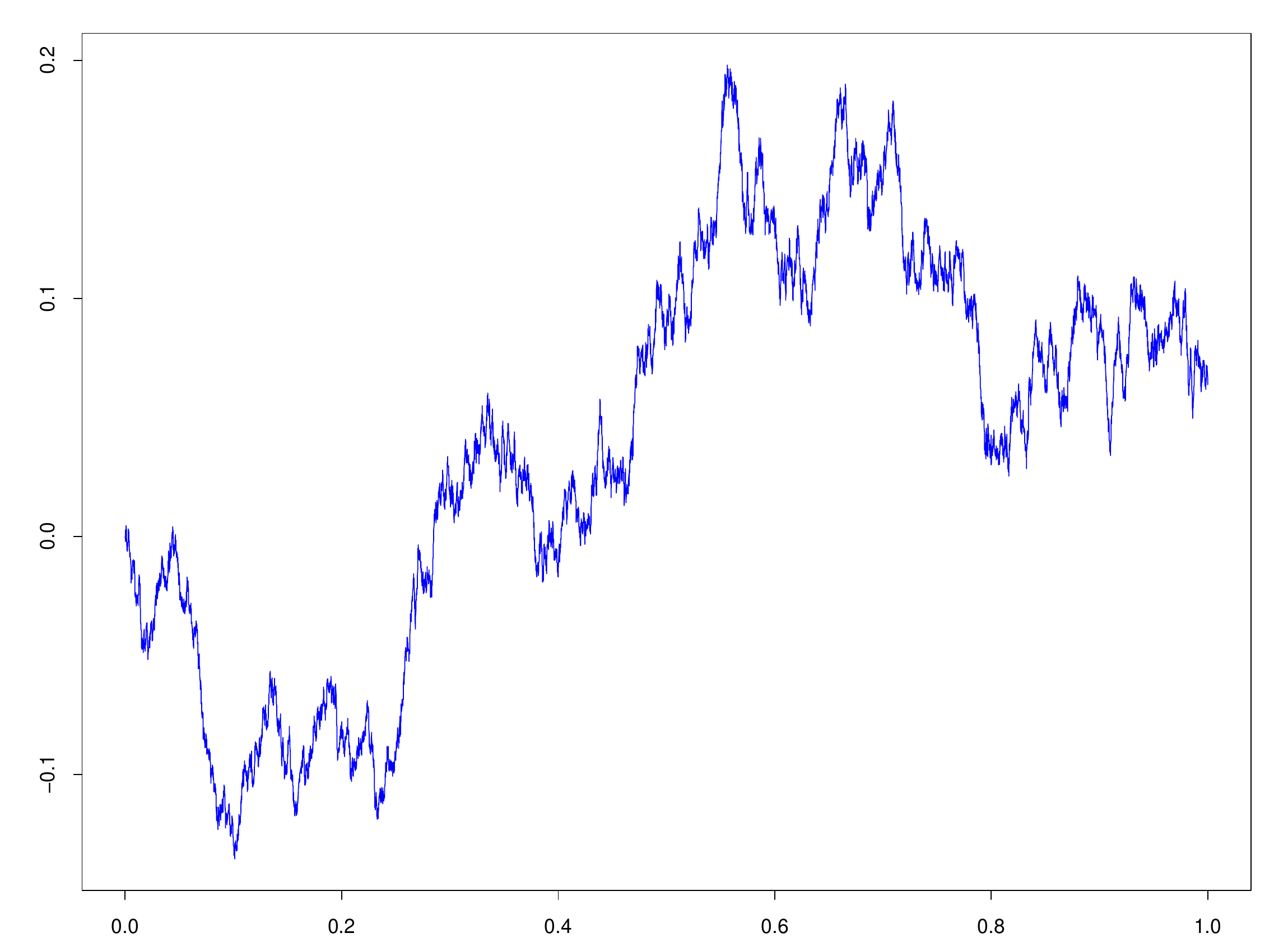}
  \caption{Examples of \lev processes: linear drift (left) and Brownian motion.}
\end{center}
\end{figure}
Other examples of \lev processes are the Poisson and compound
Poisson processes. Notice that the sum of a linear drift, a Brownian
motion and a compound Poisson process is again a \lev process; it is
often called a ``jump-diffusion'' process. We shall call it a
``\emph{\lev jump-diffusion}'' process, since there exist
jump-diffusion processes which are \textit{not} \lev processes.
\begin{figure}
\begin{center}
 \includegraphics[height=6cm,keepaspectratio=true,angle=90]{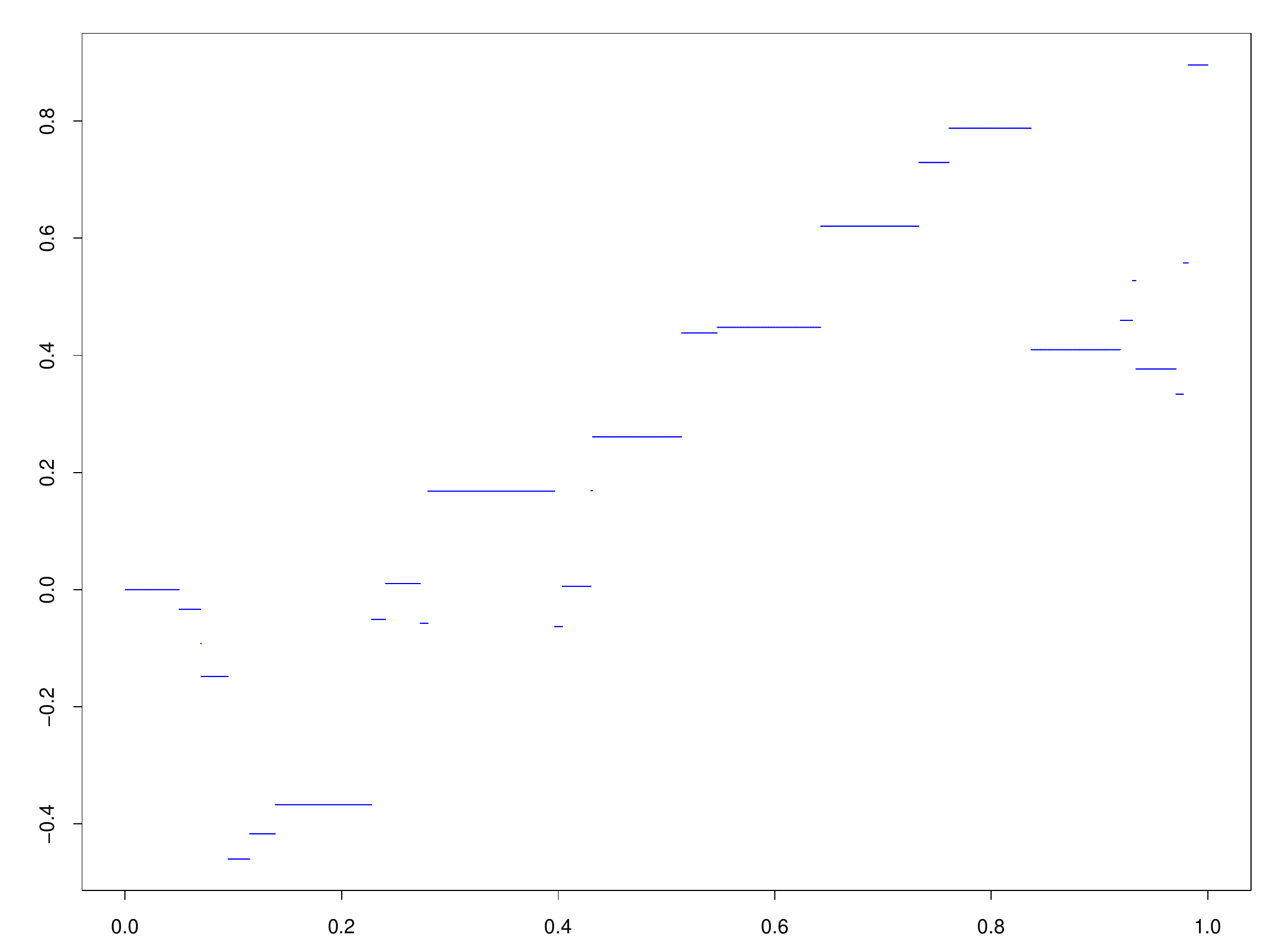}
 \includegraphics[width=6cm,keepaspectratio=true]{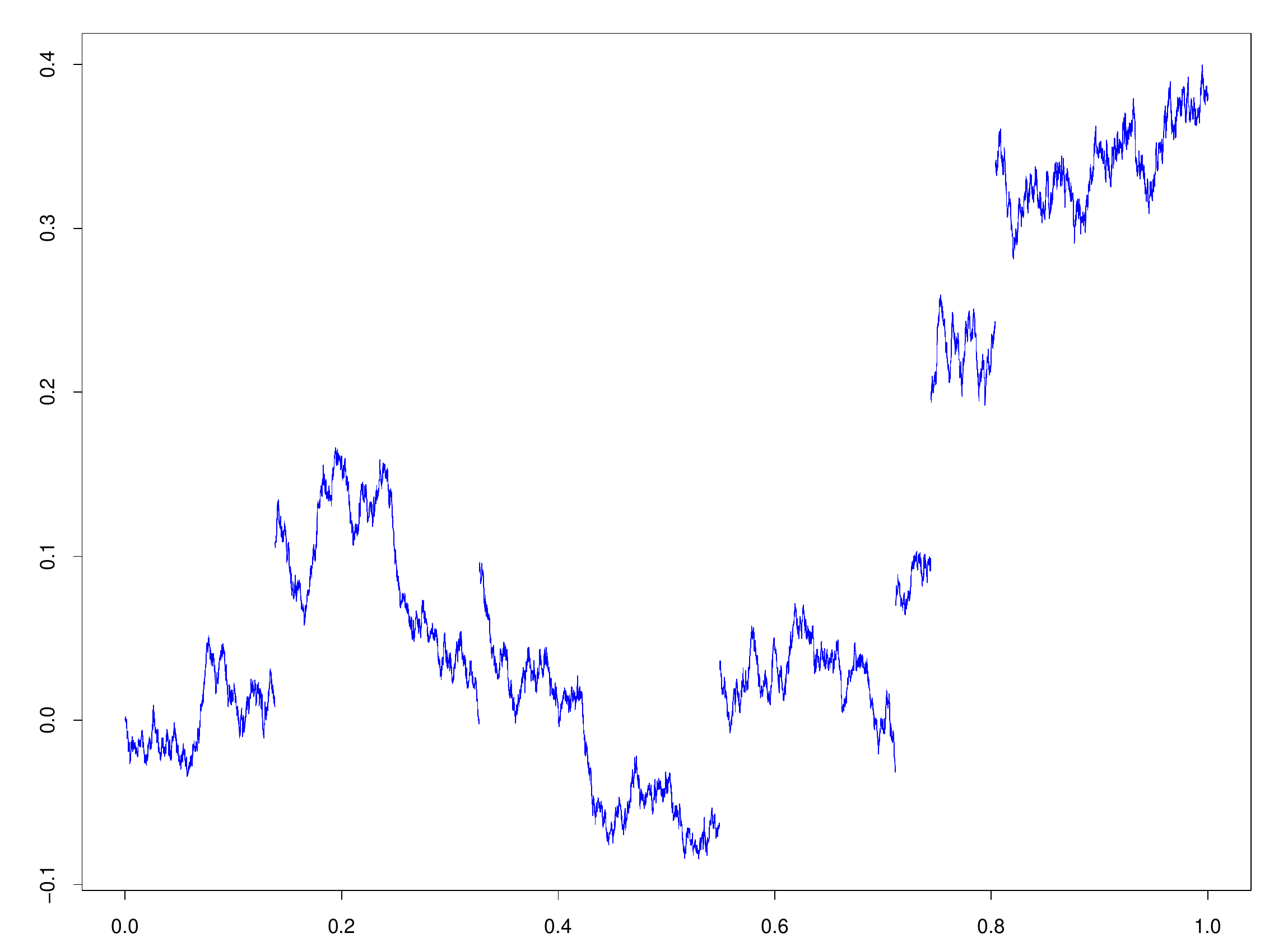}
  \caption{Examples of \lev processes: compound Poisson process (left) and \lev jump-diffusion.}
\end{center}
\end{figure}

\section{`Toy' example: a L\'evy jump-diffusion}

Assume that the process $L=\Lt$ is a \lev jump-diffusion, i.e. a
Brownian motion plus a compensated compound Poisson process. The
paths of this process can be described by
\begin{align}\label{EquationBasis}
L_{t}= b t + \sigma W_{t}+ \Big(\sum_{k=1}^{N_{t}} J_{k} - t\lambda\kappa\Big)
\end{align}
where $b\in\R$, $\sigma\in\R_{\geqslant0}$, $W=(W_t)_\ott$ is a
standard Brownian motion, $N=(N_t)_\ott$ is a Poisson process with
parameter $\lambda$ (i.e. $\E[N_t]=\lambda t$) and
$J=(J_k)_{k\geq1}$ is an i.i.d. sequence of random variables with
probability distribution $F$ and $\E[J]=\kappa<\infty$. Hence, $F$
describes the distribution of the jumps, which arrive according to
the Poisson process. All sources of randomness are \textit{mutually
independent}.

It is well known that Brownian motion is a martingale; moreover, the
compensated compound Poisson process is a martingale. Therefore,
$L=\Lt$ is a \textit{martingale} if and only if $b=0$.

The characteristic function of $L_t$ is
\begin{align*}
\E\big[\e^{iu L_{t}}\big]
 &= \E\Big[\exp\big(iu \big(b t + \sigma W_t
   + \sum_{k=1}^{N_{t}} J_{k}-t\lambda\kappa\big)\big)\Big]\\
 &= \exp\big[iub t\big]
    \E\Big[\exp\big(iu\sigma W_t\big)
       \exp\big(iu\big(\sum_{k=1}^{N_{t}} J_{k} - t\lambda\kappa\big)\big)\Big];\\
 \intertext{since all the sources of randomness are independent, we get}
 &= \exp\big[iu b t\big]
    \E\Big[\exp\big(iu \sigma W_t\big)\Big]
    \E\Big[\exp\big(iu\sum_{k=1}^{N_{t}} J_{k}-iut\lambda\kappa\big)\Big];\\
 \intertext{taking into account that
\begin{align*}
\E[\e^{iu\sigma W_t}]=\e^{-\frac{1}{2}\sigma^{2}u^{2}t}, \quad{}
        W_t \sim \text{Normal}(0,t)
\end{align*}
\begin{align*}
\E[\e^{iu\sum_{k=1}^{N_t}J_k}]=\e^{\lambda t(\E[\e^{iuJ}-1])}, \quad{}
        N_t \sim \text{Poisson} (\lambda t)
\end{align*}
    (cf. also Appendix \ref{comp-Poisson}) we get}
 &= \exp\big[ iub t\big]
    \exp\Big[-\frac{1}{2}u^{2}\sigma^{2}t\Big]
    \exp\Big[\lambda t\big(\E[\e^{iu J}-1]-iu\E[J]\big)\Big]\\
 &= \exp\big[ iub t\big]
    \exp\Big[-\frac{1}{2}u^{2}\sigma^{2}t\Big]
    \exp\Big[\lambda t\big(\E[\e^{iu J}-1-iuJ]\big)\Big];\\
\intertext{and because the distribution of $J$ is $F$ we have}
 &= \exp\big[iub t\big]
    \exp\Big[-\frac{1}{2}u^2\sigma^{2}t\Big]
    \exp\Big[\lambda t \int_{\R}\big(\e^{iux}-1-iux\big)F(\dx)\Big].
\end{align*}

Now, since $t$ is a common factor, we re-write the above equation as
\begin{align}\label{lk-ljd}
\E\big[\e^{iu L_{t}}\big] =
 \exp\bigg[t \Big(
      iub - \frac{u^2\sigma^{2}}{2}+
      \int_{\R} (\e^{iux}-1-iux)\lambda F(\ud x)
     \Big)\bigg].
\end{align}

Since the characteristic function of a random variable determines
its distribution, we have a ``characterization" of the distribution
of the random variables underlying the \lev jump-diffusion. We will
soon see that this distribution belongs to the class of
\emph{infinitely divisible distributions} and that equation
(\ref{lk-ljd}) is a special case of the celebrated
\emph{L\'evy-Khintchine formula}.

\begin{rem}
Note that time factorizes out, and the drift, diffusion and jumps
parts are separated; moreover, the jump part factorizes to expected
number of jumps ($\lambda$) and distribution of jump size ($F$). It
is only natural to ask if these features are preserved for
\textit{all} \lev processes. The answer is yes for the first two
questions, but jumps cannot be always separated into a product of
the form $\lambda\times F$.
\end{rem}

\section{Infinitely divisible distributions and the L\'evy-Khintchine formula}

There is a strong interplay between \lev processes and infinitely
divisible distributions. We first define infinitely divisible
distributions and give some examples, and then describe their
relationship to \lev processes.

Let $X$ be a real valued random variable, denote its characteristic
function by $\varphi_X$ and its law by $P_X$, hence
$\varphi_X(u)=\int_{\R}\e^{iux}P_X(\ud x)$. Let $\mu*\nu$ denote the
convolution of the measures $\mu$ and $\nu$, i.e.
$(\mu*\nu)(A)=\int_\R\nu(A-x)\mu(\dx)$.

\begin{defin}
The law $P_X$ of a random variable $X$ is \emph{infinitely
divisible}, if for all $n\in\mathbb{N}$ there exist i.i.d. random
variables $X^{(1/n)}_1,\ldots,X^{(1/n)}_n$ such that
\begin{align}
X \eqlaw X^{(1/n)}_1+ \ldots +X^{(1/n)}_n.
\end{align}
Equivalently, the law $P_X$ of a random variable $X$ is
\emph{infinitely divisible} if for all $n\in\mathbb{N}$ there exists
another law $P_{X^{(1/n)}}$ of a random variable $X^{(1/n)}$ such
that
\begin{align}
 P_X = \underbrace{P_{X^{(1/n)}}*\ldots*P_{X^{(1/n)}}}_{n\; \text{times}}.
\end{align}
\end{defin}

Alternatively, we can characterize an infinitely divisible random
variable using its characteristic function.

\begin{character}
The law of a random variable $X$ is \emph{infinitely divisible}, if
for all $n\in\mathbb{N}$, there exists a random variable
$X^{(1/n)}$, such that
\begin{align}
\varphi_X(u)=\Big(\varphi_{X^{(1/n)}}(u)\Big)^n.
\end{align}
\end{character}

\begin{example}[Normal distribution]
Using the characterization above, we can easily deduce that the
Normal distribution is infinitely divisible. Let
$X\sim\text{Normal}(\mu,\sigma^2)$, then we have
\begin{align*}
\varphi_X(u) &= \exp \Big[iu\mu- \frac{1}{2}u^2\sigma^{2}\Big]
              = \exp \Big[n (iu\frac{\mu}{n}- \frac{1}{2}u^2\frac{\sigma^{2}}{n})\Big]\\
             &= \Bigg(\exp \Big[iu\frac{\mu}{n}- \frac{1}{2}u^2\frac{\sigma^{2}}{n}\Big]\Bigg)^n\\
             &= \Big(\varphi_{X^{(1/n)}}(u)\Big)^n,
\end{align*}
where
$X^{(1/n)}\sim\text{Normal}(\frac{\mu}{n},\frac{\sigma^2}{n})$.
\end{example}

\begin{example}[Poisson distribution]
Following the same procedure, we can easily conclude that the
Poisson distribution is also infinitely divisible. Let
$X\sim\text{Poisson}(\lambda)$, then we have
\begin{align*}
\varphi_X(u) &= \exp \Big[ \lambda(\e^{iu}-1) \Big]
              = \Bigg( \exp \Big[ \frac{\lambda}{n}(\e^{iu}-1) \Big]\Bigg)^n\\
             &= \Big(\varphi_{X^{(1/n)}}(u)\Big)^n,
\end{align*}
where $X^{(1/n)}\sim\text{Poisson}(\frac{\lambda}{n})$.
\end{example}

\begin{rem}
Other examples of infinitely divisible distributions are the
compound Poisson distribution, the exponential, the
$\Gamma$-distribution, the geometric, the negative binomial, the
Cauchy distribution and the strictly stable distribution.
Counter-examples are the uniform and binomial distributions.
\end{rem}

The next theorem provides a complete characterization of random
variables with infinitely divisible distributions via their
characteristic functions; this is the celebrated
\emph{L\'evy-Khintchine formula}. We will use the following
preparatory result (cf. \citeNP[Lemma 7.8]{Sato99}).

\begin{lem}\label{limid}
If $(P_k)_{k\ge0}$ is a sequence of infinitely divisible laws and
$P_k\rightarrow P$, then $P$ is also infinitely divisible.
\end{lem}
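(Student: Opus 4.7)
My plan is to show, for each fixed $n \in \mathbb{N}$, that $P$ admits an $n$-th convolution root. Set $\psi_k := \varphi_{P_k^{(1/n)}}$, so that $\psi_k^n = \varphi_{P_k}$ by the infinite divisibility of $P_k$. The strategy has three steps: (i) prove that $\{P_k^{(1/n)}\}_{k \ge 0}$ is tight; (ii) extract, via Prokhorov's theorem, a weakly convergent subsequence with limit $Q_n$; and (iii) pass to the limit in $\psi_k^n = \varphi_{P_k}$ to conclude $\varphi_{Q_n}^n = \varphi_P$, which gives $P = Q_n^{\ast n}$. Since $n$ is arbitrary, this establishes infinite divisibility of $P$.

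Once (i) is in hand, (ii) and (iii) are routine. Prokhorov produces a subsequence with $P_{k_j}^{(1/n)} \to Q_n$ weakly; L\'evy's continuity theorem yields $\psi_{k_j}(u) \to \varphi_{Q_n}(u)$ pointwise, and combining with the hypothesis $\varphi_{P_{k_j}}(u) \to \varphi_P(u)$ gives
\[
\varphi_{Q_n}(u)^n \;=\; \lim_j \psi_{k_j}(u)^n \;=\; \lim_j \varphi_{P_{k_j}}(u) \;=\; \varphi_P(u).
\]

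The main obstacle is step (i). I would establish tightness via the equivalent criterion that $\{\psi_k\}$ is equicontinuous at $0$, which implies tightness through the standard estimate $\mu(\{|x|>2/a\}) \le a^{-1}\int_{-a}^a (1-\mathrm{Re}\,\varphi_\mu(u))\,\mathrm{d}u$. The hypothesis gives $\varphi_{P_k} \to \varphi_P$ locally uniformly, with $\varphi_P$ continuous and $\varphi_P(0)=1$. Pick $u_0 > 0$ so that $|\varphi_P - 1| \le 1/4$ on $[-u_0,u_0]$; for all $k$ beyond some threshold $K$ we then have $|\varphi_{P_k} - 1| \le 1/2$ on this interval. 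On the disk $\{|z-1|\le 1/2\}$, the principal $n$-th root is a Lipschitz map fixing $1$, so $|\varphi_{P_k}(u)^{1/n} - 1|$ is controlled uniformly in $k \ge K$ by $|\varphi_{P_k}(u) - 1|$. Since $\psi_k$ is a continuous characteristic function with $\psi_k(0)=1$, it agrees by continuity with the principal $n$-th root of $\varphi_{P_k}$ in a neighborhood of the origin, so the bound transfers to $\psi_k$. Together with the individual continuity at $0$ of the finitely many remaining $\psi_k$ ($k<K$), this yields equicontinuity of $\{\psi_k\}_{k\ge 0}$ at $0$, hence tightness.

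The subtle point is branch selection: a priori each $\psi_k$ could be any of $n$ possible $n$-th roots of $\varphi_{P_k}$, and without a consistent choice the Lipschitz estimate above is meaningless. However, continuity of $\psi_k$ together with $\psi_k(0)=1$ forces it onto the principal branch in some (possibly $k$-dependent) neighborhood of the origin, which is exactly what is needed to transfer equicontinuity at $0$ from $\{\varphi_{P_k}\}$ to $\{\psi_k\}$ and push the argument through.
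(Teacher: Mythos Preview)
The paper does not actually prove this lemma: it is stated with a reference to \cite{Sato99}, Lemma~7.8, and then used in the sketch of the L\'evy--Khintchine theorem. So there is no in-paper proof to compare against.

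Your argument is correct and is essentially the classical one (the same line of reasoning Sato uses). A couple of minor remarks. First, the constant in your tightness inequality is not quite right as written; the standard truncation estimate has a universal constant larger than~$1$ in front (e.g.\ $7$ in many textbooks), but this is immaterial to the argument. Second, your treatment of the branch issue is fine but can be streamlined: since each $P_k$ is infinitely divisible, $\varphi_{P_k}$ never vanishes, so there is a unique continuous logarithm with value $0$ at the origin, and $\psi_k = \exp\!\big(\tfrac{1}{n}\log\varphi_{P_k}\big)$ globally. This makes the identification of $\psi_k$ with the principal $n$-th root automatic on all of $\mathbb{R}$, not just near $0$, and removes the need for the ``possibly $k$-dependent neighborhood'' caveat.
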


\begin{thm}
The law $P_X$ of a random variable $X$ is \emph{infinitely
divisible} if and only if there exists a triplet $(b,c,\nu)$, with
$b\in\R$, $c\in\R_{\geqslant0}$ and a measure satisfying
$\nu(\{0\})=0$ and $\int_{\R}(1\wedge|x|^2)\nu(\ud x)<\infty$, such
that
\begin{align}\label{lk-infdiv}
\E[\e^{iuX}]  
 = \exp\Big[ ibu - \frac{u^2c}{2}
   + \int_{\R}(\e^{iux}-1-iux\1_{\{|x|<1\}})\nu(\ud x) \Big].
\end{align}
\end{thm}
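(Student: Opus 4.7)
The plan is to prove the equivalence in two directions. For the \emph{sufficiency} direction, I would first treat a finite L\'evy measure $\nu$: the integrand $e^{iux}-1-iux\mathbf{1}_{\{|x|<1\}}$ is integrable against any finite measure, and after absorbing the constant $-iu\int x\mathbf{1}_{\{|x|<1\}}\nu(\ud x)$ into a redefined drift, the exponent in \eqref{lk-infdiv} becomes the characteristic exponent of the sum of an independent Gaussian, a constant drift and a compound Poisson with intensity $\nu(\R)$ and jump law $\nu/\nu(\R)$. Each summand is manifestly infinitely divisible, so their independent sum is as well. For the general case, I would approximate $\nu$ by its restrictions $\nu_\varepsilon := \nu|_{\{|x|\ge\varepsilon\}}$, which are finite because $\int (1\wedge|x|^2)\nu(\ud x)<\infty$; the bound $|e^{iux}-1-iux\mathbf{1}_{\{|x|<1\}}|\le C(u)(1\wedge|x|^2)$ lets dominated convergence give pointwise convergence of the characteristic functions, L\'evy's continuity theorem then gives weak convergence of the corresponding laws, and Lemma \ref{limid} transports infinite divisibility through the limit.

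For the \emph{necessity} direction, the idea is to realize every infinitely divisible $\varphi_X$ as a limit of characteristic exponents of compound Poisson laws. First I would show that $\varphi_X$ never vanishes (a standard consequence of infinite divisibility, following from L\'evy's continuity theorem applied to the $n$-th roots $|\varphi_X|^{1/n}$). This allows a continuous logarithm $\psi := \log\varphi_X$ with $\psi(0)=0$ to be defined, and then $\varphi_{X^{(1/n)}} = \exp(\psi/n)$. Using $n(e^{z/n}-1)\to z$ uniformly on compacts, one obtains
\begin{align*}
\psi(u) &= \lim_{n\to\infty} n\bigl(\varphi_{X^{(1/n)}}(u)-1\bigr) \\
        &= \lim_{n\to\infty}\Bigl[iu b_n + \int_{\R}\bigl(e^{iux}-1-iux\mathbf{1}_{\{|x|<1\}}\bigr)\mu_n(\ud x)\Bigr],
\end{align*}
where $\mu_n(\ud x) := n\,P_{X^{(1/n)}}(\ud x)$ and $b_n := n\int_{|x|<1}x\,P_{X^{(1/n)}}(\ud x)$ absorbs the small-jump drift.

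The hard part is to extract the triplet $(b,c,\nu)$ from this limit. The plan is to show that the family $(\mu_n)$ has uniformly bounded mass on every set $\{|x|>\delta\}$, so Helly's selection theorem yields a subsequential vague limit $\nu$ on $\R\setminus\{0\}$ satisfying $\int(1\wedge|x|^2)\nu(\ud x)<\infty$. The leftover mass near zero produces the Gaussian coefficient via $c := \lim_{\delta\downarrow 0}\limsup_n\int_{|x|\le\delta}x^2\mu_n(\ud x)$, while $b_n\to b\in\R$ along a further subsequence. Plugging these limits into the displayed equation yields \eqref{lk-infdiv}; uniqueness of the triplet (read off from the small-$u$ expansion of $\psi$) promotes subsequential convergence to full convergence. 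The technically delicate step, and the one where an introductory exposition typically defers to \citeN{Sato99}, is establishing these uniform tail and truncated-second-moment bounds for $\mu_n$.
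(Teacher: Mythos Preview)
Your sufficiency argument is essentially the paper's own: truncate $\nu$ to $\{|x|\ge\varepsilon\}$, recognize the truncated exponent as that of an independent Gaussian-plus-compound-Poisson (hence infinitely divisible), let $\varepsilon\downarrow0$, and invoke L\'evy's continuity theorem together with Lemma~\ref{limid}. The paper writes this as a single approximating sequence $\varphi_{X_n}$ rather than first isolating the finite-$\nu$ case, and it emphasizes continuity of $\varphi_X$ at $0$ (needed for L\'evy's continuity theorem) via the same Taylor bound $|e^{iux}-1-iux|\le \tfrac12 u^2x^2$ that underlies your dominated-convergence step; these are cosmetic differences.

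Where you go further is the necessity direction: the paper gives no argument at all for ``only if'' and simply defers to \citeN[Theorem 8.1]{Sato99}. Your outline---non-vanishing of $\varphi_X$, continuous logarithm, the compound-Poisson approximation $\psi(u)=\lim_n n(\varphi_{X^{(1/n)}}(u)-1)$, and extraction of $(b,c,\nu)$ via vague compactness of the measures $\mu_n=n\,P_{X^{(1/n)}}$---is the standard route taken in Sato (and Kallenberg), and your identification of the tightness/second-moment bounds on $\mu_n$ as the genuinely technical step is accurate. So your proposal is correct; it matches the paper on the half it proves and supplies a sound sketch for the half the paper omits.
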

\begin{proof}[Sketch of Proof]
Here we describe the proof of the ``if" part, for the full proof see
Theorem 8.1 in \citeN{Sato99}. Let
$(\varepsilon_n)_{n\in\mathbb{N}}$ be a sequence in $\R$, monotonic
and decreasing to zero. Define for all $u\in\R$ and $n\in\mathbb{N}$
\begin{align*}
\varphi_{X_n}(u) =
 \exp\Big[iu\Big(b-\!\!\int_{\varepsilon_n<|x|\le1}\!\!x\nu(\ud x)\Big)
          -\frac{u^2c}{2}+ \int_{|x|>\varepsilon_n}(\e^{iux}-1)\nu(\dx)\Big].
\end{align*}
Each $\varphi_{X_n}$ is the convolution of a normal and a compound
Poisson distribution, hence $\varphi_{X_n}$ is the characteristic
function of an infinitely divisible probability measure $P_{X_n}$.
We clearly have that
\begin{align*}
 \lim_{n\rightarrow\infty}\varphi_{X_n}(u)=\varphi_{X}(u);
\end{align*}
then, by \lev's continuity theorem and Lemma \ref{limid},
$\varphi_{X}$ is the characteristic function of an infinitely
divisible law, provided that $\varphi_{X}$ is continuous at $0$.

Now, continuity of $\varphi_{X}$ at $0$ boils down to the continuity
of the integral term, i.e.
\begin{align*}
\psi_\nu(u)
 &= \int_{\R}(\e^{iux}-1-iux\1_{\{|x|<1\}})\nu(\ud x)\\
 &= \int_{\{|x|\le1\}}(\e^{iux}-1-iux)\nu(\ud x) + \int_{\{|x|>1\}}(\e^{iux}-1)\nu(\ud x).
\end{align*}
Using Taylor's expansion, the Cauchy--Schwarz inequality, the
definition of the \lev measure and dominated convergence, we get
\begin{align*}
|\psi_\nu(u)|
 &\le \frac12\int_{\{|x|\le1\}}|u^2x^2|\nu(\ud x) + \int_{\{|x|>1\}}|\e^{iux}-1|\nu(\ud x)\\
 &\le \frac{|u|^2}2\int_{\{|x|\le1\}}|x^2|\nu(\ud x) + \int_{\{|x|>1\}}|\e^{iux}-1|\nu(\ud x)\\
 & \longrightarrow0 \quad \text{as} \quad u\rightarrow0. \qedhere
\end{align*}
\end{proof}

The triplet ($b,c,\nu$) is called the \emph{L\'evy} or
\emph{characteristic triplet} and the exponent in \eqref{lk-infdiv}
\begin{align}\label{levy-exponent}
\psi(u)= iub - \frac{u^2c}{2} +
         \int_{\R}(\e^{iux}-1-iux\1_{\{|x|<1\}})\nu(\ud x)
\end{align}
is called the \emph{L\'evy} or \emph{characteristic exponent}.
Moreover, $b\in\R$ is called the \emph{drift term}, $c\in
\R_{\geqslant0}$ the \emph{Gaussian} or \emph{diffusion coefficient}
and $\nu$ the \emph{L\'evy measure}.

\begin{rem}
Comparing equations \eqref{lk-ljd} and \eqref{lk-infdiv}, we can
immediately deduce that the random variable $L_t$ of the \lev
jump-diffusion is infinitely divisible with \lev triplet $b=b\cdot
t,c=\sigma^2\cdot t$ and $\nu=(\lambda F)\cdot t$.
\end{rem}

Now, consider a \lev process $L=\Lt$; for any $n\in\mathbb N$ and
any $0<t\leq T$ we trivially have that
\begin{align}\label{simple}
L_t = L_{\frac{t}{n}}+(L_{\frac{2t}{n}}-L_{\frac{t}{n}})+ \ldots
+(L_{t}-L_{\frac{(n-1)t}{n}}).
\end{align}
The \emph{stationarity} and \emph{independence} of the increments
yield that $(L_{\frac{tk}{n}}-L_{\frac{t(k-1)}{n}})_{k\ge1}$ is an
i.i.d. sequence of random variables, hence we can conclude that the
\emph{random variable} $L_t$ is \emph{infinitely divisible}.

\begin{thm}
For every \lev process \prozess[L], we have that
\begin{align}
\E[\e^{iuL_t}]
 &= \e^{t\psi(u)}\\\nonumber
 &= \exp\Big[ t \big( ibu - \frac{u^2c}{2}
             + \int_{\R}(\e^{iux}-1-iux1_{\{|x|<1\}})\nu(\ud x)\big)\Big]
\end{align}
where $\psi(u)$ is the characteristic exponent of $L_1$, a random
variable with an infinitely divisible distribution.
\end{thm}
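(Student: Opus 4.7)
The plan is to first show that $L_t$ is infinitely divisible for every fixed $t$, then exploit stationary and independent increments to obtain a multiplicative Cauchy equation for the characteristic function of $L_t$ as a function of $t$, and finally use stochastic continuity to solve it. The Lévy--Khintchine formula then furnishes the explicit form of the exponent.

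For the first step, note that the identity \eqref{simple}, together with (L1) and (L2), expresses $L_t$ as a sum of $n$ i.i.d.\ copies of $L_{t/n}$. Hence $L_t$ is infinitely divisible, and by the Lévy--Khintchine theorem its characteristic function admits a representation of the form \eqref{lk-infdiv}. It remains to identify the exponent with $t\psi(u)$ where $\psi$ is the characteristic exponent of $L_1$.

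For the second step, fix $u\in\R$ and set $\phi_u(t):=\E[\e^{iuL_t}]$. Writing $L_{t+s}=L_s+(L_{t+s}-L_s)$ and using independence of $L_{t+s}-L_s$ from $\F_s$ together with the stationarity identity $L_{t+s}-L_s\eqlaw L_t$, I obtain the functional equation
\begin{align*}
\phi_u(t+s)=\phi_u(t)\phi_u(s),\qquad \phi_u(0)=1.
\end{align*}
Stochastic continuity (L3) implies $\e^{iuL_s}\to\e^{iuL_t}$ in probability as $s\to t$, and dominated convergence (the integrand being bounded by $1$) yields continuity of $t\mapsto\phi_u(t)$.

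For the third step, I use that the characteristic function of an infinitely divisible law has no zeros, so $\phi_u(t)\neq 0$ for every $t$. Since $\phi_u$ is continuous with $\phi_u(0)=1$ and never vanishes, it lifts to a unique continuous $\psi_u:[0,T]\to\C$ with $\psi_u(0)=0$ and $\phi_u(t)=\e^{\psi_u(t)}$. The multiplicative equation then transfers to the additive Cauchy equation $\psi_u(t+s)=\psi_u(t)+\psi_u(s)$ (the a priori $2\pi i\Z$ ambiguity is ruled out by continuity and the boundary value at $0$). The only continuous solution is the linear one, $\psi_u(t)=t\,\psi_u(1)$. Setting $\psi(u):=\psi_u(1)$ and applying the Lévy--Khintchine representation to the infinitely divisible variable $L_1$ completes the proof.

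The delicate step is the passage from the multiplicative to the additive equation, i.e.\ the choice of a continuous complex logarithm of $\phi_u$. The non-vanishing of $\phi_u$, which comes from infinite divisibility of every $L_t$, is exactly what makes this lifting possible; without it one would be forced to work rationally first (via \eqref{simple} for rational $t$) and then extend by density using stochastic continuity, which is doable but less clean.
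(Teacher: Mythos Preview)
Your proposal is correct and follows essentially the same route as the paper's sketch: derive the multiplicative Cauchy equation for $\phi_u(t)=\E[\e^{iuL_t}]$ from stationary independent increments, use stochastic continuity for continuity in $t$, and conclude $\phi_u(t)=\e^{t\psi(u)}$ with $\psi$ the L\'evy exponent of $L_1$. Your version is simply more explicit about the step the paper glosses over, namely the existence of a continuous complex logarithm via the non-vanishing of infinitely divisible characteristic functions.
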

\begin{proof}[Sketch of Proof]
Define the function $\phi_u(t)=\varphi_{L_t}(u)$, then we have
\begin{align}\label{CFE}
\phi_u(t+s)
 &= \E[\e^{iuL_{t+s}}]
  = \E[\e^{iu(L_{t+s}-L_s)}\e^{iuL_s}]\\\nonumber
 &= \E[\e^{iu(L_{t+s}-L_s)}]\E[\e^{iuL_s}]
  = \phi_u(t)\phi_u(s).
\end{align}
Now, $\phi_u(0)=1$ and the map $t\mapsto\phi_u(t)$ is continuous (by
stochastic continuity). However, the unique continuous solution of
the Cauchy functional equation \eqref{CFE} is
\begin{align}
\phi_u(t)=\e^{t \vartheta(u)}, \qquad{ \mathrm{where} } \qquad
 \vartheta:\R\rightarrow\mathbb{C}.
\end{align}
Since $L_1$ is an infinitely divisible random variable, the
statement follows.
\end{proof}

We have seen so far, that every \lev process can be associated with
the law of an infinitely divisible distribution. The opposite, i.e.
that given any random variable $X$, whose law is infinitely
divisible, we can construct a \lev process $L=\Lt$ such that
$\mathcal{L}(L_1):=\mathcal{L}(X)$, is also true. This will be the
subject of the L\'evy-It\^o decomposition. We prepare this result
with an analysis of the jumps of a \lev \proc and the introduction
of Poisson random measures.

\section{Analysis of jumps and Poisson random measures}

The jump process \prozess[\Delta L] associated to the \lev process
$L$ is defined, for each $\ott$, via
\begin{align*}
\Delta L_t=L_t-L_{t-},
\end{align*}
where $L_{t-} = \lim_{s\uparrow t}L_s$.
The condition of stochastic continuity of a \lev
process yields immediately that for any \lev process $L$ and any
\emph{fixed} $t>0$, then $\Delta L_t=0$ a.s.; hence, a \lev \proc
has no \emph{fixed times of discontinuity}.

In general, the sum of the jumps of a \lev process does not
converge, in other words it is possible that
\begin{align*}
\sum_{s\le t}|\Delta L_s|=\infty \quad \text{a.s.}
\end{align*}
but we always have that
\begin{align*}
\sum_{s\le t}|\Delta L_s|^2<\infty \quad \text{a.s.}
\end{align*}
which allows us to handle \lev processes by martingale techniques.

A convenient tool for analyzing the jumps of a \lev process is the
\textit{random measure of jumps} of the process. Consider a set
$A\in\mathcal{B}(\R\backslash\{0\})$ such that $0\notin\overline{A}$
and let $0\le t\le T$; define the random measure of the jumps of the
process $L$ by
\begin{align}\label{rmass}
 \mu^L(\omega;t,A)
  &= \# \{0\le s\le t; \Delta L_s(\omega)\in A\}\\
  &= \sum_{s\le t}1_A(\Delta L_s(\omega));\nonumber
\end{align}
hence, the measure $\mu^L(\omega;t,A)$ \textit{counts the jumps of
the process $L$ of size in $A$ up to time $t$}. Now, we can check
that $\mu^L$ has the following properties:
\begin{align*}
\mu^L(t,A)-\mu^L(s,A)\in\sigma(\{L_u-L_v; s\le v<u\le t\})
\end{align*}
hence $\mu^L(t,A)-\mu^L(s,A)$ is independent of $\F_s$, i.e.
$\mu^L(\cdot,A)$ has \textit{independent increments}. Moreover,
$\mu^L(t,A)-\mu^L(s,A)$ equals the number of jumps of $L_{s+u}-L_s$
in $A$ for $0\le u\le t-s$; hence, by the stationarity of the
increments of $L$, we conclude:
\begin{align*}
\mathcal{L}(\mu^L(t,A)-\mu^L(s,A))=\mathcal{L}(\mu^L(t-s,A))
\end{align*}
i.e. $\mu^L(\cdot,A)$ has \textit{stationary increments}.

Hence, $\mu^L(\cdot,A)$ is a \textit{Poisson process} and $\mu^L$ is
a \textit{Poisson random measure}. The \textit{intensity} of this
Poisson process is $\nu(A)=\E[\mu^L(1,A)]$.

\begin{thm}
The set function $A\mapsto\mu^L(\omega;t,A)$ defines a
$\sigma$-finite measure on $\R\backslash\{0\}$ for each
$(\omega,t)$. The set function $\nu(A)=\E[\mu^L(1,A)]$ defines a
$\sigma$-finite measure on $\R\backslash\{0\}$.
\end{thm}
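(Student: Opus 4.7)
The plan is to treat the two claims separately, relying on pathwise properties of \cad functions for the first and on the Poisson-process identification of $\mu^L(\cdot,A)$ carried out in the paragraph preceding the theorem for the second. Both arguments reduce in the end to the fact that jumps of absolute size $>1/n$ are few in number on any bounded interval.

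For the first claim, I would take the explicit representation $\mu^L(\omega;t,A)=\sum_{s\le t}\1_A(\Delta L_s(\omega))$ from \eqref{rmass} as the starting point, so that $\mu^L(\omega;t,\cdot)$ is a counting measure supported on the jump locations of the path $s\mapsto L_s(\omega)$. Then $\mu^L(\omega;t,\emptyset)=0$ is immediate, and countable additivity over disjoint Borel sets $A_1,A_2,\ldots\subset\R\setminus\{0\}$ follows by interchanging the two non-negative sums $\sum_s$ and $\sum_i$ (Tonelli for counting measures). For $\sigma$-finiteness I would invoke the fundamental pathwise fact that a \cad function on $[0,t]$ admits only finitely many jumps of absolute size exceeding any fixed $\varepsilon>0$, since otherwise the jump times would accumulate, contradicting the existence of one-sided limits at the accumulation point. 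Setting $B_n=\{x\in\R:|x|>1/n\}$ yields the cover $\R\setminus\{0\}=\bigcup_{n\ge1}B_n$ with $\mu^L(\omega;t,B_n)<\infty$ for every $(\omega,t)$.

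For the second claim, $\nu$ inherits its measure structure from $\mu^L(1,\cdot)$: $\nu(\emptyset)=\E[0]=0$, and $\sigma$-additivity follows from monotone convergence applied to the non-negative integrands, since $\mu^L(1,\cdot)$ is almost surely a measure by the first claim. For $\sigma$-finiteness I would invoke the identification of $\mu^L(\cdot,B_n)$ as a Poisson process carried out just before the theorem. Combined with the pathwise finiteness $\mu^L(1,B_n)<\infty$ a.s.\ obtained in the first part, the law of $\mu^L(1,B_n)$ is then a genuine Poisson distribution with finite parameter, whence $\nu(B_n)=\E[\mu^L(1,B_n)]<\infty$.

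The delicate point is the very last step. Independent, stationary, integer-valued, non-decreasing increments do not by themselves exclude an infinite mean; what rules this out here is the separate pathwise input that $\mu^L(1,B_n)$ is almost surely finite. Without this additional ingredient, one could only conclude that the increments of $\mu^L(\cdot,B_n)$ follow some non-negative integer-valued infinitely divisible law, not necessarily a Poisson distribution with finite intensity, and the finiteness of $\nu$ on the exhausting sets $B_n$ would not be guaranteed.
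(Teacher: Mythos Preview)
Your proposal is correct and follows the same underlying idea as the paper---namely, that $\mu^L(\omega;t,\cdot)$ is a counting measure and that $\nu$ is obtained from it by integrating over $\omega$. The paper's proof is in fact much terser than yours: it simply records these two observations and does not spell out the $\sigma$-finiteness at all. Your argument supplies precisely the details the paper omits, using the \cad property to bound the number of jumps of size larger than $1/n$ for the first claim, and invoking the Poisson identification together with a.s.\ finiteness for the second. Your closing remark about the ``delicate point'' is well taken and goes beyond anything the paper addresses.
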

\begin{proof}
The set function $A\mapsto\mu^L(\omega;t,A)$ is simply a counting
measure on $\mathcal{B}(\R\backslash\{0\})$; hence,
\begin{align*}
\E[\mu^L(t,A)] = \int \mu^L(\omega;t,A)  \ud P(\omega)
\end{align*}
is a Borel measure on $\mathcal{B}(\R\backslash\{0\})$.
\end{proof}

\begin{defin}
The measure $\nu$ defined by
\begin{align*}
\nu(A)
 =\E[\mu^L(1,A)]
 =\E\big[\sum_{s\le1}1_A(\Delta L_s(\omega))\big]
\end{align*}
is the \textit{\lev measure} of the \lev\proc $L$.
\end{defin}

Now, using that $\mu^L(t,A)$ is a counting measure we can define an
integral with respect to the Poisson random measure $\mu^L$.
Consider a set $A\in\mathcal{B}(\R\backslash\{0\})$ such that
$0\notin\overline{A}$ and a function $f:\R\rightarrow\R$, Borel
measurable and finite on $A$. Then, the integral with respect to a
Poisson random measure is defined as follows:
\begin{align}\label{point}
\int_A f(x)\mu^L(\omega; t,\dx)
 = \sum_{s\le t} f(\Delta L_s)1_A(\Delta L_s(\omega)).
\end{align}
Note that each $\int_A f(x)\mu^L(t,\dx)$ is a real-valued random
variable and generates a c\`{a}dl\`{a}g stochastic process. We will
denote the stochastic process by $\int_0^\cdot\int_A
f(x)\mu^L(\ds,\dx)=(\int_0^t\int_A f(x)\mu^L(\ds,\dx))_\ott$.

\begin{thm}\label{poi-A-ch}
Consider a set $A\in\mathcal{B}(\R\backslash\{0\})$ with
$0\notin\overline{A}$ and a function $f:\R\rightarrow\R$, Borel
measurable and finite on $A$.
\begin{enumerate}[\textbf{A.}]
\item The process $(\int_0^t\int_A f(x)\mu^L(\ds,\dx))_{\ott}$
      is a compound Poisson process with characteristic function
\begin{align}\label{poi-cf}
 \E\Big[\exp\Big(iu\int\nolimits_0^t\!\int\nolimits_A f(x)\mu^L(\ds,\dx)\Big)\Big]
  = \exp\Big(t\int\nolimits_A(\e^{iuf(x)}-1)\nu(\dx)\Big).\vspace{-.5em}
\end{align}
\end{enumerate}
\begin{enumerate}[\textbf{B.}]
\item If $f\in L^1(A)$, then
\begin{align}
\E\Big[\int\nolimits_0^t\!\int\nolimits_A f(x)\mu^L(\ds,\dx)\Big]
  =t\int\nolimits_A f(x)\nu(\dx).
\end{align}
\end{enumerate}
\begin{enumerate}[\textbf{C.}]
\item If $f\in L^2(A)$, then
\begin{align}
\mathrm{Var}\Big(\Big|\int\nolimits_0^t\!\int\nolimits_A f(x)\mu^L(\ds,\dx)\Big|\Big)
  =t\int\nolimits_A |f(x)|^2\nu(\dx).
\end{align}
\end{enumerate}
\end{thm}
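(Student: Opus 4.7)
The plan is to reduce $Y_t := \int_0^t \int_A f(x)\mu^L(\ds,\dx)$ to a compound Poisson sum and then invoke the standard characteristic-function and moment formulas that were already derived in Section~3. Since $0\notin\overline{A}$, the definition \eqref{point} gives
\[
Y_t = \sum_{s\le t} f(\Delta L_s)\1_A(\Delta L_s),
\]
which is a.s.\ a \emph{finite} sum. Enumerating the jump times of $L$ falling in $A$ as $T_1<T_2<\cdots$ and setting $\xi_k := \Delta L_{T_k}$, we get $Y_t = \sum_{k=1}^{N_t} f(\xi_k)$, where $N_t := \mu^L(t,A)$ is, by the discussion preceding the theorem, a Poisson process with intensity $\nu(A)$.

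The crucial step is then to show that $(\xi_k)_{k\ge1}$ is an i.i.d.\ sequence with common law $\nu(\cdot\cap A)/\nu(A)$, independent of $N$. The natural route is the ``marking theorem'' for Poisson random measures: for any Borel partition $A = A_1 \sqcup \cdots \sqcup A_m$ with each $A_j$ bounded away from $0$, one uses the independent and stationary increments of $\mu^L(\cdot,\cdot)$ established earlier to show that $(\mu^L(\cdot,A_j))_{j=1}^m$ are independent Poisson processes with intensities $\nu(A_j)$. This forces the type of the $k$-th mark $\xi_k$ (i.e.\ the unique $j$ with $\xi_k\in A_j$) to be multinomial with weights $\nu(A_j)/\nu(A)$, independently across $k$ and independently of $N$; refining the partition then identifies the law of $\xi_k$ with $\nu(\cdot\cap A)/\nu(A)$.

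Once the compound-Poisson representation is in hand, Part~A is immediate: a compound Poisson sum $\sum_{k=1}^{N_t} Z_k$ with $N_t\sim\text{Poisson}(\lambda t)$ and i.i.d.\ $Z_k$ has characteristic function $\exp(\lambda t(\E[\e^{iuZ}]-1))$ --- exactly the computation already performed in Section~3 --- and plugging in $\lambda=\nu(A)$, $Z_k = f(\xi_k)$ gives
\[
\nu(A)\,t\,\bigl(\E[\e^{iuf(\xi_1)}]-1\bigr) = t\int_A(\e^{iuf(x)}-1)\nu(\dx).
\]
Parts~B and~C then follow either by differentiating this characteristic function at $u=0$ or, more directly, from the Wald-type identities $\E[Y_t] = \E[N_t]\,\E[f(\xi_1)]$ and $\mathrm{Var}(Y_t) = \E[N_t]\,\E[f(\xi_1)^2]$; the hypotheses $f\in L^1(A)$ and $f\in L^2(A)$ are used precisely to guarantee finite first, respectively second, moments of $f(\xi_1)$ under the probability measure $\nu(\cdot\cap A)/\nu(A)$.

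The main obstacle is the marking step above: identifying $Y_t$ as a genuine compound Poisson process, with jumps i.i.d.\ and independent of the counting process, is what requires real work; everything else is the same calculation that produced \eqref{lk-ljd}. An equivalent strategy that sidesteps marks altogether is to prove Part~A first for simple $f = \sum_i c_i \1_{A_i}$ (with $A_i\subset A$ disjoint), where $Y_t = \sum_i c_i\,\mu^L(t,A_i)$ is a linear combination of \emph{independent} Poisson processes by the independent-increment property of $\mu^L$, and then extend to general measurable $f$ by monotone/dominated convergence; Parts~B and~C are recovered afterwards by differentiating the characteristic function at zero.
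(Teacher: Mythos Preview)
Your proposal is correct, and in fact your closing paragraph already contains the paper's own proof: the paper gives only a one-line sketch, namely ``start with simple functions and pass to positive measurable functions, then take limits and use dominated convergence,'' referring to Applebaum, Theorem 2.3.8. So your ``equivalent strategy that sidesteps marks altogether'' is exactly what the paper does.

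Your primary approach via the marking theorem is a genuinely different route. The paper (following Applebaum) builds the characteristic function directly by approximation: for simple $f=\sum_i c_i\1_{A_i}$ the integral is a linear combination of independent Poisson processes, the exponent factors, and one passes to the limit. Your approach instead first upgrades the Poisson random measure to a marked Poisson process --- showing the jump sizes $\xi_k$ are i.i.d.\ with law $\nu(\cdot\cap A)/\nu(A)$ and independent of $N$ --- and then reads off the compound Poisson structure in one shot. The marking route is conceptually cleaner and makes Parts B and C immediate via Wald's identities, but it front-loads the work into proving the marking theorem (which itself typically goes through a partition/independence argument very close to the simple-function step). The simple-function route is more self-contained and avoids stating the marking theorem separately, at the cost of making the compound Poisson interpretation of $Y_t$ less transparent until the very end. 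Either way, the substantive input is the same: the independence of the counts $\mu^L(\cdot,A_j)$ over disjoint $A_j$, which you correctly identify as the crux.
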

\begin{proof}[Sketch of Proof]
The structure of the proof is to start with simple functions and
pass to positive measurable functions, then take limits and use
dominated convergence; cf. Theorem 2.3.8 in \citeN{Applebaum04}.
\end{proof}

\section{The L\'evy-It\^o decomposition}

\begin{thm}
Consider a triplet $(b,c,\nu)$ where $b\in\R$, $c\in\R_{\geqslant0}$
and $\nu$ is a measure satisfying $\nu(\{0\})=0$ and
$\int_{\R}(1\wedge|x|^2)\nu(\ud x)<\infty$. Then, there exists a
probability space $(\Omega, \F, P)$ on which four independent \lev
processes exist, $L^{(1)}$, $L^{(2)}$, $L^{(3)}$ and $L^{(4)}$,
where $L^{(1)}$ is a constant drift, $L^{(2)}$ is a Brownian motion,
$L^{(3)}$ is a compound Poisson process and $L^{(4)}$ is a square
integrable (pure jump) martingale with an a.s. countable number of
jumps of magnitude less than $1$ on each finite time interval.
Taking $L=L^{(1)}+L^{(2)}+L^{(3)}+L^{(4)}$, we have that there
exists a probability space on which a \lev process $L=\Lt$ with
characteristic exponent
\begin{align}\label{exponent}
\psi(u)= iub - \frac{u^2c}{2} +
         \int_{\R}(\e^{iux}-1-iux\1_{\{|x|<1\}})\nu(\ud x)
\end{align}
for all $u\in\R$, is defined.
\end{thm}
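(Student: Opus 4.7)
The plan is to construct each of the four components explicitly on an auxiliary probability space and verify that their sum has the characteristic exponent \eqref{exponent}. Three of the pieces are easy: $L^{(1)}_t = bt$ is deterministic; $L^{(2)}_t = \sqrt{c}\, W_t$ comes from a standard Brownian motion $W$; and the integrability condition $\int_\R (1\wedge|x|^2)\,\nu(\ud x) < \infty$ forces $\lambda := \nu(\{|x|\ge 1\}) < \infty$, so $L^{(3)}$ can be taken to be the compound Poisson process with arrival rate $\lambda$ and jump law $\nu(\ud x)/\lambda$ restricted to $\{|x|\ge 1\}$. By part A of Theorem \ref{poi-A-ch}, $L^{(3)}$ has characteristic exponent $\int_{\{|x|\ge 1\}}(\e^{iux}-1)\nu(\ud x)$; no compensation is needed, since the indicator $\1_{\{|x|<1\}}$ in \eqref{exponent} kills the compensator on the large-jump region.

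The heart of the proof is the construction of $L^{(4)}$, the compensated small-jump part, because $\nu$ may charge $\{|x|<1\}$ with infinite mass and $\sum_{s\le t}\Delta L_s \1_{\{|\Delta L_s|<1\}}$ need not converge. I would use truncation plus compensation: for $0<\epsilon<1$ set
$$L^{(4,\epsilon)}_t = \int_0^t\!\int_{\{\epsilon\le|x|<1\}} x\,\mu^L(\ds,\dx) - t\!\int_{\{\epsilon\le|x|<1\}} x\,\nu(\ud x),$$
which is a compensated compound Poisson process, hence a square-integrable martingale. Part C of Theorem \ref{poi-A-ch} gives the isometry
$$\E\bigl[|L^{(4,\epsilon)}_t - L^{(4,\epsilon')}_t|^2\bigr] = t\!\!\int_{\{\epsilon'\le|x|<\epsilon\}}\!\! x^2\,\nu(\ud x), \qquad 0<\epsilon'<\epsilon,$$
and the right-hand side vanishes as $\epsilon,\epsilon'\downarrow 0$ thanks to $\int_{\{|x|<1\}} x^2\,\nu(\ud x) < \infty$. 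Doob's $L^2$ maximal inequality applied to the martingale $L^{(4,\epsilon)} - L^{(4,\epsilon')}$ upgrades this to uniform-in-$t$ convergence on $[0,T]$, so $L^{(4,\epsilon)}$ converges to a \cad, square-integrable martingale $L^{(4)}$ with independent stationary increments. Passing to the limit in \eqref{poi-cf} with the linear compensator $iut\!\int\! x\,\nu(\ud x)$ subtracted identifies its characteristic exponent as $\int_{\{|x|<1\}}(\e^{iux}-1-iux)\nu(\ud x)$.

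It remains to assemble the pieces. I would realize the four processes on a product space so that they are independent by design: $W$ independent of the Poisson random measure $\mu^L$, while $L^{(3)}$ and $L^{(4)}$ depend only on the restrictions of $\mu^L$ to the disjoint sets $\{|x|\ge 1\}$ and $\{|x|<1\}$, which are themselves independent Poisson random measures. Setting $L = L^{(1)}+L^{(2)}+L^{(3)}+L^{(4)}$ produces a \cad, adapted process with $L_0 = 0$; stationarity, independence of increments, and stochastic continuity are inherited from the summands. By independence $\E[\e^{iuL_t}]$ factors into the product of the four characteristic functions, whose exponents add up to precisely $t\psi(u)$ in \eqref{exponent}. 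The main obstacle is the $L^2$ construction of $L^{(4)}$---establishing \cad regularity of the limit and that the characteristic function of the limit is the limit of the characteristic functions---which is exactly where the small-jump part of the assumption $\int_\R (1\wedge|x|^2)\nu(\ud x)<\infty$ does its work.
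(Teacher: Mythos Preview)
Your proposal is correct and follows essentially the same approach as the paper: the paper itself only cites Sato and Kyprianou for the full proof, but then sketches exactly the four-part decomposition you describe, constructing $L^{(4)}$ as the uniform-on-$[0,T]$ limit of the compensated compound Poisson processes $L^{(4,\epsilon)}$ as $\epsilon\downarrow 0$. Your added details---the $L^2$ Cauchy estimate via part C of Theorem~\ref{poi-A-ch} and Doob's maximal inequality for the passage to a \cad limit---fill in precisely what the paper's sketch leaves implicit.
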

\begin{proof}
See chapter 4 in \citeN{Sato99} or chapter 2 in \citeN{Kyprianou06}.
\end{proof}

The L\'evy-It\^o decomposition is a hard mathematical result to
prove; here, we go through some steps of the proof because it
reveals much about the structure of the paths of a \lev process. We
split the \lev exponent (\ref{exponent}) into four parts
\begin{align*}
\psi=\psi^{(1)}+\psi^{(2)}+\psi^{(3)}+\psi^{(4)}
\end{align*}
where
\begin{align*}
 \psi^{(1)}(u)&=iub,  \qquad \psi^{(2)}(u)=\frac{u^2c}{2},\\
 \psi^{(3)}(u)&=\int_{|x|\geq1}(\e^{iux}-1)\nu(\ud x),\\
 \psi^{(4)}(u)&=\int_{|x|<1}(\e^{iux}-1-iux)\nu(\ud x).
\end{align*}
The first part corresponds to a deterministic linear process (drift)
with parameter $b$, the second one to a Brownian motion with
coefficient $\sqrt{c}$ and the third part corresponds to a compound
Poisson process with arrival rate $\lambda:=\nu(\R\backslash(-1,1))$
and jump magnitude $F(\ud x):=\frac{\nu(\ud
x)}{\nu(\R\setminus(-1,1))}\1_{\{|x|\geq1\}}$.

The last part is the most difficult to handle; let $\Delta L^{(4)}$
denote the jumps of the \lev process $L^{(4)}$, that is $\Delta
L^{(4)}_t=L^{(4)}_t-L^{(4)}_{t-}$, and let $\mu^{L^{(4)}}$ denote
the random measure counting the jumps of $L^{(4)}$. Next, one
constructs a \textit{compensated} compound Poisson process
\begin{align*}
L^{(4,\epsilon)}_t &= \sum_{0\leq s\leq t}
                      \Delta L^{(4)}_s\1_{\{1>|\Delta L^{(4)}_s|>\epsilon\}}
                    - t\Big(\int_{1>|x|>\epsilon}x\nu(\ud x)\Big)\\
                   &= \int_0^t\int_{1>|x|>\epsilon}x\mu^{L^{(4)}}(\ud x, \ud s)
                    - t\Big(\int_{1>|x|>\epsilon}x\nu(\ud x)\Big)
\end{align*}
and shows that the jumps of $L^{(4)}$ form a Poisson process; using
Theorem \ref{poi-A-ch} we get that the characteristic exponent of
$L^{(4,\epsilon)}$ is
\begin{align*}
 \psi^{(4,\epsilon)}(u)&=\int_{\epsilon<|x|<1}(\e^{iux}-1-iux)\nu(\ud x).
\end{align*}
Then, there exists a \textit{\lev process} $L^{(4)}$ which is a
\textit{square integrable martingale}, such that
$L^{(4,\epsilon)}\rightarrow L^{(4)}$ uniformly on $[0,T]$ as
$\epsilon\rightarrow 0+$. Clearly, the \lev exponent of the latter
\lev\proc is $\psi^{(4)}$.

Therefore, we can decompose any \lev process into four independent
\lev processes $L=L^{(1)}+L^{(2)}+L^{(3)}+L^{(4)}$, as follows
\begin{align}\label{levy-ito}
L_t &= bt+\sqrt{c}W_t
     + \int_0^t\int_{|x|\geq1}x\mu^L(\ud s,\ud x)
     + \int_0^t\int_{|x|<1}x(\mu^L-\nu^L)(\ud s,\ud x)
\end{align}
where $\nu^L(\ud s,\ud x)=\nu(\dx)\ds$. Here $L^{(1)}$ is a constant
drift, $L^{(2)}$ a Brownian motion, $L^{(3)}$ a compound Poisson
process and $L^{(4)}$ a pure jump martingale. This result is the
celebrated \emph{L\'evy-It\^o decomposition} of a \lev process.

\section{The L\'evy measure, path and moment properties}

The \lev measure $\nu$ is a measure on $\R$ that satisfies
\begin{eqnarray}\label{levy-measure}
  \nu(\{0\})=0
  &\qquad  \mathrm{and} &\qquad
  \int_{\R}(1\wedge|x|^2)\nu(\dx)<\infty.
\end{eqnarray}
Intuitively speaking, the \lev measure describes \emph{the expected
number of jumps of a certain height in a time interval of length 1}.
The \lev measure has no mass at the origin, while singularities
(i.e. infinitely many jumps) can occur around the origin (i.e. small
jumps). Moreover, the mass away from the origin is bounded (i.e.
only a finite number of big jumps can occur).

Recall the example of the \lev jump-diffusion; the \lev measure is
$\nu(\ud x)=\lambda\times F(\ud x)$; from that we can deduce that
the expected number of jumps is $\lambda$ and the jump size is
distributed according to $F$.

More generally, if $\nu$ is a finite measure, i.e.
$\lambda:=\nu(\R)=\int_{\R}\nu(\dx)<\infty$, then we can define
$F(\dx):=\frac{\nu(\dx)}{\lambda}$, which is a probability measure.
Thus, $\lambda$ is the expected number of jumps and $F(\dx)$ the
distribution of the jump size $x$. If $\nu(\R)=\infty$, then an
infinite number of (small) jumps is expected.

\begin{figure}
 \begin{center}
   \includegraphics[width=6.05cm,keepaspectratio=true]{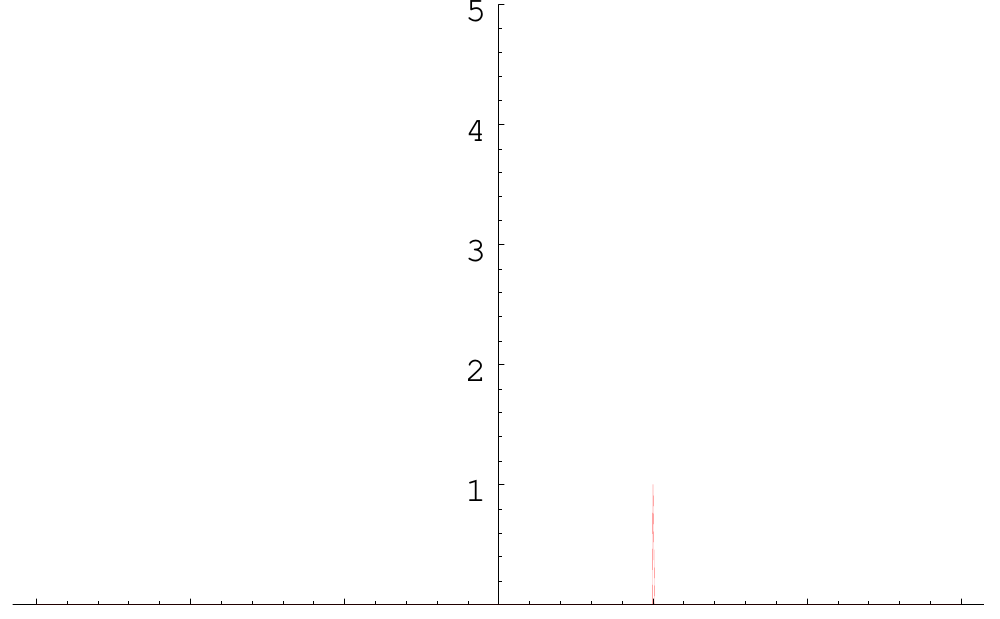}
   \includegraphics[width=5.87cm,keepaspectratio=true]{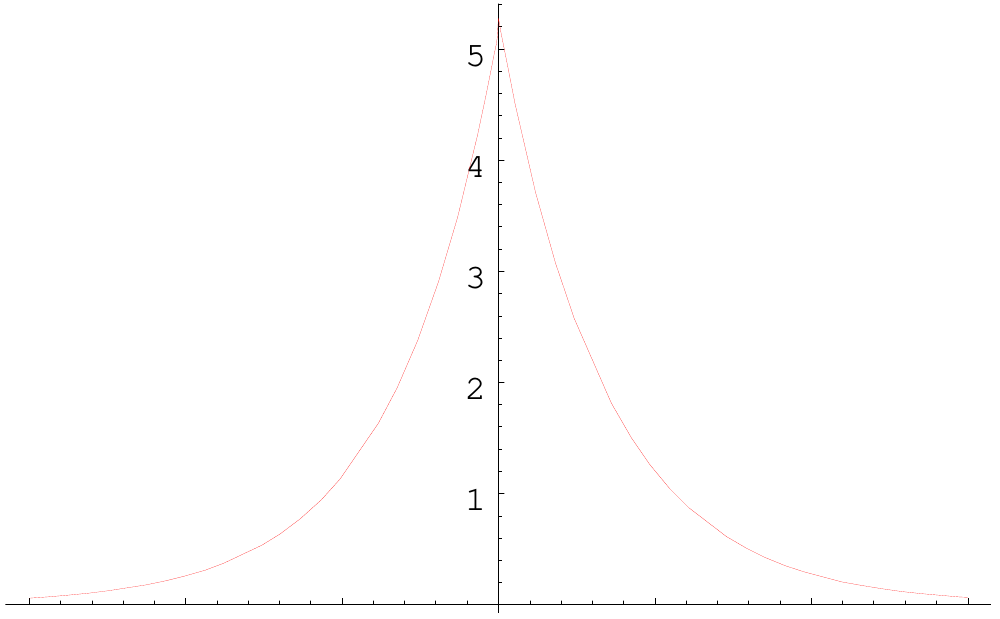}
    \caption{The distribution function of the \lev measure of the standard
             Poisson process (left) and the density of the \lev measure of
             a compound Poisson process with double-exponentially distributed
             jumps.}
    \label{P-DBE}
 \end{center}
\end{figure}

\begin{figure}
 \begin{center}
  \includegraphics[width=5.9cm,keepaspectratio=true]{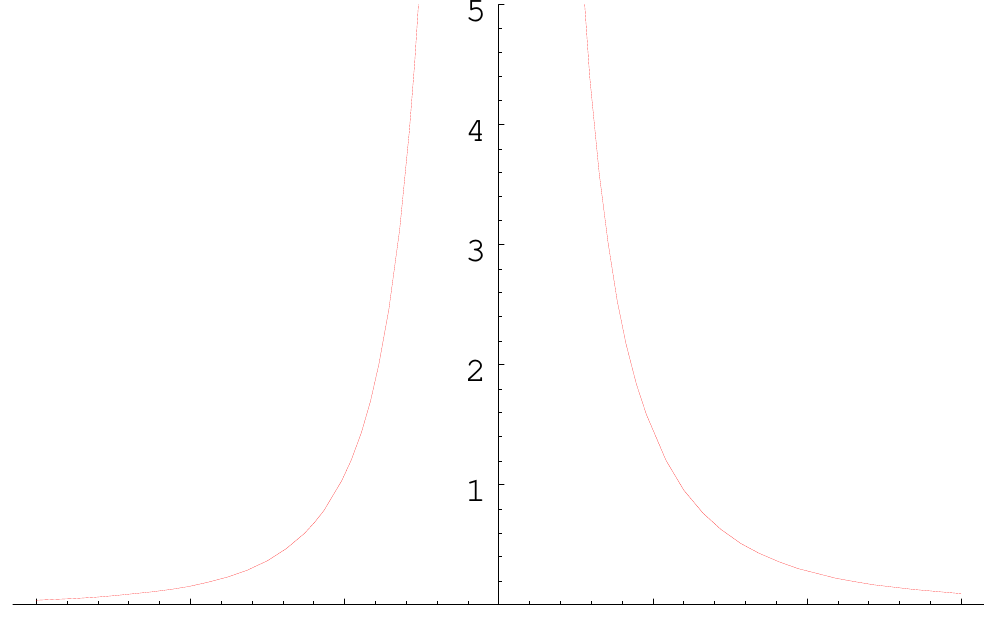}
  \includegraphics[width=5.97cm,keepaspectratio=true]{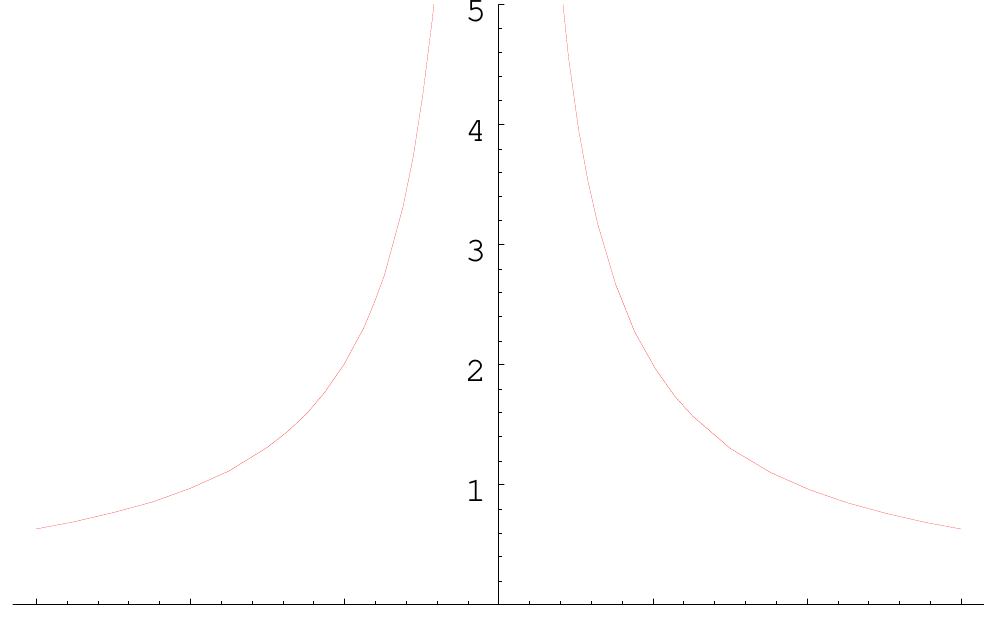}
   \caption{The density of the \lev measure of an NIG (left) and an
            $\alpha$-stable process.}
   \label{NIG-ST}
 \end{center}
\end{figure}

The \lev measure is responsible for the \emph{richness} of the class
of \lev processes and carries useful information about the structure
of the process. Path properties can be read from the \lev measure:
for example, Figures \ref{P-DBE} and \ref{NIG-ST} reveal that the
compound Poisson process has a finite number of jumps on every time
interval, while the NIG and $\alpha$-stable processes have an
infinite one; we then speak of an \emph{infinite activity} \lev
process.

\begin{prop}\label{activity}
Let $L$ be a \lev process with triplet $(b,c,\nu)$.
\begin{enumerate}
 \item If $\nu(\R)<\ap$, then almost all paths of $L$ have a finite
       number of jumps on every compact interval. In that case, the
       \lev process has \emph{finite activity}.
 \item If $\nu(\R)=\ap$, then  almost all paths of $L$ have an
       infinite number of jumps on every compact interval. In that
       case, the \lev process has \emph{infinite activity}.
\end{enumerate}
\end{prop}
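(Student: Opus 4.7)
The plan is to work directly with the Poisson random measure $\mu^L$ of jumps, using that the total number of jumps on an interval $[0,t]$ is $N_t := \mu^L(t,\R\setminus\{0\})$. The idea is to exhaust $\R\setminus\{0\}$ by an increasing sequence of sets that are bounded away from the origin, so that Theorem \ref{poi-A-ch} applies on each, and then to pass to the monotone limit. Concretely, I set $A_n = \{x\in\R : |x|>1/n\}$; each $A_n$ satisfies $0\notin\overline{A_n}$, and $A_n\uparrow\R\setminus\{0\}$, so $\mu^L(t,A_n)\uparrow N_t$ pointwise. By Theorem \ref{poi-A-ch} (with $f=1_{A_n}$, or by the very definition of $\mu^L$ as a Poisson random measure with intensity $\nu$), $\mu^L(t,A_n)$ is a Poisson random variable with parameter $t\,\nu(A_n)$.

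For part (1), assume $\nu(\R)<\ap$. Then $\nu(A_n)\le\nu(\R)<\ap$ for every $n$, and by monotone convergence $\nu(A_n)\uparrow\nu(\R\setminus\{0\})=\nu(\R)<\ap$. Since each $\mu^L(t,A_n)$ is a.s. finite and the monotone limit $N_t$ is Poisson with parameter $t\,\nu(\R)<\ap$, we conclude $N_t<\ap$ a.s. Applying this along the countable sequence $t=1,2,\ldots$ (and using monotonicity of $N_t$ in $t$), almost every path has finitely many jumps on every compact interval.

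For part (2), assume $\nu(\R)=\ap$. The integrability condition $\int(1\wedge|x|^2)\nu(\ud x)<\ap$ forces $\nu(\{|x|>1\})<\ap$, so the divergence must come from a neighbourhood of the origin; hence $\nu(A_n)\to\ap$. Each $\mu^L(t,A_n)$ is Poisson with mean $t\,\nu(A_n)$, so for any $M>0$ we have $P(\mu^L(t,A_n)>M)\to1$. Since $\mu^L(t,A_n)\le N_t$ for all $n$, the bound $P(N_t>M)\ge P(\mu^L(t,A_n)>M)$ gives $P(N_t>M)=1$ for every $M$, whence $N_t=\ap$ a.s. Monotonicity in $t$ again promotes this to the statement on every compact interval.

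The only delicate point is the a.s.\ (as opposed to merely in expectation or in probability) infiniteness in part (2); the argument just given, which leverages monotonicity of $\mu^L(t,A_n)$ in $n$ together with the trivial tail estimate for Poisson variables with diverging mean, dispenses with it cleanly, so no deeper technique is needed beyond the Poisson random measure machinery already established.
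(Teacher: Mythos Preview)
The paper itself does not prove this proposition: it simply refers the reader to Theorem 21.3 in Sato (1999). Your proposal therefore goes well beyond the paper by supplying a direct argument from the Poisson random measure machinery of Section~5, and the core idea---approximating $N_t=\mu^L(t,\R\setminus\{0\})$ by $\mu^L(t,A_n)$ with $A_n=\{|x|>1/n\}$, each of which is Poisson($t\,\nu(A_n)$)---is sound and is exactly how one would prove it from first principles.

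There is, however, a slip in the closing step of part~(2). You write ``Monotonicity in $t$ again promotes this to the statement on every compact interval'', but monotonicity of $t\mapsto N_t$ points the wrong way here: knowing $N_T=\ap$ a.s.\ for some $T$ does \emph{not} force $N_t=\ap$ for $t<T$. What does work is that your argument already gives $N_t=\ap$ a.s.\ for \emph{every} fixed $t>0$ (since $t\,\nu(A_n)\to\ap$ regardless of the value of $t$). Combine this with stationarity of increments to get that the number of jumps on any fixed interval $(s,s+t]$ is a.s.\ infinite, run over a countable family such as rational $s\ge0$ and $t=1/m$, and intersect; any compact interval of positive length contains one of these, so a.s.\ every such interval carries infinitely many jumps. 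With this small repair the proof is complete.
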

\begin{proof}
See Theorem 21.3 in \citeN{Sato99}.
\end{proof}

Whether a \lev process has finite variation or not also depends on
the \lev measure (and on the presence or absence of a Brownian
part).

\begin{prop}\label{variation}
Let $L$ be a \lev process with triplet $(b,c,\nu)$.
\begin{enumerate}
 \item If $c=0$ and $\int_{|x|\leq1}|x|\nu(\dx)<\infty$, then almost all paths of $L$ have
       \emph{finite variation}.
 \item If $c\neq0$ or $\int_{|x|\leq1}|x|\nu(\dx)=\infty$, then almost all paths of $L$ have
       \emph{infinite variation}.
\end{enumerate}
\end{prop}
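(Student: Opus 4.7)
The plan is to read off both parts from the L\'evy--It\^o decomposition \eqref{levy-ito}, using the standard fact that any c\`adl\`ag function $f$ of finite variation on $[0,T]$ admits the split $V(f;[0,T])=V(f^c;[0,T])+\sum_{s\le T}|\Delta f_s|$, where $f^c:=f-\sum_{s\le\cdot}\Delta f_s$ denotes its continuous part. So I only need to control the continuous part of $L$ (i.e.\ $bt+\sqrt{c}W_t$) and the total jump mass $\sum_{s\le T}|\Delta L_s|$.

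For part (1), with $c=0$ and $\int_{|x|\le 1}|x|\nu(\dx)<\infty$, the continuous part of $L$ reduces to $bt$, of variation $|b|T<\infty$. I then note that $L^{(3)}$ is compound Poisson with finite intensity $\nu(\{|x|\ge 1\})$, hence a.s.\ has finitely many jumps on $[0,T]$; and that the integrability of $|x|$ near $0$ allows me to disentangle the small-jump part without compensation, $L^{(4)}_t=\int_0^t\int_{|x|<1}x\mu^L(\ds,\dx)-t\int_{|x|<1}x\nu(\dx)$. Applying Theorem \ref{poi-A-ch}.B with $f(x)=|x|\1_{\{|x|<1\}}$ yields
\begin{align*}
 \E\Big[\sum_{s\le T}|\Delta L_s|\1_{\{|\Delta L_s|<1\}}\Big]
   =T\int_{|x|<1}|x|\nu(\dx)<\infty,
\end{align*}
so the jump sum is a.s.\ finite, and combining the four pieces gives finite variation.

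For part (2), I will argue by contradiction, assuming $V(L;[0,T])<\infty$ on an event of positive probability; on that event the variation decomposition forces both the continuous part $bt+\sqrt{c}W_t$ to have finite variation and the jump sum $\sum_{s\le T}|\Delta L_s|$ to be finite. If $c\neq 0$, the first conclusion contradicts the a.s.\ infinite variation of $\sqrt{c}W_t$ on $[0,T]$. If instead $c=0$ but $\int_{|x|\le 1}|x|\nu(\dx)=\infty$, I will set $S^\varepsilon_T:=\int_0^T\int_{\varepsilon<|x|\le 1}|x|\mu^L(\ds,\dx)\uparrow S_T:=\sum_{s\le T}|\Delta L_s|\1_{\{|\Delta L_s|\le 1\}}$ as $\varepsilon\downarrow 0$, compute via Theorem \ref{poi-A-ch}.A the Laplace transform
\begin{align*}
\E\big[\e^{-S^\varepsilon_T}\big]=\exp\Big(-T\int_{\varepsilon<|x|\le 1}(1-\e^{-|x|})\nu(\dx)\Big),
\end{align*}
and invoke the elementary bound $1-\e^{-|x|}\ge|x|/2$ on $[0,1]$ together with the hypothesis to drive the exponent to $-\infty$. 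Monotone convergence on $\e^{-S^\varepsilon_T}\downarrow\e^{-S_T}$ will then give $\E[\e^{-S_T}]=0$, hence $S_T=\infty$ a.s., once more a contradiction.

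The main obstacle is this last sub-case: divergence of the \emph{expectation} of the small-jump sum is immediate, but promoting it to \emph{almost sure} divergence requires the Laplace-transform / monotone-convergence device just outlined; everything else is a mechanical consequence of the L\'evy--It\^o decomposition together with Theorem \ref{poi-A-ch}.
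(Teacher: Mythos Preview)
The paper's own ``proof'' is just the one-line reference \emph{See Theorem 21.9 in Sato (1999)}, so you have supplied an argument where the paper offers none. Your overall strategy --- read off both parts from the L\'evy--It\^o decomposition, control the small-jump sum via Theorem~\ref{poi-A-ch}, and use a Laplace-transform device to promote divergent expectation to a.s.\ divergence --- is sound and is essentially how Sato does it.

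One genuine gap in your case split for part (2): when $c\neq 0$ you assert that the continuous part of the \emph{path} of $L$ equals $bt+\sqrt{c}W_t$, so that finite variation of $L$ would force finite variation of $W$. But this identification is only immediate once the compensator in $L^{(4)}$ can be separated out, i.e.\ when $\int_{|x|\le 1}|x|\,\nu(\dx)<\infty$; if that integral diverges, $L^{(4)}$ does not split into a pure-jump piece plus a drift, and the pathwise continuous part $L_t-\sum_{s\le t}\Delta L_s$ is not manifestly $bt+\sqrt{c}W_t$. The fix is easy and already implicit in what you wrote: run your Laplace-transform argument \emph{first}, for arbitrary $c$, whenever $\int_{|x|\le 1}|x|\,\nu(\dx)=\infty$ (nothing in that step uses $c=0$); only afterwards treat the residual case $c\neq 0$ with $\int_{|x|\le 1}|x|\,\nu(\dx)<\infty$, where your identification of the continuous part is justified. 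Alternatively, for the Brownian case one can bypass the path decomposition altogether via quadratic variation: $[L,L]_T=cT+\sum_{s\le T}(\Delta L_s)^2$, and any finite-variation semimartingale has $[L,L]_T=\sum_{s\le T}(\Delta L_s)^2$, forcing $c=0$.

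A minor point in part (1): you invoke Theorem~\ref{poi-A-ch}.B on $A=\{|x|<1\}$, but that theorem is stated for sets $A$ with $0\notin\overline A$. You clearly know the workaround, since in part (2) you pass through $\{\varepsilon<|x|\le 1\}$ and take $\varepsilon\downarrow 0$ with monotone convergence; the same device should be made explicit in part (1).
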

\begin{proof}
See Theorem 21.9 in \citeN{Sato99}.
\end{proof}

The different functions a \lev measure has to integrate in order to
have finite activity or variation, are graphically exhibited in
Figure \ref{LMFFV}. The compound Poisson process has finite measure,
\emph{hence} it has finite variation as well; on the contrary, the
NIG \lev process has an infinite measure \emph{and} has infinite
variation. In addition, the CGMY \lev process for $0<Y<1$ has
infinite activity, \textit{but} the paths have finite variation.

\begin{figure}
 \begin{center}
  \includegraphics[width=8cm,keepaspectratio=true]{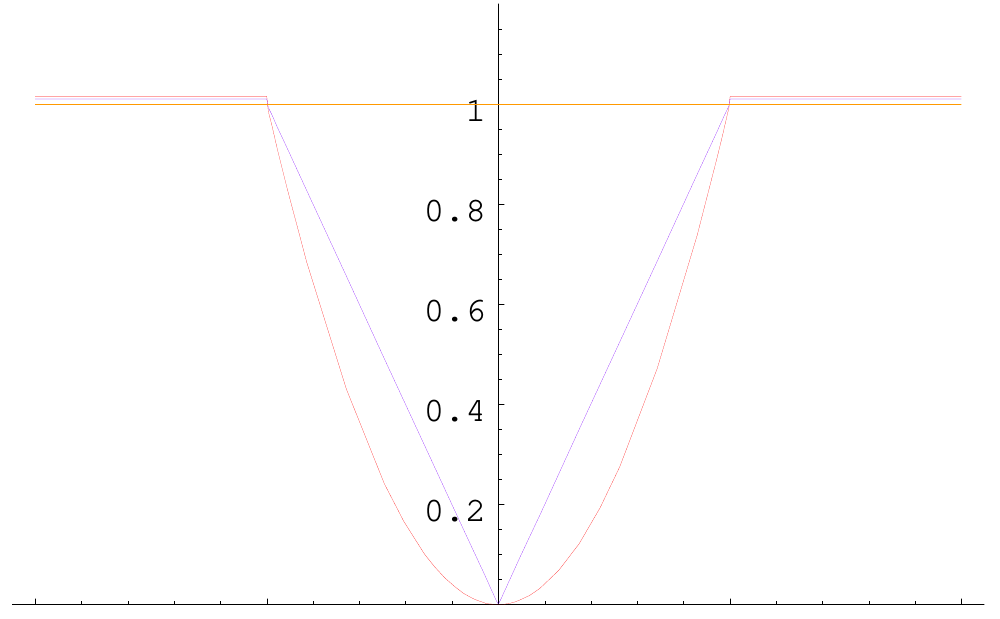}
  \caption{The \lev measure must integrate $|x|^2\wedge1$ (red line);
           it has finite variation if it integrates  $|x|\wedge1$
           (blue line); it is finite if it integrates $1$ (orange line).}
  \label{LMFFV}
 \end{center}
\end{figure}

The \lev measure also carries information about the finiteness of
the \emph{moments} of a \lev process. This is particularly useful
information in mathematical finance, related to the existence of a
\emph{martingale measure}.

The finiteness of the moments of a \lev process is related to the
finiteness of an integral over the \lev measure (more precisely, the
restriction of the L\'evy measure to jumps larger than $1$ in
absolute value, i.e. \textit{big} jumps).

\begin{prop}\label{moments}
Let $L$ be a \lev process with triplet $(b,c,\nu)$. Then
\begin{enumerate}
\item $L_t$ has \emph{finite $p$-th moment} for $p\in\R_{\geqslant0}$ $(\E|L_t|^p<\infty)$
      if and only if $\int_{|x|\geq1}|x|^p\nu(\dx)<\ap$.
\item $L_t$ has \emph{finite $p$-th exponential moment} for $p\in\R$
      $(\E[\e^{pL_t}]<\infty)$ if and only if $\int_{|x|\geq1}\e^{px}\nu(\dx)<\ap$.
\end{enumerate}
\end{prop}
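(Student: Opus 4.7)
The plan is to reduce the question to a statement about the compound Poisson process of big jumps, using the \lev-It\^o decomposition \eqref{levy-ito}. Writing $L_t = bt+\sqrt{c}W_t+L^{(3)}_t+L^{(4)}_t$ with the four pieces independent, $L^{(3)}$ being the compound Poisson process with intensity measure $\nu|_{\{|x|\ge 1\}}$ and $L^{(4)}$ the square integrable pure-jump martingale of compensated small jumps, I would use the elementary fact that, for independent random variables $X,Y$, finiteness of $\E|X+Y|^p$ (resp.\ of $\E[\e^{p(X+Y)}]$) is equivalent to the corresponding property holding for each summand separately (the ``$\Leftarrow$'' direction is Minkowski/multiplicativity; the ``$\Rightarrow$'' direction follows for ordinary moments from Fubini, since $\E|X+Y|^p = \int\E|X+y|^p P_Y(\ud y)$ forces $\E|X|^p<\infty$ for $P_Y$-a.e.\ $y$, and for exponential moments from $\E[\e^{p(X+Y)}]=\E[\e^{pX}]\E[\e^{pY}]$). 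Consequently, it suffices to show that $bt$, $\sqrt{c}W_t$ and $L^{(4)}_t$ each admit all ordinary and all exponential moments, and the whole question reduces to the big-jump part $L^{(3)}_t$.

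For the drift and the Gaussian piece this is classical. The nontrivial case is $L^{(4)}$. Since its jumps are bounded by $1$ in absolute value, the Taylor estimate $|\e^{zx}-1-zx|\le\tfrac12|z|^2\e^{|z|}x^2$ for $|x|\le 1$, combined with $\int_{|x|<1}x^2\nu(\dx)<\infty$ from \eqref{levy-measure}, yields $\int_{|x|<1}|\e^{zx}-1-zx|\nu(\dx)<\infty$ for every $z\in\C$. This lets one extend the characteristic exponent $\psi^{(4)}$ analytically to all of $\C$, so that evaluation at $z=p\in\R$ gives
\begin{align*}
\E[\e^{pL^{(4)}_t}]=\exp\Big(t\int_{|x|<1}(\e^{px}-1-px)\nu(\dx)\Big)<\infty
\end{align*}
for every $p\in\R$; ordinary moments of every order of $L^{(4)}_t$ then follow by differentiation under the integral.

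It remains to analyse $L^{(3)}_t=\sum_{k=1}^{N_t}Y_k$, where $N_t\sim\mathrm{Poisson}(\lambda t)$ with $\lambda=\nu(\{|x|\ge 1\})$ and the $Y_k$ are i.i.d.\ with law $F(\dx)=\lambda^{-1}\1_{\{|x|\ge 1\}}\nu(\dx)$. For part (2), Theorem \ref{poi-A-ch} (extended from $\e^{iux}$ to $\e^{px}$ under the integrability assumption) gives $\E[\e^{pL^{(3)}_t}]=\exp\big(t\int_{|x|\ge 1}(\e^{px}-1)\nu(\dx)\big)$, finite iff $\int_{|x|\ge 1}\e^{px}\nu(\dx)<\infty$. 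For part (1), conditioning on $N_t$ and using Minkowski (if $p\ge 1$) or the subadditivity $|\sum y_k|^p\le\sum|y_k|^p$ (if $0<p<1$) gives $\E|L^{(3)}_t|^p\le\sum_n P(N_t=n)\,n^{p\vee 1}\E|Y|^p<\infty$ whenever $\E|Y|^p<\infty$, the series converging by the Poisson tail. Conversely, restricting to $\{N_t=1\}$ (an event of positive probability on which $|L^{(3)}_t|=|Y_1|$) forces $\E|Y|^p<\infty$ as soon as $\E|L^{(3)}_t|^p<\infty$. Since $\E|Y|^p=\lambda^{-1}\int_{|x|\ge 1}|x|^p\nu(\dx)$, both equivalences are established.

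The main technical obstacle is the analytic-continuation step for $\psi^{(4)}$, which upgrades the characteristic function into a genuine moment generating function on $\R$; the standard reference is Theorem~25.17 of \citeN{Sato99}. Everything else amounts to routine manipulation of characteristic functions and of Poisson sums.
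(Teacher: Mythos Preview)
Your argument is correct and follows the same route as the paper: isolate the big-jump compound Poisson piece $L^{(3)}$ via the L\'evy--It\^o decomposition and show that the remaining summands contribute nothing to the question. The paper itself only sketches the necessity direction for exponential moments (by conditioning on $N_t$ and extracting the $k=1$ term) and otherwise defers to Theorem~25.3 in \citeN{Sato99}, so your write-up is in fact more complete than what appears there.
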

\begin{proof}
The proof of these results can be found in Theorem 25.3 in
\citeN{Sato99}. Actually, the conclusion of this theorem holds for
the general class of \emph{submultiplicative} functions (cf.
Definition 25.1 in \citeNP{Sato99}), which contains $\exp(px)$ and
$|x|^p\vee1$ as special cases.
\end{proof}

In order to gain some understanding of this result and because it
blends beautifully with the L\'evy-It\^o decomposition, we will give
a rough proof of the sufficiency for the second statement (inspired
by \citeNP{Kyprianou06}).

Recall from the L\'evy-It\^o decomposition, that the characteristic
exponent of a \lev process was split into four independent parts,
the third of which is a compound Poisson process with arrival rate
$\lambda:=\nu(\R\backslash(-1,1))$ and jump magnitude $F(\ud
x):=\frac{\nu(\ud x)}{\nu(\R\backslash(-1,1))}\1_{\{|x|\geq1\}}$.
Finiteness of $\E[\e^{pL_t}]$ implies finiteness of
$\E[\e^{pL^{(3)}_t}]$, where
\begin{align*}
\E[\e^{pL^{(3)}_t}]
  &= \e^{-\lambda t}\sum_{k\geq0}\frac{(\lambda t)^{k}}{k!}
        \left( \int_{\R}\e^{px}F(\ud x)\right)^k\\
  &= \e^{-\lambda t}\sum_{k\geq0}\frac{t^{k}}{k!}
        \left( \int_{\R}\e^{px}\1_{\{|x|\geq1\}}\nu(\ud x)\right)^k.
\end{align*}
Since all the summands must be finite, the one corresponding to $k=1$
must also be finite, therefore
\begin{align*}
\e^{-\lambda t}\int_{\R}\e^{px}\1_{\{|x|\geq1\}}\nu(\ud x)<\ap
 \Longrightarrow
\int_{|x|\geq1}\e^{px}\nu(\ud x)<\ap.
\end{align*}

\begin{figure}
 \begin{center}
  \includegraphics[width=8cm,keepaspectratio=true]{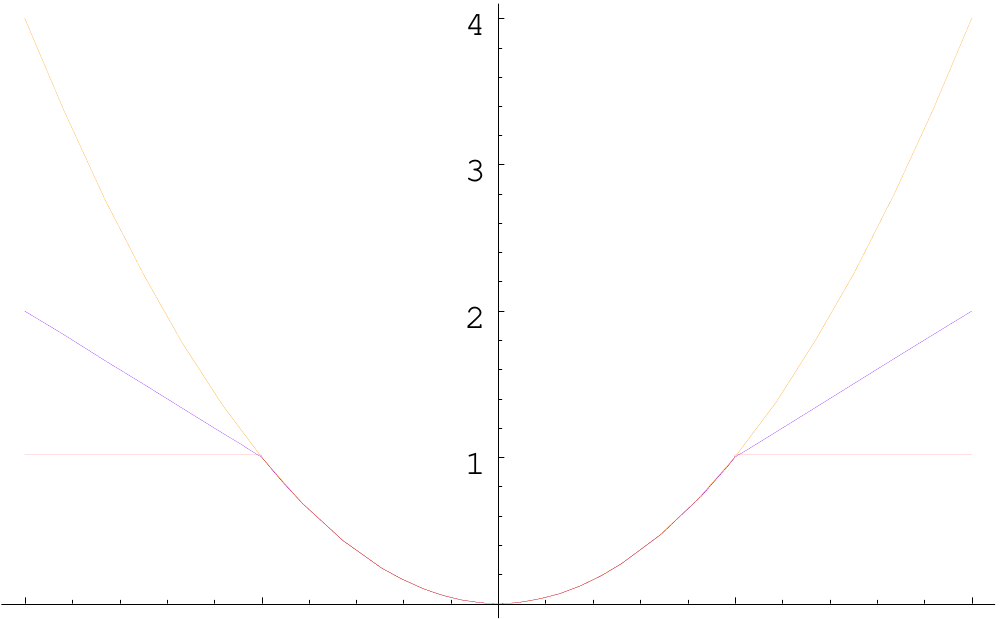}
  \caption{A \lev process has first moment if the \lev measure integrates $|x|$
           for $|x|\geq1$ (blue line) and second moment if it integrates $x^2$
           for $|x|\geq1$ (orange line).}
  \label{LMM}
 \end{center}
\end{figure}

The graphical representation of the functions the \lev measure must
integrate so that a \lev process has finite moments is given in
Figure \ref{LMM}. The NIG process possesses moments of all order,
while the $\alpha$-stable does not; one can already observe in
Figure \ref{NIG-ST} that the tails of the \lev measure of the
$\alpha$-stable are much heavier than the tails of the NIG.

\begin{rem}
As can be observed from Propositions \ref{activity}, \ref{variation}
and \ref{moments}, the \textit{variation} of a \lev process depends
on the \textit{small jumps} (and the Brownian motion), the
\textit{moment} properties depend on the \textit{big jumps}, while
the \textit{activity} of a \lev process depends on \textit{all} the
jumps of the process.
\end{rem}

\section{Some classes of particular interest}

We already know that a Brownian motion, a (compound) Poisson process
and a \lev jump-diffusion are \lev processes, their L\'evy-It\^o
decomposition and their characteristic functions. Here, we present
some further subclasses of \lev processes that are of special
interest.

\subsection{Subordinator}\label{subordinator}

A \emph{subordinator} is an a.s. increasing (in $t$) \lev process.
Equivalently, for $L$ to be a subordinator, the triplet must satisfy
$\nu(-\infty,0)=0$, $c=0$, $\int_{(0,1)}x\nu(\ud x)<\infty$ and
$\gamma=b-\int_{(0,1)}x\nu(\ud x)>0$.

The L\'evy-It\^o decomposition of a subordinator is
\begin{equation}
L_t = \gamma t + \int_0^t\int_{(0,\infty)}x \mu^L(\ud s, \ud x)
\end{equation}
and the L\'evy-Khintchine formula takes the form
\begin{equation}
\E[\e^{iuL_t}]
 = \exp\Big[t\big(iu\gamma +\int_{(0,\infty)}(\e^{iux}-1)\nu(\ud x)\big)\Big].
\end{equation}
Two examples of subordinators are the Poisson and the inverse
Gaussian process, cf. Figures \ref{NIG-GIG} and
\ref{poisson-compoundpoisson}.

\subsection{Jumps of finite variation}

A \lev process has jumps of finite variation if and only if
$\int_{|x|\leq1}|x|\nu(\ud x)<\infty$. In this case, the
L\'evy-It\^o decomposition of $L$ resumes the form
\begin{equation}
L_t = \gamma t + \sqrt{c}W_t + \int_0^t\int_{\R}x \mu^L(\ud s, \ud x)
\end{equation}
and the L\'evy-Khintchine formula takes the form
\begin{equation}
\E[\e^{iuL_t}] = \exp\Big[t\big(iu\gamma - \frac{u^2c}{2} +
                       \int_{\R}(\e^{iux}-1)\nu(\ud x)\big)\Big],
\end{equation}
where $\gamma$ is defined similarly to subsection \ref{subordinator}.

Moreover, if $\nu([-1,1])<\infty$, which means that
$\nu(\R)<\infty$, then the jumps of $L$ correspond to a
\emph{compound Poisson} process.

\subsection{Spectrally one-sided}

A \lev processes is called \textit{spectrally negative} if
$\nu(0,\infty)=0$. The L\'evy-It\^o decomposition of a spectrally
negative \lev process has the form
\begin{align}
L_t &= bt + \sqrt{c}W_t + \int_0^t\!\int_{x<-1}x \mu^L(\ud s, \ud x)
          + \int_0^t\!\!\int_{-1<x<0}\!\!\!\!x(\mu^L-\nu^L)(\ud s,\ud x)
\end{align}
and the L\'evy-Khintchine formula takes the form
\begin{equation}
\E[\e^{iuL_t}] = \exp\Big[t\big(iub - \frac{u^2c}{2} +
                       \int_{(-\infty,0)}(\e^{iux}-1-iu1_{\{x>-1\}})\nu(\ud x)\Big].
\end{equation}

Similarly, a \lev processes is called \textit{spectrally positive}
if $-L$ is spectrally negative.

\subsection{Finite first moment}

As we have seen already, a \lev process has finite first moment if
and only if $\int_{|x|\geq1}|x|\nu(\ud x)<\infty$. Therefore, we can
also compensate the big jumps to form a martingale, hence the
L\'evy-It\^o decomposition of $L$ resumes the form
\begin{equation}\label{LI-SSMG}
L_t = b't + \sqrt{c}W_t
    + \int_0^t\int_{\R}x (\mu^L-\nu^L)(\ud s,\ud x)
\end{equation}
and the L\'evy-Khintchine formula takes the form
\begin{equation}
\E[\e^{iuL_t}] = \exp\Big[t\big(iub' - \frac{u^2c}{2} +
                       \int_{\R}(\e^{iux}-1-iux)\nu(\ud x)\big)\Big],
\end{equation}
where $b'=b+\int_{|x|\geq1}x\nu(\ud x)$.

\begin{rem}[\textbf{Assumption} ($\mathbb M$)]\label{EM}
For the remaining parts we will work only with \lev process that
have finite first moment. We will refer to them as \lev processes
that satisfy \textit{Assumption} ($\mathbb M$). For the sake of
simplicity, we suppress the notation $b'$ and write $b$ instead.
\end{rem}

\begin{figure}
\begin{center}
 \includegraphics[width=6.cm,keepaspectratio=true]{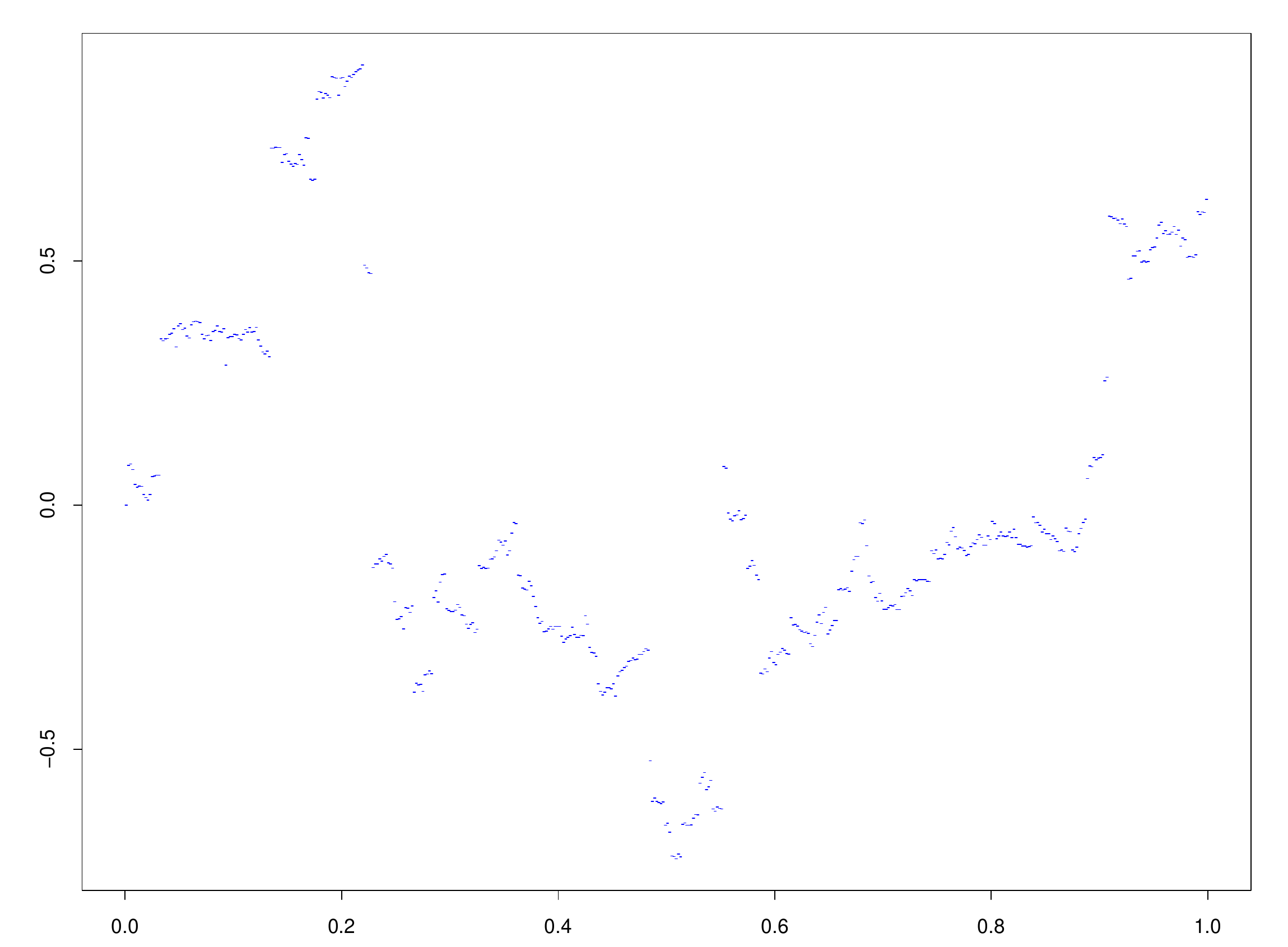}
 \includegraphics[width=6.cm,keepaspectratio=true]{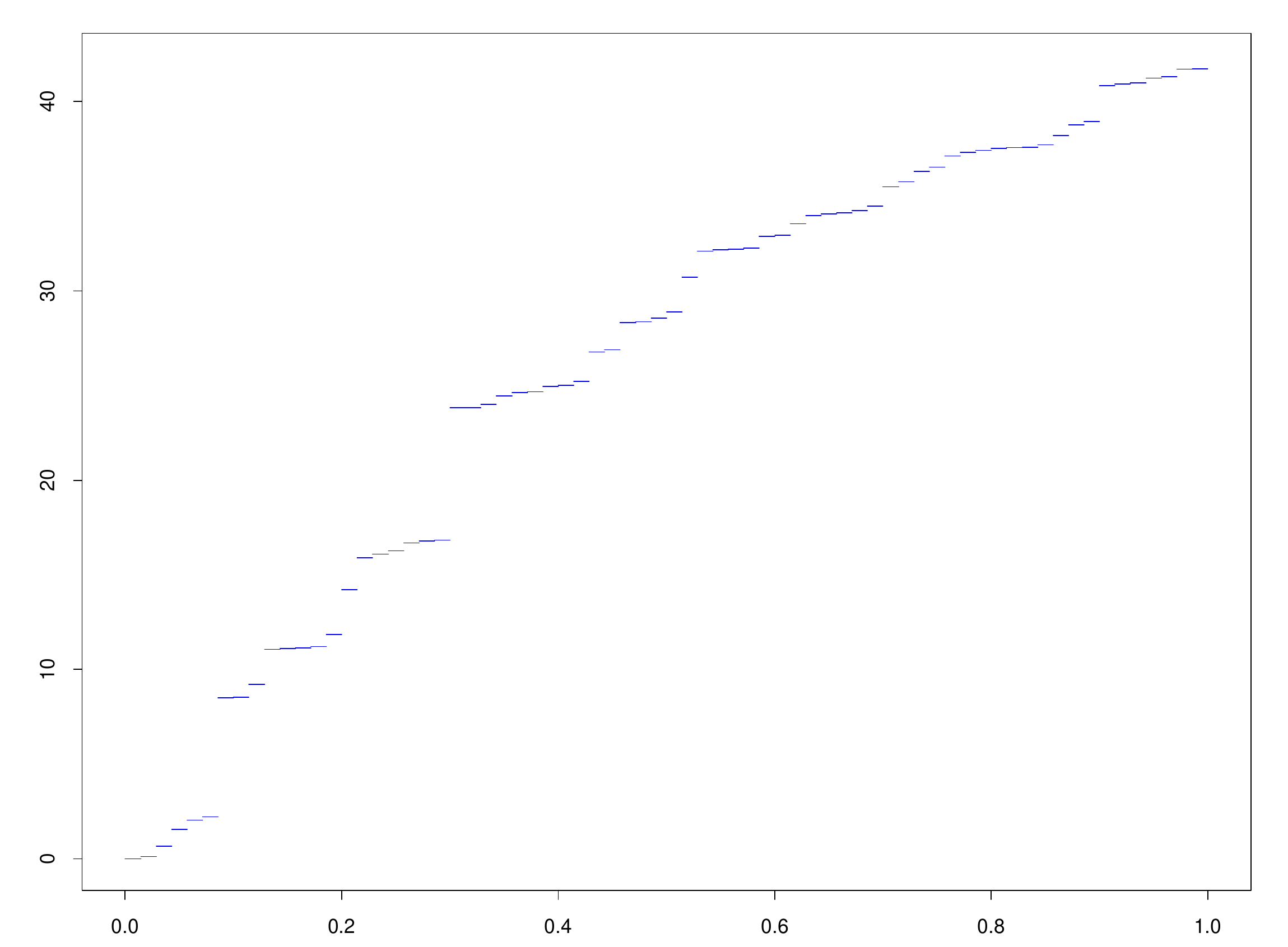}
  \caption{Simulated path of a normal inverse Gaussian (left) and an
           inverse Gaussian process.}
   \label{NIG-GIG}
\end{center}
\end{figure}

\section{Elements from semimartingale theory}

A \emph{semimartingale} is a stochastic process \prozess[X] which
admits the decomposition
\begin{align}\label{gsmmg}
X=X_0+M+A
\end{align}
where $X_0$ is finite and $\F_0$-measurable, $M$ is a local
martingale with $M_0=0$ and $A$ is a finite variation process with
$A_0=0$. $X$ is a \emph{special semimartingale} if $A$ is
\emph{predictable}.

Every special semimartingale $X$ admits the following, so-called,
\emph{canonical decomposition}
\begin{align}\label{smmg}
X &= X_0 + B + X^c + x*(\mu^X-\nu^X).
\end{align}
Here $X^c$ is the \emph{continuous martingale} part of $X$ and
$x*(\mu^X-\nu^X)$ is the \emph{purely discontinuous martingale} part
of $X$.  $\mu^X$ is called the \emph{random measure of jumps} of
$X$; it counts the number of jumps of specific size that occur in a
time interval of specific length. $\nu^X$ is called the
\emph{compensator} of $\mu^X$; for a detailed account, we refer to
\citeN[Chapter II]{JacodShiryaev03}.

\begin{rem}
Note that $W*\mu$, for $W=W(\omega; s,x)$ and the integer-valued
measure $\mu=\mu(\omega;\ud t, \ud x)$, $t\in[0,T]$, $x\in E$,
denotes the \emph{integral process}
\begin{displaymath}
  \int_0^\cdot \int_E  W(\omega;t,x)\mu(\omega;\ud t,\ud x).
\end{displaymath}
Consider a predictable function $W:\Omega\times[0,T]\times E
\rightarrow\R$ in $G_\text{loc}(\mu)$; then $W*(\mu-\nu)$ denotes
the \emph{stochastic integral}
\begin{displaymath}
  \int_0^\cdot \int_E  W(\omega;t,x)(\mu-\nu)(\omega;\ud t,\ud x).
\end{displaymath}
\end{rem}

Now, recalling the L\'evy-It\^o decomposition \eqref{LI-SSMG} and
comparing it to \eqref{smmg}, we can easily deduce that a \lev
process with triplet ($b,c,\nu$) which satisfies Assumption
($\mathbb{M}$), has the following \textit{canonical decomposition}
\begin{align}\label{ssmmg-lp}
L_t&=bt+\sqrt{c}W_t+\int_0^t\int_{\R}x(\mu^L-\nu^L)(\dsdx),
\end{align}
where
\begin{align*}
\int_0^t\int_{\R}x\mu^L(\dsdx)=\sum_{0\leq s\leq t}\Delta L_s
\end{align*}
and
\begin{align*}
\E\Big[\int_0^t\int_{\R}x\mu^L(\dsdx)\Big] = \int_0^t\int_{\R} x\nu^L(\dsdx)
                                           = t \int_{\R} x\nu(\dx).
\end{align*}
Therefore, a \lev process that satisfies Assumption ($\mathbb{M}$)
is a \textit{special semimartingale} where the continuous martingale
part is a \textit{Brownian motion} with coefficient $\sqrt{c}$ and
the random measure of the jumps is a \textit{Poisson random
measure}. The compensator $\nu^L$ of the Poisson random measure
$\mu^L$ is a product measure of the \lev measure with the Lebesgue
measure, i.e. $\nu^L=\nu\otimes\llambda$; one then also writes
$\nu^L(\dsdx)=\nu(\dx)\ud s$.

We denote the \emph{continuous martingale} part of $L$ by $L^c$ and
the \emph{purely discontinuous martingale} part of $L$ by $L^d$,
i.e.
\begin{eqnarray}
L_t^c = \sqrt{c}W_t
 && \text{  and  } \qquad
L_t^d = \int_0^t\int_{\R}x(\mu^L-\nu^L)(\dsdx).
\end{eqnarray}

\begin{rem}
Every \lev process is also a semimartingale; this follows easily
from \eqref{gsmmg} and the L\'evy--It\^o decomposition of a \lev
process. Every \lev process with finite first moment (i.e. that
satisfies Assumption ($\mathbb{M}$)) is also a \textit{special}
semimartingale; conversely, every \lev process that is a special
semimartingale, has a finite first moment. This is the subject of
the next result.
\end{rem}

\begin{lem}\label{ssmg}
Let $L$ be a \lev process with triplet ($b,c,\nu$). The following
conditions are equivalent
\begin{enumerate}
\item $L$ is a \emph{special} semimartingale,
\item $\int_{\R}(|x|\wedge|x|^2)\nu(\dx)<\infty$,
\item $\int_{\R}|x|1_{\{|x|\geq1\}}\nu(\dx)<\infty$.
\end{enumerate}
\end{lem}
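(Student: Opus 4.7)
The plan is to handle the easy equivalence (2)$\iff$(3) and the easy implication (3)$\Rightarrow$(1) directly, and to concentrate the real effort on (1)$\Rightarrow$(3).

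For (2)$\iff$(3), I would split the integral at $|x|=1$:
\begin{equation*}
\int_{\R}(|x|\wedge|x|^2)\nu(\dx)
 = \int_{|x|\leq1}|x|^2\nu(\dx)+\int_{|x|>1}|x|\nu(\dx).
\end{equation*}
The first term is automatically finite by the L\'evy measure property \eqref{levy-measure}, so (2) reduces exactly to the finiteness of the second term, i.e.\ to (3). For (3)$\Rightarrow$(1), Proposition \ref{moments} identifies (3) with Assumption $(\mathbb{M})$, so I would invoke the compensated form \eqref{LI-SSMG}
\begin{equation*}
L_t=b't+\sqrt{c}W_t+\int_0^t\int_{\R}x(\mu^L-\nu^L)(\dsdx).
\end{equation*}
Here $b't$ is deterministic, hence continuous, predictable, and of finite variation, while $\sqrt{c}W$ and the compensated-jump integral are local (in fact square-integrable) martingales; this is a canonical decomposition, so $L$ is special.

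The main obstacle is (1)$\Rightarrow$(3). Starting from the general L\'evy-It\^o decomposition \eqref{levy-ito} I would single out the big-jump compound Poisson process
\begin{equation*}
J_t:=\int_0^t\int_{|x|\geq1}x\mu^L(\dsdx),
\end{equation*}
and write $L=(L-J)+J$. The remainder
$L-J=bt+\sqrt{c}W_t+\int_0^\cdot\int_{|x|<1}x(\mu^L-\nu^L)(\dsdx)$
has bounded jumps and is already in the canonical form ``predictable finite variation plus local martingale'', hence it is a special semimartingale on its own. Consequently, if $L$ is special then so is the difference $J=L-(L-J)$.

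It then remains to show that the compound Poisson process $J$ is special only if $\int_{|x|\geq1}|x|\nu(\dx)<\infty$. For this I would appeal to the standard characterization that a finite-variation semimartingale is special iff its total-variation process $\sum_{s\leq\cdot}|\Delta J_s|$ is locally integrable. Since $J$ has stationary independent increments, local integrability is equivalent to $\E\bigl[\sum_{s\leq t}|\Delta J_s|\bigr]<\infty$ for each $t$, and a direct computation using the Poisson structure gives
\begin{equation*}
\E\Big[\sum_{s\leq t}|\Delta J_s|\Big]=t\int_{|x|\geq1}|x|\nu(\dx),
\end{equation*}
so finiteness of this integral is exactly (3). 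The crux of the proof is thus the invocation of the right semimartingale-theoretic characterization of specialness for finite-variation processes; the rest is bookkeeping on the L\'evy-It\^o decomposition.
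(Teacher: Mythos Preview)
Your argument is correct but follows a genuinely different route from the paper. The paper settles $(1)\Leftrightarrow(2)$ in one stroke by citing the general semimartingale criterion of Kallsen and Shiryaev (their Lemma~2.8): a semimartingale is special if and only if $(|x|\wedge|x|^2)*\nu^L\in\mathcal V$; for a L\'evy process this process is deterministic and equals $t\int_\R(|x|\wedge|x|^2)\nu(\dx)$, which lies in $\mathcal V$ precisely when the integral is finite. The equivalence $(2)\Leftrightarrow(3)$ is then handled exactly as you do. Your approach instead works from the L\'evy--It\^o decomposition: you peel off the big-jump compound Poisson piece $J$, observe that the remainder has bounded jumps and is therefore automatically special, and reduce the question to whether $J$ itself is special, which you test via local integrability of its variation. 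This is more hands-on and makes the role of the big jumps transparent without invoking the general criterion. The one step that deserves an extra line is the passage from local integrability of $V_\cdot=\sum_{s\le\cdot}|\Delta J_s|$ to genuine integrability; ``stationary independent increments'' is the right reason, but concretely one can argue via the compensation formula: for any localizing sequence $(T_n)$ and the nonnegative predictable integrand $|x|1_{\{|x|\ge1\}}$ one has
\[
\E\big[V_{T_n}\big]=\E\Big[\int_0^{T_n}\!\!\int_{|x|\ge1}|x|\,\nu(\dx)\,\ds\Big]=\E[T_n]\int_{|x|\ge1}|x|\,\nu(\dx),
\]
and since $\E[T_n]>0$ for large $n$, finiteness of the left-hand side forces the integral to be finite. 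With that line added, your proof is complete; the paper's route simply trades this short computation for a literature citation.
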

\begin{proof}
From Lemma 2.8 in \citeN{KallsenShiryaev02} we have that, a \lev
process (semimartingale) is special if and only if the compensator
of its jump measure satisfies
\begin{align*}
\int_{0}^{\cdot}\int_{\R}(|x|\wedge|x|^2)\nu^L(\dsdx)\in \mathcal{V}.
\end{align*}
For a fixed $t\in\R$, we get
\begin{align*}
\int_{0}^{t}\int_{\R}(|x|\wedge|x|^2)\nu^L(\dsdx)
 &= \int_{0}^{t}\int_{\R}(|x|\wedge|x|^2)\nu(\dx)\ds\\
 &= t\cdot\int_{\R}(|x|\wedge|x|^2)\nu(\dx)
\end{align*}
and the last expression is an element of $\mathcal{V}$ if
and only if
$$\int_{\R}(|x|\wedge|x|^2)\nu(\dx)<\infty;$$
this settles $(1)\Leftrightarrow(2)$. The equivalence
$(2)\Leftrightarrow(3)$ follows from the properties of the \lev
measure, namely that $\int_{|x|<1}|x|^2\nu(\dx)<\infty$, cf.
\eqref{levy-measure}.
\end{proof}

\section{Martingales and L\'evy processes}

We give a condition for a \lev process to be a martingale and
discuss when the exponential of a \lev process is a martingale.

\begin{prop}
Let $L=\Lt$ be a \lev process with \lev triplet $(b,c,\nu)$ and
assume that $\E[|L_t|]<\infty$, i.e. Assumption $(\mathbb M)$ holds.
$L$ is a martingale if and only if $b=0$. Similarly, $L$ is a
submartingale if $b>0$ and a supermartingale if $b<0$.
\end{prop}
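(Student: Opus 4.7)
The plan is to reduce the statement to an expected-value computation by exploiting the independent and stationary increments of $L$, and then to read off $\E[L_t]$ from the canonical decomposition \eqref{ssmmg-lp} established in the previous section.

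First I would note that Assumption $(\mathbb{M})$ guarantees that each $L_t$ is integrable, so conditional expectations are well-defined. For $0\le s<t\le T$, the increment $L_t-L_s$ is independent of $\mathcal F_s$ by (L1) and has the same law as $L_{t-s}$ by (L2). Hence
\begin{align*}
\E[L_t\mid\mathcal F_s] = L_s + \E[L_t-L_s\mid\mathcal F_s] = L_s + \E[L_{t-s}],
\end{align*}
so the martingale property collapses to the assertion $\E[L_t]=0$ for all $t$, and similarly the sub-/super\-martingale properties amount to the sign of $\E[L_t]$.

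Next I would invoke the canonical decomposition \eqref{ssmmg-lp}, writing
\begin{align*}
L_t = bt + \sqrt{c}\,W_t + \int_0^t\!\int_\R x(\mu^L-\nu^L)(\dsdx).
\end{align*}
The middle term is a Brownian motion and has mean zero. For the third term, Theorem \ref{poi-A-ch}(B) applied to $f(x)=x$ on any set $A=\{|x|\ge\varepsilon\}\cup\{\varepsilon<|x|\le 1\}$ shows that the compensated integral $\int_0^t\int_A x(\mu^L-\nu^L)(\dsdx)$ has zero expectation; passing to the limit $\varepsilon\to 0$ in the $L^2$ construction of the pure-jump martingale part from the L\'evy--It\^o decomposition (together with finiteness of $\int_{|x|\ge 1}|x|\,\nu(\dx)$ from Assumption $(\mathbb{M})$) preserves this, so $\E[L_t^d]=0$ as well. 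Consequently $\E[L_t]=bt$.

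Combining the two steps, $\E[L_t\mid\mathcal F_s]=L_s+b(t-s)$, which gives all three equivalences at once: martingale iff $b=0$, submartingale iff $b\ge 0$, supermartingale iff $b\le 0$. The only genuinely non-trivial ingredient is the claim that the purely discontinuous martingale part has zero mean; this is the step where one must be careful, because the small-jump compensated integral is defined only as an $L^2$-limit. Once Theorem \ref{poi-A-ch} is in hand this is routine, and the rest of the argument is essentially the observation that independent stationary increments convert the martingale property into a statement about $\E[L_1]$.
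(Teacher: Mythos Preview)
Your argument is correct, but it takes a small detour that the paper avoids. The paper's proof is a single line: the canonical decomposition \eqref{ssmmg-lp} already writes $L_t = bt + M_t$ with $M_t = \sqrt{c}\,W_t + \int_0^t\!\int_\R x(\mu^L-\nu^L)(\dsdx)$ being the sum of the continuous and purely discontinuous \emph{martingale} parts of the special semimartingale $L$; hence $L-b\cdot$ is itself a martingale, and the three assertions follow at once from the sign of $b$.

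You instead first use (L1)--(L2) to reduce the question to the sign of $\E[L_t]$, and then invoke the decomposition only to compute $\E[L_t]=bt$. This works, but it exploits merely the \emph{means} of the components $\sqrt{c}\,W$ and $L^d$, which is why you have to argue separately (via Theorem~\ref{poi-A-ch} and an $L^2$-limit) that $\E[L_t^d]=0$. The paper's route sidesteps this step entirely by using the full martingale property of the components already established in the semimartingale section. Your approach has the minor advantage of making the role of stationary independent increments explicit, but the paper's is shorter and needs no limiting argument.
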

\begin{proof}
The assertion follows immediately from the decomposition of a \lev
process with finite first moment into a finite variation process, a
continuous martingale and a pure-jump martingale, cf. equation
\eqref{ssmmg-lp}.
\end{proof}

\begin{prop}\label{exp-mg}
Let $L=\Lt$ be a \lev process with \lev triplet $(b,c,\nu)$, assume
that $\int_{|x|\geq1}\e^{ux}\nu(\dx)<\infty$, for $u\in\R$ and
denote by $\kappa$ the cumulant of $L_1$, i.e. $\kappa(u)=\log
\E[\e^{uL_1}]$. The process $M=(M_t)_\ott$, defined via
\begin{align*}
M_t = \frac{\e^{uL_t}}{\e^{t\kappa(u)}}
\end{align*}
is a martingale.
\end{prop}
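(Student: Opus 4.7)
The plan is to verify the martingale property directly by combining integrability (from the moment condition) with the stationary and independent increments of $L$. I will proceed in three steps.

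First, I would check integrability. The hypothesis $\int_{|x|\geq 1}\e^{ux}\nu(\dx)<\infty$ together with Proposition~\ref{moments} gives $\E[\e^{uL_t}]<\infty$ for every $t\in[0,T]$, so $M_t$ is well defined and integrable for each $t$. In particular, $\kappa(u)=\log\E[\e^{uL_1}]$ is a finite real number.

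Second, I would identify $\E[\e^{uL_t}]$ with $\e^{t\kappa(u)}$. The same functional-equation argument used in the sketch of the \lev--Khintchine-type result for $\E[\e^{iuL_t}]$ applies to the real-exponential moment once finiteness is in hand: setting $\phi(t)=\E[\e^{uL_t}]$ and using independence and stationarity of the increments, one gets
\begin{align*}
\phi(t+s)=\E[\e^{u(L_{t+s}-L_s)}\e^{uL_s}]=\E[\e^{uL_t}]\E[\e^{uL_s}]=\phi(t)\phi(s),
\end{align*}
with $\phi(0)=1$ and $\phi$ continuous in $t$ (a consequence of stochastic continuity plus dominated convergence, justified by $\phi$ being locally bounded under the moment hypothesis). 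The unique such solution is $\phi(t)=\e^{t\kappa(u)}$, so the normalizing factor in the definition of $M$ is precisely $\E[\e^{uL_t}]$.

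Third, I would verify the martingale property. For $0\le s\le t\le T$, writing $L_t=L_s+(L_t-L_s)$ and using that $L_t-L_s$ is independent of $\F_s$ (by (L1)) and distributed as $L_{t-s}$ (by (L2)),
\begin{align*}
\E[M_t\mid\F_s]
 &= \e^{-t\kappa(u)}\,\e^{uL_s}\,\E[\e^{u(L_t-L_s)}\mid\F_s]\\
 &= \e^{-t\kappa(u)}\,\e^{uL_s}\,\E[\e^{uL_{t-s}}]
  = \e^{-t\kappa(u)}\,\e^{uL_s}\,\e^{(t-s)\kappa(u)}
  = M_s,
\end{align*}
which is the martingale property. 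Adaptedness follows from the adaptedness of $L$.

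The main obstacle is really only the second step: ensuring that the relation $\E[\e^{uL_t}]=\e^{t\kappa(u)}$, which is obvious when $u$ is replaced by $iu$ (pure characteristic function), extends to real $u$. This needs the integrability hypothesis to justify both the multiplicative identity $\phi(t+s)=\phi(t)\phi(s)$ (no issue once everything is finite) and the continuity of $\phi$ used to select the exponential solution of the Cauchy equation; the rest of the proof is then a direct computation.
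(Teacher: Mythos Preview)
Your proof is correct and follows essentially the same route as the paper: invoke Proposition~\ref{moments} for integrability, use $\E[\e^{uL_t}]=\e^{t\kappa(u)}$, and then apply independence and stationarity of increments to compute the conditional expectation. The only difference is that you spell out the Cauchy functional equation argument to justify $\E[\e^{uL_t}]=\e^{t\kappa(u)}$, whereas the paper simply asserts this identity as a consequence of Proposition~\ref{moments} (implicitly relying on the analogue of the earlier result $\E[\e^{iuL_t}]=\e^{t\psi(u)}$); your extra care here is justified but not a different method.
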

\begin{proof}
Applying Proposition \ref{moments}, we get that
$\E[\e^{uL_t}]=\e^{t\kappa(u)}<\infty$, for all $\ott$. Now, for
$0\leq s\leq t$, we can re-write $M$ as
\begin{align*}
M_t = \frac{\e^{uL_s}}{\e^{s\kappa(u)}}\frac{\e^{u(L_t-L_s)}}{\e^{(t-s)\kappa(u)}}
    = M_s\frac{\e^{u(L_t-L_s)}}{\e^{(t-s)\kappa(u)}}.
\end{align*}
Using the fact that a \lev process has stationary and independent
increments, we can conclude
\begin{align*}
\E\Big[M_t\Big|\F_s\Big]
 &= M_s \E\Big[\frac{\e^{u(L_t-L_s)}}{\e^{(t-s)\kappa(u)}}\Big|\F_s\Big]
  = M_s \e^{(t-s)\kappa(u)}\e^{-(t-s)\kappa(u)}\\
 &= M_s. \qedhere
\end{align*}
\end{proof}

The \emph{stochastic exponential} $\mathcal{E}(L)$ of a \lev process
$L=\Lt$ is the solution $Z$ of the stochastic differential equation
\begin{equation}
\ud Z_t = Z_{t-}\ud L_t, \qquad Z_0=1,
\end{equation}
also written as
\begin{equation}
 Z = 1+Z_-\cdot L,
\end{equation}
where $F\cdot Y$ means the stochastic integral $\int_0^\cdot F_s\ud
Y_s$. The stochastic exponential is defined as
\begin{align}
\mathcal{E}(L)_t = \exp\left(L_t -\frac{1}{2}\langle L^c\rangle_t\right)
                   \prod_{0\leq s\leq t}\Big(1+\Delta L_s\Big)\e^{-\Delta L_s}.
\end{align}

\begin{rem}
The stochastic exponential of a \lev process that is a
\textit{martingale} is a \textit{local martingale} (cf.
\citeNP[Theorem I.4.61]{JacodShiryaev03}) and indeed a (true)
\textit{martingale} when working in a finite time horizon (cf.
\citeNP[Lemma 4.4]{Kallsen00}).
\end{rem}

The converse of the stochastic exponential is the \emph{stochastic
logarithm}, denoted $\Log X$; for a process $X=(X_t)_{0\le t\le T}$,
the stochastic logarithm is the solution of the stochastic
differential equation:
\begin{equation}  \label{eq:8}
  \Log X_t=\int_0^t \frac{\ud X_s}{X_{s-}},
\end{equation}
also written as
\begin{equation}
 \Log X = \frac{1}{X_-} \cdot X.
\end{equation}
Now, if $X$ is a positive process with $X_0=1$ we have for $\Log X$
\begin{equation} \label{eq:11a}
  \Log X = \log X+\frac{1}{2X^2_-} \cdot \langle X^c\rangle
         - \sum_{0\le s\le \cdot}
           \bigg(\log\Big(1+\frac{\Delta X_s}{X_{s-}}\Big)-\frac{\Delta X_s}{X_{s-}}\bigg);
\end{equation}
for more details see \citeN{KallsenShiryaev02} or
Jacod and Shiryaev \citeyear{JacodShiryaev03}.

\section{It\^o's formula}

We state a version of It\^o's formula directly for semimartingales,
since this is the natural framework to work into.

\begin{lem}
Let \prozess[X] be a real-valued semimartingale and $f$ a class
$C^2$ function on $\R$. Then, $f(X)$ is a semimartingale and we have
\begin{align}
f(X_t) &= f(X_0) + \int_0^t f'(X_{s-})\ud X_s
        + \frac12 \int_0^t f''(X_{s-})\ud \langle X^c\rangle_s\\
       &\quad \nonumber
        + \sum_{0\leq s\leq t} \Big(f(X_s)-f(X_{s-})-f'(X_{s-})\Delta X_s\Big),
\end{align}
for all $t\in[0,T]$; alternatively, making use of the random measure
of jumps, we have
\begin{align}
f(X_t) &= f(X_0) + \int_0^t f'(X_{s-})\ud X_s
        + \frac12 \int_0^t f''(X_{s-})\ud \langle X^c\rangle_s\\
       &\quad \nonumber
        + \int_0^t\int_\R\Big(f(X_{s-}+x)-f(X_{s-})-f'(X_{s-})x\Big)\mu^X(\dsdx).
\end{align}
\end{lem}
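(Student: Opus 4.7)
The plan is to prove this by the standard Riemann-sum-plus-Taylor argument, following e.g.\ Protter or Jacod--Shiryaev. First I would reduce to the case where $X$ is bounded: by localization (stopping at $\tau_n = \inf\{t : |X_t| \ge n\}$ and using that $f, f', f''$ are continuous, hence bounded on compact sets) it suffices to establish the identity on each $[\![0,\tau_n]\!]$. With $X$ bounded, every term in the claimed formula is well defined; in particular the jump sum converges absolutely, since $|f(X_s) - f(X_{s-}) - f'(X_{s-}) \Delta X_s| \le C |\Delta X_s|^2$ on the compact range of $X$, and $\sum_{s \le t} |\Delta X_s|^2 < \infty$ by the general theory of semimartingales.

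Next I would fix $t$ and a sequence of partitions $0 = t_0^n < t_1^n < \cdots < t_{k_n}^n = t$ with mesh tending to zero, and telescope
\begin{align*}
f(X_t) - f(X_0) = \sum_{i} \bigl[f(X_{t_i^n}) - f(X_{t_{i-1}^n})\bigr].
\end{align*}
A second-order Taylor expansion yields
\begin{align*}
f(X_{t_i^n}) - f(X_{t_{i-1}^n}) = f'(X_{t_{i-1}^n})(X_{t_i^n} - X_{t_{i-1}^n}) + \tfrac12 f''(X_{t_{i-1}^n})(X_{t_i^n} - X_{t_{i-1}^n})^2 + R_i^n,
\end{align*}
where $R_i^n$ is a remainder controlled by the modulus of continuity of $f''$. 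The first sum converges in probability to $\int_0^t f'(X_{s-}) \, \ud X_s$ by the very definition of the stochastic integral of a left-continuous predictable integrand against the semimartingale $X$.

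The main obstacle is the second-order sum, which must produce both the $\frac12 \int f''(X_{s-}) \, \ud \langle X^c\rangle_s$ term and the explicit jump sum. The route I would take is to split each increment $X_{t_i^n} - X_{t_{i-1}^n}$ as a large-jump part plus a small-oscillation part: fix $\varepsilon > 0$, let $S_1, S_2, \ldots$ enumerate the (finitely many on $[0,t]$) jumps of size $|\Delta X| > \varepsilon$, and observe that for $n$ large each such jump lies in a distinct partition interval. On those intervals the quadratic term contributes, in the limit, $\frac12 f''(X_{S_j-}) (\Delta X_{S_j})^2$, which combined with the corresponding $R_i^n$ gives exactly $f(X_{S_j}) - f(X_{S_j-}) - f'(X_{S_j-}) \Delta X_{S_j}$. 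On the remaining intervals, where all increments have size $O(\varepsilon) + o(1)$, the remainder $R_i^n$ is negligible and the quadratic sum converges to $\frac12 \int_0^t f''(X_{s-}) \, \ud [X,X]_s^{(\varepsilon)}$, where $[X,X]^{(\varepsilon)}$ is the quadratic variation stripped of the jumps above level $\varepsilon$. Letting $\varepsilon \downarrow 0$ and using $[X,X]_s = \langle X^c\rangle_s + \sum_{u \le s}(\Delta X_u)^2$ produces the desired continuous part $\frac12 \int f''(X_{s-}) \, \ud \langle X^c\rangle_s$ while the leftover small-jump quadratic contribution is precisely absorbed by the corresponding small-jump Taylor remainders. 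This justifies the first displayed formula.

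Finally, the second displayed formula is merely a rewriting of the first: by definition of the integral against the random measure of jumps given in \eqref{point}, for the Borel function $g(x) = f(X_{s-} + x) - f(X_{s-}) - f'(X_{s-}) x$ one has
\begin{align*}
\int_0^t \int_\R \bigl(f(X_{s-}+x) - f(X_{s-}) - f'(X_{s-}) x\bigr) \mu^X(\ud s, \ud x) = \sum_{0 \le s \le t} \bigl(f(X_s) - f(X_{s-}) - f'(X_{s-}) \Delta X_s\bigr),
\end{align*}
which is exactly the jump term obtained above. I would flag the passage from the pathwise finite-partition estimate to the $\langle X^c\rangle$ integral as the delicate step, since it relies on the simultaneous control of the small-jump quadratic variation and the Taylor remainder under a single limit.
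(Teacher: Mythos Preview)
Your sketch is the standard Riemann-sum-plus-Taylor argument (essentially the route taken in Protter and in Jacod--Shiryaev), and the outline is correct, including the localization, the $\sum|\Delta X_s|^2<\infty$ bound on the jump sum, the large-jump/small-jump splitting at level $\varepsilon$, and the identification of the second display as a rewriting via the jump measure. The paper itself, however, does not prove this lemma at all: its entire proof reads ``See Theorem I.4.57 in Jacod and Shiryaev (2003).'' So there is nothing to compare strategy-wise beyond noting that you are supplying exactly the argument the paper defers to the literature. Your self-flagged delicate step --- passing from the discrete quadratic sums to $\langle X^c\rangle$ while simultaneously controlling the small-jump Taylor remainders under the double limit $n\to\infty$, $\varepsilon\downarrow 0$ --- is indeed where the real work lies and would need the full uniform-in-$\varepsilon$ estimates spelled out in a complete proof.
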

\begin{proof}
See Theorem I.4.57 in \citeN{JacodShiryaev03}.
\end{proof}

\begin{rem}
An interesting account (and proof) of It\^o's formula for L\'evy
processes of \textit{finite variation} can be found in
\citeN[Chapter 4]{Kyprianou06}.
\end{rem}

\begin{lem}[Integration by parts]
Let  $X,Y$ be semimartingales. Then $XY$ is also a semimartingale
and
\begin{align}
 XY = \int X_-\ud Y + \int Y_-\ud X + [X,Y],
\end{align}
where the quadratic covariation of $X$ and $Y$ is given by
\begin{align}
 [X,Y] = \langle X^c,Y^c\rangle + \sum_{s\le\cdot}\Delta X_s \Delta Y_s.
\end{align}
\end{lem}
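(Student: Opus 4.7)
The plan is to derive integration by parts by polarization, applying the It\^o formula stated just above to the quadratic function $f(x)=x^2$. The point is that $XY$ can be written as a linear combination of squares, so all the ingredients are already on the table.

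First I would apply It\^o's formula with $f(x)=x^2$, $f'(x)=2x$, $f''(x)=2$ to each of the three semimartingales $X$, $Y$ and $X+Y$. For $X$ this yields
\begin{align*}
X_t^2 &= X_0^2 + 2\int_0^t X_{s-}\ud X_s + \int_0^t \ud\langle X^c\rangle_s \\
      &\quad + \sum_{0\le s\le t}\bigl(X_s^2 - X_{s-}^2 - 2X_{s-}\Delta X_s\bigr),
\end{align*}
and the jump sum collapses, using $X_s = X_{s-}+\Delta X_s$, to $\sum_{s\le t}(\Delta X_s)^2$. Analogous formulas hold for $Y_t^2$ and $(X_t+Y_t)^2$.

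Next I would invoke the polarization identity $2X_tY_t = (X_t+Y_t)^2 - X_t^2 - Y_t^2$ and subtract the three It\^o expansions. The drift-type and stochastic integral terms combine by linearity of the stochastic integral into $2\int_0^t X_{s-}\ud Y_s + 2\int_0^t Y_{s-}\ud X_s$. The continuous-variation part uses the bilinearity of $\langle\cdot,\cdot\rangle$, namely $\langle(X+Y)^c\rangle = \langle X^c\rangle + 2\langle X^c,Y^c\rangle + \langle Y^c\rangle$, which contributes $2\langle X^c,Y^c\rangle_t$. For the jump contribution, $(\Delta(X+Y)_s)^2 - (\Delta X_s)^2 - (\Delta Y_s)^2 = 2\Delta X_s\Delta Y_s$, giving $2\sum_{s\le t}\Delta X_s\Delta Y_s$. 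Dividing by $2$ yields exactly the claimed formula, with $[X,Y]=\langle X^c,Y^c\rangle + \sum_{s\le\cdot}\Delta X_s\Delta Y_s$. The fact that $XY$ is a semimartingale then follows because each term on the right-hand side is: the two stochastic integrals are semimartingales (indeed local martingales plus finite variation pieces inherited from $X$ and $Y$), and $[X,Y]$ has finite variation on compacts.

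The main point requiring care — rather than a deep obstacle — is justifying that the sum $\sum_{s\le t}\Delta X_s\Delta Y_s$ converges absolutely (or at least defines a process of finite variation), and that the polarization of $\langle\cdot^c\rangle$ into $\langle\cdot^c,\cdot^c\rangle$ is legitimate. The former follows from Kunita--Watanabe (or directly, since $\sum(\Delta X_s)^2$ and $\sum(\Delta Y_s)^2$ are finite on compacts for any semimartingale), and the latter is a standard property of the sharp bracket. Everything else is pure bookkeeping once the three It\^o expansions are written down.
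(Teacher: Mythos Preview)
Your polarization argument is correct and standard. The only caveat worth noting is that, as written, the derivation produces $X_tY_t = X_0Y_0 + \int_0^t X_{s-}\ud Y_s + \int_0^t Y_{s-}\ud X_s + [X,Y]_t$, so the initial value $X_0Y_0$ appears explicitly (the paper's statement suppresses it, presumably by convention on the indefinite integrals or on $[X,Y]_0$). You also implicitly use that $(X+Y)^c = X^c + Y^c$, which is true but not entirely trivial---it follows from the uniqueness of the continuous local martingale part of a semimartingale.

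As for comparison with the paper: the paper does not actually prove this lemma. Its proof reads, in full, ``See Corollary II.6.2 in Protter (2004) and Theorem I.4.52 in Jacod and Shiryaev (2003).'' So your proposal is strictly more than what the paper provides. Your route---deducing integration by parts from the one-dimensional It\^o formula via polarization---is elegant and self-contained within the paper's framework, since It\^o's formula has just been stated. One could object that in most textbook developments the logical order is reversed (integration by parts is established first, from the very definition of $[X,Y]$, and then used to prove It\^o's formula), so in that sense your argument is ``backwards''; but since the paper takes It\^o's formula as a black box from Jacod--Shiryaev, there is no circularity here.
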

\begin{proof}
See Corollary II.6.2 in \citeN{Protter04} and Theorem I.4.52 in
Jacod and Shiryaev \citeyear{JacodShiryaev03}.
\end{proof}

As a simple application of It\^o's formula for L\'evy processes, we
will work out the dynamics of the stochastic logarithm of a \lev
process.

Let \prozess be a \lev process with triplet ($b,c,\nu$) and $L_0=1$.
Consider the $C^2$ function $f:\R\rightarrow\R$ with $f(x)=\log|x|$;
then, $f'(x)=\frac{1}{x}$ and $f''(x)=-\frac{1}{x^2}$. Applying
It\^o's formula to $f(L)=\log|L|$, we get
\begin{align*}
\log|L_t| &= \log|L_0| + \int_0^t\frac1{L_{s-}}\ud L_s
           - \frac12\int_0^t\frac1{L^2_{s-}}\ud \langle L^c\rangle_s\\
 &\qquad   + \sum_{0\leq s\leq t}\Big(\log|L_s|-\log|L_{s-}|-\frac1{L_{s-}}\Delta L_s \Big)\\
\Leftrightarrow
\Log L_t &= \log|L_t| + \frac12\int_0^t\frac{\ud \langle L^c\rangle_s}{L^2_{s-}}
          - \sum_{0\leq s\leq t}\Big(\log\Big|\frac{L_s}{L_{s-}}\Big|-\frac{\Delta L_s}{L_{s-}} \Big).
\end{align*}
Now, making again use of the random measure of jumps of the process
$L$ and using also that $\ud \langle L^c\rangle_s=\ud \langle
\sqrt{c}W\rangle_s=c\ds$, we can conclude that
\begin{align*}
\Log L_t &= \log|L_t| + \frac{c}2 \int_0^t \frac{\ds}{L^2_{s-}}
          - \int_0^t\int_\R\Big(\log\Big|1+\frac{x}{L_{s-}}\Big|-\frac{x}{L_{s-}}\Big)\mu^L(\dsdx).
\end{align*}

\section{Girsanov's theorem}

We will describe a special case of \emph{Girsanov's theorem for
semimartingales}, where a \lev process remains a \emph{process with
independent increments} (PII) under the new measure. Here we will
restrict ourselves to a \textit{finite} time horizon, i.e.
$T\in[0,\ap)$.

Let $P$ and $\bar{P}$ be probability measures defined on the
filtered probability space ($\Omega, \F, \bF$). Two measures $P$ and
$\bar{P}$ are \emph{equivalent}, if $P(A)=0\Leftrightarrow
\bar{P}(A)=0$, for all $A\in\F$, and then one writes $P\sim\bar{P}$.

Given two equivalent measures $P$ and $\bar{P}$, there exists a
unique, positive, $P$-martingale $Z=(Z_t)_\ott$ such that
$Z_t=\E\big[\frac{\ud\bar{P}}{\ud P}\big|\F_t\big]$, $\forall\;
\ott$. $Z$ is called the \emph{density process} of $\bar{P}$ with
respect to $P$.

Conversely, given a measure $P$ and a positive $P$-martingale
$Z=(Z_t)_\ott$, one can define a measure $\bar{P}$ on
($\Omega,\F,\bF$) equivalent to $P$, using the Radon-Nikodym
derivative $\E\big[\frac{\ud \bar{P}}{\ud P}\big|\F_T\big]=Z_T$.

\begin{thm}\label{girsanov}
Let \prozess be a \lev process with triplet $(b,c,\nu)$ under $P$,
that satisfies Assumption $(\mathbb{M})$, cf. Remark \ref{EM}. Then,
$L$ has the canonical decomposition
\begin{align}\label{can-deco}
L_t=bt +\sqrt{c}W_t +\int_0^t\int_{\R}x(\mu^L-\nu^L)(\ud s,\ud x).
\end{align}
\begin{enumerate}[\emph{\textbf{(A1):}}]
\item Assume that $P\sim \bar{P}$ with density process $Z$. Then,
      there exist a deterministic process $\beta$ and a measurable
      non-negative deterministic process $Y$, satisfying
\begin{align}\label{Girsanov-moment}
 \int_0^t\int_{\R}|x\big(Y(s,x)-1\big)|\nu(\ud x)\ud s<\infty,
\end{align}
and
\begin{align*}
 \int_0^t \big( c\cdot\beta^2_s\big)\ud s<\infty,
\end{align*}
$\P$-a.s. for $\ott$; they are defined by the following formulae:
\begin{align}
\langle Z^c,L^c \rangle
 = \int_0^\cdot (c \cdot\beta_s\cdot Z_{s-})\ud s
\end{align}
and
\begin{align}\label{girsanov-Y}
 Y = M_{\mu^L}^P\left(\frac{Z}{Z_-}\Big|\widetilde{\mathcal{P}}\right).
\end{align}
\end{enumerate}
\begin{enumerate}[\emph{\textbf{(A2):}}]
\item Conversely, if $Z$ is a positive martingale of the form
\begin{align}\label{explicit-density}
 Z &= 
    \exp\Big[ \int_0^\cdot \beta_s\sqrt{c}\ud W_s
                    -\frac{1}{2}\int_0^\cdot \beta^2_s c \ud s\\
 &\qquad\qquad
                   +\int_0^\cdot\int_{\R} (Y(s,x)-1)(\mu^L-\nu^L)(\dsdx)\nonumber\\
 &\qquad\qquad
                   -\int_0^\cdot\int_{\R} (Y(s,x)-1-\ln(Y(s,x)))\mu^L(\dsdx)\Big].\nonumber
\end{align}
      then it defines a probability
      measure $\bar{P}$ on $(\Omega,\F,\bF)$, such that
      $P\sim\bar{P}$.
\end{enumerate}
\begin{enumerate}[\emph{\textbf{(A3):}}]
\item In both cases, we have that
      $\bar{W}=W-\int_0^\cdot\sqrt{c}\beta_s\ud s$
      is a $\bar{P}$-Brownian motion,
      $\bar{\nu}^L(\dsdx) = Y(s,x)\nu^L(\ud s,\ud x)$ is the
      $\bar{P}$-compensator of $\mu^L$ and $L$ has the
      following canonical decomposition under $\bar{P}$:
\begin{align}\label{can-deco-pt}
 L_t=\bar{b}t +\sqrt{c}\bar{W}_t
    +\int_0^t\int_{\R}x(\mu^L-\bar{\nu}^L)(\ud s,\ud x),
\end{align}
where
\begin{align}\label{drift-pt}
 \bar{b}t = bt +\int_0^tc\beta_s\ud s
               + \int_0^t\int_{\R}x\big(Y(s,x)-1\big)\nu^L(\dsdx).
\end{align}
\end{enumerate}
\end{thm}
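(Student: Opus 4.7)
The plan is to reduce this theorem to the general change-of-measure result for semimartingales with jumps (\citeN[III.3.24]{JacodShiryaev03}), specialised to the setting of processes with independent increments. Since the $P$-characteristics of $L$, namely $(b\cdot t,\ c\cdot t,\ \nu(\ud x)\,\ud s)$, are deterministic, restricting attention to those equivalent measures under which $L$ remains a PII forces the Girsanov parameters $\beta$ and $Y$ to be deterministic, as the statement demands.

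For part (A1) I would split the required data into its continuous and jump parts. On the continuous side, the Kunita-Watanabe decomposition of $Z^c$ with respect to $L^c = \sqrt{c}\,W$ yields a predictable process $\beta$ satisfying $\langle Z^c, L^c\rangle_\cdot = \int_0^\cdot c\beta_s Z_{s-}\,\ud s$, and the integrability requirement on $c\beta^2$ is inherited from $\langle Z^c\rangle<\infty$. On the jump side I invoke the Dol\'eans measure $M_{\mu^L}^P(\ud\omega,\dsdx) = P(\ud\omega)\,\nu(\ud x)\,\ud s$ of the Poisson random measure $\mu^L$; relative to it, the conditional expectation \eqref{girsanov-Y} produces the natural predictable version of the jump ratio $Z/Z_-$ along jumps of $L$, and the bound \eqref{Girsanov-moment} follows from $Z$ being a $P$-martingale combined with Assumption $(\mathbb{M})$. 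Deterministicness of $\beta$ and $Y$ then encodes exactly the preservation of the PII structure under $\bar P$.

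For part (A2) I would verify that the expression in \eqref{explicit-density} is precisely the Dol\'eans-Dade stochastic exponential $\mathcal{E}(N)$ of the $P$-local martingale
\begin{align*}
N \;=\; \int_0^\cdot \beta_s\sqrt{c}\,\ud W_s
\;+\; \int_0^\cdot\!\int_\R (Y(s,x)-1)(\mu^L-\nu^L)(\dsdx).
\end{align*}
Applying It\^o's formula to $\log Z$, exactly as in the computation of $\Log$ in the previous section, reproduces the three exponents displayed in \eqref{explicit-density}, so $Z = \mathcal{E}(N)$. By the remark after Proposition \ref{exp-mg}, $\mathcal{E}(N)$ is a nonnegative $P$-local martingale; positivity follows from $\Delta N = Y(\cdot,\Delta L)-1 > -1$, which holds whenever $Y>0$ on the support of $\mu^L$, and the integrability hypotheses on $\beta$ and $Y$ over the finite horizon upgrade it to a true martingale with $\E[Z_T]=1$. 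Hence $\bar P := Z_T\cdot P$ is a probability measure equivalent to $P$.

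For part (A3) I invoke the two standard consequences of Girsanov: $\bar W_t = W_t - \int_0^t\sqrt{c}\,\beta_s\,\ud s$ is a $\bar P$-Brownian motion (continuous Girsanov applied to $W$ with drift produced by $\beta$), and the $\bar P$-compensator of $\mu^L$ is $\bar\nu^L(\dsdx) = Y(s,x)\nu^L(\dsdx)$ (jump Girsanov, which scales the compensator by the predictable factor $Y$). Substituting both into \eqref{can-deco} and adding and subtracting $\int_0^t\!\int_\R x(Y(s,x)-1)\nu^L(\dsdx)$, so that the remaining jump integral is driven by the $\bar P$-compensated measure $\mu^L - \bar\nu^L$, produces \eqref{can-deco-pt} with drift $\bar b$ as in \eqref{drift-pt}. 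The main obstacle is the jump-side construction in (A1): defining $M_{\mu^L}^P(Z/Z_-\mid\widetilde{\mathcal P})$ rigorously requires the theory of random measures and predictable projections on $\Omega\times[0,T]\times\R$, which is the deepest ingredient of the semimartingale Girsanov theorem. A secondary subtlety in (A2) is upgrading $\mathcal{E}(N)$ from a local to a true martingale, which in the PII setting follows from the deterministic integrability of $\beta$ and $Y$ via exponential-moment estimates.
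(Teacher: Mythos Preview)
Your proposal is correct and takes essentially the same approach as the paper: both defer to the semimartingale Girsanov theorem in \citeN{JacodShiryaev03}. The paper's own proof is in fact a one-line citation to Theorems III.3.24, III.5.19 and III.5.35 there, so your sketch---Kunita--Watanabe for $\beta$, the Dol\'eans measure construction for $Y$, and identification of \eqref{explicit-density} as $\mathcal{E}(N)$---is considerably more detailed than what the paper itself provides.
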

\begin{proof}
Theorems III.3.24, III.5.19 and III.5.35 in \citeN{JacodShiryaev03}
yield the result.
\end{proof}

\begin{rem}
In \eqref{girsanov-Y} $\widetilde{\mathcal P}={\mathcal
P}\otimes\mathcal B(\R) $ is the $\sigma$-field of predictable sets
in $\widetilde\Omega=\Omega\times[0,T]\times\R$ and
$M_{\mu^L}^P=\mu^L(\omega; \ud t,\ud x)P(\ud\omega)$ is the positive
measure on $(\Omega\times[0,T]\times\R,\mathcal F\otimes \mathcal
B([0,T])\otimes \mathcal B(\R))$ defined by
\begin{equation}  \label{eq:32}
  M_{\mu^L}^P(W)=E(W*\mu^L)_T,
\end{equation}
for measurable nonnegative functions $W=W(\omega;t,x)$ given on
$\Omega\times[0,T]\times \R$. Now, the \emph{conditional
expectation} $M_{\mu^L}^P\Big(\frac{\textstyle Z}{\textstyle
Z_-}|\widetilde{\mathcal P}\Big)$ is, by definition, the
$M_{\mu^L}^P$-a.s. unique $\widetilde{\mathcal P}$-measurable
function $Y$ with the property
\begin{equation}\label{eq:33}
  M_{\mu^L}^P\Big(\frac{Z}{Z_-} U\Big) = M_{\mu^L}^P (YU),
\end{equation}
for all nonnegative $\widetilde{\mathcal P}$-measurable functions
$U=U(\omega; t,x)$.
\end{rem}

\begin{rem}
Notice that from condition \eqref{Girsanov-moment} and assumption
($\mathbb M$), follows that $L$ has finite first moment under $\P$
as well, i.e.
\begin{equation}
\Et|L_t|<\infty, \;\;\; \text{ for all } \; \ott.
\end{equation}
Verification follows from Proposition \ref{moments} and direct
calculations.
\end{rem}

\begin{rem}
In general, $L$ is \emph{not necessarily} a \lev process under the
measure $\bar{P}$; this depends on the tuple ($\beta,Y$). The
following cases exist.
\begin{description}
\item[(G1)] if ($\beta,Y$) are deterministic and independent
            of time, then $L$ remains a \emph{L\'evy} process under
            $\bar{P}$; its triplet is ($\bar{b},c,Y\cdot\nu$).
\item[(G2)] if ($\beta,Y$) are deterministic but depend on
            time, then $L$ becomes a process with independent
            (but not stationary) increments under $\bar{P}$,
            often called an \emph{additive} process.
\item[(G3)] if ($\beta,Y$) are neither deterministic nor
            independent of time, then we just know that $L$ is
            a \textit{semimartingale} under $\bar{P}$.
\end{description}
\end{rem}

\begin{rem}
Notice that $c$, the diffusion coefficient, and $\mu^L$, the random
measure of jumps of $L$, did not change under the change of measure
from $P$ to $\P$. That happens because $c$ and $\mu^L$ are
\textit{path properties} of the process and do not change under an
equivalent change of measure. Intuitively speaking, \emph{the paths
do not change, the probability of certain paths occurring changes}.
\end{rem}

\begin{example}\label{ex-girsanov}
Assume that $L$ is a \lev process with canonical decomposition
\eqref{can-deco} under $P$. Assume that $P\sim\P$ and the density
process is
\begin{align}\label{RN-density}
Z_t &= \exp\Big[\beta\sqrt{c} W_{t}
        + \int_{0}^{t}\int_{\R}\alpha x(\mu^{L}-\nu^{L})(\dsdx)\\\nonumber
 &\qquad\qquad
        -\Big(\frac{c\beta^2}{2}
        +\int_{\R}(\e^{\alpha x}-1-\alpha x)\nu(\ud x)\Big)t
      \Big],
\end{align}
where $\beta\in\R_{\geqslant0}$ and $\alpha\in\R$ are constants.

Then, comparing \eqref{RN-density} with \eqref{explicit-density}, we
have that the tuple of functions that characterize the change of
measure is ($\beta,Y)=(\beta,f$), where $f(x)=\e^{\alpha x}$.
Because $(\beta,f)$ are deterministic and independent of time, $L$
remains a \lev process under $\P$, its \lev triplet is
($\bar{b},c,f\nu$) and its canonical decomposition is given by
equations \eqref{can-deco-pt} and \eqref{drift-pt}.
\end{example}

Actually, the change of measure of the previous example corresponds
to the so-called \emph{Esscher transformation} or \emph{exponential
tilting}. In chapter 3 of \citeN{Kyprianou06}, one can find a
significantly easier proof of Girsanov's theorem for \lev processes
for the special case of the Esscher transform. Here, we reformulate
the result of example \ref{ex-girsanov} and give a complete proof
(inspired by \citeNP{EberleinPapapantoleon05b}).

\begin{prop}
Let $L=\Lt$ be a \lev process with canonical decomposition
(\ref{can-deco}) under $P$ and assume that $\E[\e^{uL_t}]<\infty$
for all $u\in[-p,p]$, $p>0$. Assume that $P\sim\P$ with density
process \prozess[Z], or conversely, assume that $\P$ is defined via
the Radon-Nikodym derivative $\frac{\ud\P}{\ud P}=Z_T$; here, we
have that
\begin{align}
Z_t
 &= \frac{\e^{\beta L_t^c}\e^{\alpha L_t^d}}{\E[\e^{\beta L_t^c}]\E[\e^{\alpha L_t^d}]}
\end{align}
for $\beta\in\R$ and $|\alpha|<p$. Then, $L$ remains a \lev process
under $\P$, its \lev triplet is ($\bar{b},c,\bar\nu$), where
$\bar\nu=f\cdot\nu$ for $f(x)=\e^{\alpha x}$, and its canonical
decomposition is given by the following equations
\begin{align}
L_t = \bar{b}t + \sqrt{c}\bar{W}_t
     + \int_0^t\int_{\R}x(\mu^L-\bar{\nu}^L)(\ud s,\ud x),
\end{align}
and
\begin{align}
\bar{b} = b + \beta c + \int_{\R}x\big(\e^{\alpha x}-1\big)\nu(\ud x).
\end{align}
\end{prop}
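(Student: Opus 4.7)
The plan is to verify directly that $L$ has characteristic function $\e^{t\bar\psi(u)}$ under $\P$, where $\bar\psi$ is the \lev exponent corresponding to the triplet $(\bar b, c, \bar\nu)$, and then upgrade this to the \lev property via a conditioning argument; the canonical decomposition under $\P$ then follows from applying the \lev-It\^o decomposition to the new triplet.

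First I would verify that $Z$ is a genuine $P$-martingale, so that $\P$ is a bona fide equivalent measure. By the \lev-It\^o decomposition the continuous part $L^c=\sqrt{c}W$ and the pure jump martingale $L^d$ are independent. The hypothesis $|\alpha|<p$ combined with Proposition \ref{moments} gives $\E[\e^{\alpha L^d_t}]<\infty$, and Proposition \ref{exp-mg} applied separately to $L^c$ (with cumulant $\beta^2 c/2$) and to $L^d$ (with cumulant $\int_\R(\e^{\alpha x}-1-\alpha x)\nu(\dx)$) shows that each factor of $Z$ is a $P$-martingale; their independence then yields that the product $Z$ itself is a $P$-martingale.

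Next I would compute $\Et[\e^{iuL_t}]=\E[Z_t\e^{iuL_t}]$. Writing $L_t=bt+L^c_t+L^d_t$ and using the independence of $L^c$ and $L^d$, the expectation factorizes as
\begin{align*}
\Et[\e^{iuL_t}]
 = \e^{iubt}\,\frac{\E[\e^{(\beta+iu)L^c_t}]}{\E[\e^{\beta L^c_t}]}\cdot
              \frac{\E[\e^{(\alpha+iu)L^d_t}]}{\E[\e^{\alpha L^d_t}]}.
\end{align*}
The Brownian factor is an elementary Gaussian computation yielding $\exp(iu\beta c t - u^2 c t/2)$. For the jump factor I would invoke the analytic extension of the \lev--Khintchine exponent of $L^d$ to the strip $\{z\in\C:|\mathop{\mathrm{Re}}z|<p\}$, giving
\begin{align*}
\E[\e^{z L^d_t}] = \exp\Big(t\int_\R(\e^{zx}-1-zx)\nu(\dx)\Big);
\end{align*}
inserting $z=\alpha+iu$ and $z=\alpha$ and subtracting collapses the jump factor to $\exp\bigl(t\int_\R[\e^{\alpha x}(\e^{iux}-1)-iux]\nu(\dx)\bigr)$. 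Collecting the three pieces and using $\bar\nu(\dx)=\e^{\alpha x}\nu(\dx)$, the exponent rearranges into the canonical form $iu\bar b - u^2 c/2 + \int_\R(\e^{iux}-1-iux)\bar\nu(\dx)$, and matching drift terms produces the announced $\bar b = b + \beta c + \int_\R x(\e^{\alpha x}-1)\nu(\dx)$.

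Finally, to conclude that $L$ is a \lev process under $\P$ (and not merely that its $\P$-marginals are right), I would observe that for $0\le s<t$ the ratio $Z_t/Z_s$ is a deterministic function of the increments $L^c_t-L^c_s$ and $L^d_t-L^d_s$; by stationarity and independence of these increments under $P$, it is independent of $\F_s$ and equal in law to $Z_{t-s}$. The Bayes rule for equivalent measures then gives
\begin{align*}
\Et\bigl[\e^{iu(L_t-L_s)}\,\big|\,\F_s\bigr]
 = \E\bigl[(Z_t/Z_s)\e^{iu(L_t-L_s)}\,\big|\,\F_s\bigr]
 = \E\bigl[Z_{t-s}\e^{iuL_{t-s}}\bigr]
 = \e^{(t-s)\bar\psi(u)},
\end{align*}
which is deterministic; this simultaneously yields stationarity and independence of increments under $\P$, while stochastic continuity is inherited from $P$ by equivalence of measures. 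The main obstacle I expect is the analytic continuation of the \lev--Khintchine exponent needed to evaluate $\E[\e^{(\alpha+iu)L^d_t}]$: although intuitively clear, it requires justifying holomorphicity in the strip $|\mathop{\mathrm{Re}}z|<p$ via dominated convergence and the moment hypothesis, rather than a direct citation of an earlier result.
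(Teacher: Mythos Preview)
Your proposal is correct and follows essentially the same route as the paper: verify that $Z$ is a genuine $P$-martingale via Proposition~\ref{exp-mg}, compute the characteristic function of $L$ under $\P$ by factorizing through the independence of $L^c$ and $L^d$, and confirm the \lev property under $\P$ by exploiting that $Z_t/Z_s$ is a function of the increments. The differences are cosmetic: the paper establishes stationarity and independence of increments \emph{before} computing the characteristic function, using indicator functions of events rather than your conditional characteristic function argument via the Bayes rule (your version is arguably cleaner, since one computation gives both properties at once); and the paper closes by checking explicitly that $\bar\nu=\e^{\alpha x}\nu$ is a \lev measure, which you omit but is immediate from $|\alpha|<p$. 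Your concern about analytic continuation of the exponent is legitimate but not a genuine obstacle: the paper simply cites Theorem~25.17 in Sato, which guarantees that the moment generating function exists and agrees with the extended \lev--Khintchine formula on the strip $\{z:\Re z\in[-p,p]\}$.
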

\begin{proof}
Firstly, using Proposition \ref{exp-mg}, we can immediately deduce
that $Z$ is a positive $P$-martingale; moreover, $Z_0=1$. Hence, $Z$
serves as a density process.

Secondly, we will show that $L$ has independent and stationary
increments under $\P$. Using that $L$ has independent and stationary
increments under $P$ and that $Z$ is a $P$-martingale, we arrive at
the following helpful conclusions: for any $B\in\mathcal B(\R)$,
$F_s\in\F_s$ and $0\le s<t\le T$
\begin{enumerate}
 \item $1_{\{L_t-L_s\in B\}}\frac{Z_t}{Z_s}$ is independent of $1_{\{F_s\}}Z_s$ and of $Z_s$;
 \item $E[Z_s]=1$.
\end{enumerate}
Then, we have that
\begin{align*}
\P(\{L_t-L_s\in B\}\cap F_s)
 &= \E\Big[1_{\{L_t-L_s\in B\}}1_{\{F_s\}}Z_t\Big]\\
 &= \E\Big[1_{\{L_t-L_s\in B\}}\frac{Z_t}{Z_s}\Big]\E\big[1_{\{F_s\}}Z_s\big]\\
 &= \E\Big[1_{\{L_t-L_s\in B\}}\frac{Z_t}{Z_s}\Big]\E[Z_s]\E\big[1_{\{F_s\}}Z_s\big]\\
 &= \E\Big[1_{\{L_t-L_s\in B\}}Z_t\Big]\E\big[1_{\{F_s\}}Z_s\big]\\
 &= \P(\{L_t-L_s\in B\})\P(F_s)
\end{align*}
which yields the independence of the increments. Similarly,
regarding the stationarity of the increments of $L$ under $\P$, we
have that
\begin{align*}
\P(\{L_t-L_s\in B\})
 &= \E\big[1_{\{L_t-L_s\in B\}}Z_t\big]\\
 &= \E\big[1_{\{L_t-L_s\in B\}}\frac{Z_t}{Z_s}\big]\E\big[Z_s\big]\\
 &= \E\Big[1_{\{L_t-L_s\in B\}}
    \frac{\e^{\alpha(L_t^c-L_s^c)}\e^{\beta(L_t^d-L_s^d)}}
    {\E[\e^{\alpha(L_t^c-L_s^c)}\e^{\beta(L_t^d-L_s^d)}]}\Big]\\
 &= \E\Big[1_{\{L_t-L_s\in B\}}
    \frac{\e^{\alpha L_{t-s}^c+\beta L_{t-s}^d}}{\E[\e^{\alpha L_{t-s}^c+\beta L_{t-s}^d}]}\Big]\\
 &= \E\big[1_{\{L_{t-s}\in B\}}Z_{t-s}\big]\\
 &= \P(\{L_{t-s}\in B\})
\end{align*}
which yields the stationarity of the increments.

Thirdly, we determine the characteristic function of $L$ under $\P$,
which also yields the triplet and canonical decomposition. Applying
Theorem 25.17 in \citeN{Sato99}, the moment generating function
$M_{L_t}$ of $L_t$ exists for $u\in\mathbb C$ with $\Re u\in[-p,p]$.
We get
\begin{align*}
\bar{\E}\big[\e^{zL_t}\big]
 &= \E\big[\e^{zL_t}Z_t\big]
  = \E\bigg[\frac{\e^{zL_t}\e^{\beta L_t^c}\e^{\alpha L_t^d}}{\E[\e^{\beta L_t^c}]\E[\e^{\alpha L_t^d}]}\bigg]
  = \frac{\E[\e^{zbt}\e^{(z+\beta)L_t^c}\e^{(z+\alpha) L_t^d}]}{\E[\e^{\beta L_t^c}]\E[\e^{\alpha L_t^d}]}\\
 &= \exp\left(t\Big[zb + \frac{(z+\beta)^2c}{2}
   + \int_{\R}\big(\e^{(z+\alpha) x}-1-(z+\alpha)x\big)\nu(\ud x)\right.\\
 &\qquad\qquad\left.- \frac{\beta^2c}{2} - \int_{\R}\big(\e^{\alpha x}-1-\alpha x\big)\nu(\ud x)\Big]\right)\\
 &= \exp\left( t \Big[ z\big(b + \beta c + \int_{\R}x\big(\e^{\alpha x}-1\big)\nu(\ud x)\big)
                       + \frac{z^2c}{2} \right.\\
 &\qquad\qquad \left.  + \int_{\R}\big(\e^{zx}-1-zx\big)\e^{\alpha x}\nu(\ud x) \Big]\right)\\
 &= \exp\left(t\Big[ z\bar b + \frac{z^2c}2 + \int_\R\big(\e^{zx}-1-zx\big)\bar\nu(\dx)\Big]\right).
\end{align*}

Finally, the statement follows by proving that $\bar\nu(\ud
x)=\e^{\alpha x}\nu(\ud x)$ is a \lev measure, i.e.
$\int_{\R}(1\wedge x^2)\e^{\alpha x}\nu(\ud x)<\ap$. It suffices to
note that
\begin{equation}
 \int_{|x|\leq1}x^2\e^{\alpha x}\nu(\ud x)
 \leq C\int_{|x|\leq1}x^2\nu(\ud x) <\ap,
\end{equation}
where $C$ is a positive constant, because $\nu$ is a \lev measure;
the other part follows from the assumptions, since $|\alpha|<p$.
\end{proof}

\begin{rem}
Girsanov's theorem is a very powerful tool, widely used in
mathematical finance. In the second part, it will provide the link
between the `real-world' and the `risk-neutral' measure in a
\lev-driven asset price model. Other applications of Girsanov's
theorem allow to simplify certain valuation problems, cf. e.g.
\citeN{Papapantoleon06} and references therein.
\end{rem}

\section{Construction of L\'evy processes}

Three popular methods to construct a \lev process are described
below.

\begin{description}
 \item[(C1)] Specifying a \emph{\lev triplet}; more specifically, whether there
             exists a Brownian component or not and what is the \lev
             measure. Examples of \lev process constructed this way
             include the standard Brownian motion, which has \lev triplet
             $(0,1,0)$ and the \lev jump-diffusion, which has \lev triplet
             $(b,\sigma^2,\lambda F)$.
 \item[(C2)] Specifying an \emph{infinitely divisible} random variable as the density
             of the increments at time scale 1 (i.e. $L_1$). Examples of \lev
             process constructed this way include the standard Brownian motion,
             where $L_1\sim\text{Normal}(0,1)$ and the normal inverse Gaussian
             process, where $L_1\sim\text{NIG}(\alpha,\beta,\delta,\mu)$.
 \item[(C3)] \emph{Time-changing} Brownian motion with an independent increasing \lev
             process. Let $W$ denote the standard Brownian motion; we can
             construct a \lev process by `replacing' the (calendar) time $t$ by an
             independent increasing \lev process $\tau$, therefore $L_t:=W_{\tau(t)}$, $\ott$.
             The process $\tau$ has the useful -- in Finance -- interpretation as
             `business time'. Models constructed this way include the normal inverse
             Gaussian process, where Brownian motion is time-changed with the inverse
             Gaussian process and the variance gamma process, where Brownian motion
             is time-changed with the gamma process.
\end{description}

Naturally, some processes can be constructed using more than one
methods. Nevertheless, each method has some distinctive advantages
which are very useful in applications. The advantages of specifying
a triplet (C1) are that the characteristic function and the pathwise
properties are known and allows the construction of a rich variety
of models; the drawbacks are that parameter estimation and
simulation (in the infinite activity case) can be quite involved.
The second method (C2) allows the easy estimation and simulation of
the process; on the contrary the structure of the paths might be
unknown. The method of time-changes (C3) allows for easy simulation,
yet estimation might be quite difficult.

\section{Simulation of L\'evy processes}\label{simulation}

We shall briefly describe simulation methods for \lev processes. Our
attention is focused on finite activity \lev processes (i.e. \lev
jump-diffusions) and some special cases of infinite activity \lev
processes, namely the normal inverse Gaussian and the variance gamma
processes. Several speed-up methods for the Monte Carlo simulation
of \lev processes are presented in \citeN{Webber05}.

Here, we do not discuss simulation methods for random variables with
known density; various algorithms can be found in \citeN{Devroye86},
also available online at
\texttt{http://cg.scs.carleton.ca/~luc/rnbookindex.html}.

\subsection{Finite activity}

Assume we want to simulate the \lev jump-diffusion
\begin{align*}
L_{t}= b t + \sigma W_{t}+ \sum_{k=1}^{N_{t}} J_{k}
\end{align*}
where $N_t\sim\text{Poisson}(\lambda t)$ and $J\sim F(\dx)$. $W$
denotes a standard Brownian motion, i.e.
$W_t\sim\text{Normal}(0,t)$.

We can simulate a discretized trajectory of the \lev jump-diffusion
$L$ at fixed time points $t_1,\ldots,t_n$ as follows:
\begin{itemize}
 \item generate a standard normal variate and transform it into a normal
       variate, denoted $G_i$, with variance $\sigma\Delta t_i$,
       where $\Delta t_i=t_i-t_{i-1}$;
 \item generate a Poisson random variate $N$ with parameter $\lambda T$;
 \item generate $N$ random variates $\tau_k$ uniformly distributed in
       $[0,T]$; these variates correspond to the jump times;
 \item simulate the law of jump size $J$, i.e. simulate random variates
       $J_k$ with law $F(\dx)$.
\end{itemize}
The discretized trajectory is
\begin{align*}
L_{t_i}= bt_i + \sum_{j=1}^{i} G_j+ \sum_{k=1}^{N} 1_{\{\tau_k<t_i\}}J_{k}.
\end{align*}

\subsection{Infinite activity}

The variance gamma and the normal inverse Gaussian process can be
easily simulated because they are time-changed Brownian motions; we
follow \citeN{ContTankov03} closely. A general treatment of
simulation methods for infinite activity \lev processes can be found
in \citeN{ContTankov03} and \citeN{Schoutens03}.

Assume we want to simulate a normal inverse Gaussian (NIG) process
with parameters $\alpha,\beta,\delta,\mu$; cf. also section
\ref{NIG-section}. We can simulate a discretized trajectory at fixed
time points $t_1,\ldots,t_n$ as follows:
\begin{itemize}
 \item simulate $n$ independent inverse Gaussian variables $I_i$ with
       parameters $(\delta\Delta t_i)^2$ and $\alpha^2-\beta^2$,
       where $\Delta t_i=t_i-t_{i-1}$, $i=1,\ldots,n$;
 \item simulate $n$ i.i.d. standard normal variables $G_i$;
 \item set $\Delta L_{i} = \mu\Delta t_i + \beta I_i + \sqrt{I_i}G_i$.
\end{itemize}
The discretized trajectory is
\begin{align*}
L_{t_i}= \sum_{k=1}^i \Delta L_{k}.
\end{align*}

Assume we want to simulate a variance gamma (VG) process with
parameters $\sigma,\theta,\kappa$; we can simulate a discretized
trajectory at fixed time points $t_1,\ldots,t_n$ as follows:
\begin{itemize}
 \item simulate $n$ independent gamma variables $\Gamma_i$ with
       parameter $\frac{\Delta t_i}{\kappa}$
 \item set $\Gamma_i=\kappa\Gamma_i$;
 \item simulate $n$ standard normal variables $G_i$;
 \item set $\Delta L_{i} = \theta \Gamma_i + \sigma\sqrt{\Gamma_i}G_i$.
\end{itemize}
The discretized trajectory is
\begin{align*}
L_{t_i}= \sum_{k=1}^i \Delta L_{k}.
\end{align*}

\part{Applications in Finance}

\section{Asset price model}

We describe an asset price model driven by a \lev process, both
under the `real' and under the `risk-neutral' measure. Then, we
present an informal account of market incompleteness.

\subsection{Real-world measure}

Under the real-world measure, we model the asset price process as
the exponential of a \lev process, that is
\begin{align}
S_{t}=S_{0}\exp L_{t}, \quad \ott,
\end{align}
where, $L$ is the \lev process whose infinitely divisible
distribution has been estimated from the data set available for the
particular asset. Hence, the log-returns of the model have
independent and stationary increments, which are distributed --
along time intervals of specific length, e.g. 1 -- according to an
infinitely divisible distribution $\mathcal L(X)$, i.e. $L_1\eqlaw
X$.

Naturally, the path properties of the process $L$ carry over to $S$;
if, for example, $L$ is a pure-jump \lev process, then $S$ is also a
pure-jump process. This fact allows us to capture, up to a certain
extent, the microstructure of price fluctuations, even on an
intraday time scale.

An application of It\^o's formula yields that $S=(S_t)_\ott$ is the
solution of the stochastic differential equation
\begin{align}\label{SDE-L}
\ud S_t=S_{t-}\Big(\ud L_t + \frac{c}{2}\ud t + \int_\R (\e^{x}-1-x)\mu^L(\dt,\dx)\Big).
\end{align}
We could also specify $S$ by replacing the Brownian motion in the
Black--Scholes SDE by a \lev process, i.e. via
\begin{align}\label{SDE-BM}
\ud S_t=S_{t^-}\ud L_t,
\end{align}
whose solution is the stochastic exponential
\begin{align}
S_t=S_0\mathcal{E}(L_t).
\end{align}
The second approach is unfavorable for financial applications,
because ($a$) the asset price can take negative values, unless jumps
are restricted to be larger than $-1$, i.e.
$\text{supp}(\nu)\subset[-1,\infty)$, and ($b$) the distribution of
log-returns is not known. Of course, in the special case of the
Black--Scholes model the two approaches coincide.

\begin{rem}
The two modeling approaches are nevertheless closely related and, in
some sense, complementary of each other. One approach is suitable
for studying the distributional properties of the price process and
the other for investigating the martingale properties. For the
connection between the natural and stochastic exponential for \lev
processes, we refer to Lemma A.8 in \citeN{GollKallsen00}.
\end{rem}

The fact that the price process is driven by a \lev process, makes
the market, in general, incomplete; the only exceptions are the
markets driven by the Normal (Black-Scholes model) and Poisson
distributions. Therefore, there exists a large set of equivalent
martingale measures, i.e. candidate measures for risk-neutral
valuation.

\citeN{EberleinJacod97} provide a thorough analysis and
characterization of the set of equivalent martingale measures for
\lev-driven models. Moreover, they prove that the range of option
prices for a convex payoff function, e.g. a call option, under all
possible equivalent martingale measures spans the whole no-arbitrage
interval, e.g. $[(S_0- K\e^{-rT})^+, S_0]$ for a European call
option with strike $K$. \citeN{Selivanov05} discusses the existence
and uniqueness of martingale measures for exponential \lev models in
finite and infinite time horizon and for various specifications of
the no-arbitrage condition.

The \lev market can be completed using particular assets, such as
moment derivatives (e.g. variance swaps), and then there exists a
\textit{unique} equivalent martingale measure; see
Corcuera, Nualart, and Schoutens
\citeyear{CorcueraNualartSchoutens05,CorcueraNualartSchoutens05b}.
For example, if an asset is driven by a \lev jump-diffusion
\begin{align}
L_{t}= b t + \sqrt{c} W_{t}+ \sum_{k=1}^{N_{t}} J_{k}
\end{align}
where $J_k\equiv\alpha$ $\forall k$, then the market can be
completed using only variance swaps on this asset; this example will
be revisited in section \ref{mark-inc}.

\subsection{Risk-neutral measure}\label{RNM}

Under the risk neutral measure, denoted by $\P$, we model the asset
price process as an exponential \lev process
\begin{align}\label{levyproc}
S_{t}=S_{0}\exp L_{t}
\end{align}
where the \lev process $L$ has the triplet $(\bar b,\bar c,\bar\nu)$
and satisfies Assumptions ($\mathbb{M}$) (cf. Remark \ref{EM}) and
($\mathbb{EM}$) (see below).

The process $L$ has the canonical decomposition
\begin{align}
 L_{t}= \bar bt +\sqrt{\bar c} \bar W_{t}
      + \int_{0}^{t}\int_{\R} x(\mu^{L}-\bar\nu^{L})(\ud s,\ud x)
\end{align}
where $\bar W$ is a $\P$-Brownian motion and $\bar\nu^{L}$ is the
$\P$-compensator of the jump measure $\mu^{L}$.

Because we have assumed that $\P$ is a risk neutral measure,  the
asset price has mean rate of return $\mu\triangleq r-\delta$ and the
discounted and re-invested process $(\e^{(r-\delta)t}S_{t})_\ott$,
is a martingale under $\P$. Here $r\geq0$ is the (domestic)
risk-free interest rate, $\delta\geq0$ the continuous dividend yield
(or foreign interest rate) of the asset. Therefore, the drift term
$\bar b$ takes the form
\begin{align}\label{drift}
\bar b = r -\delta -\frac{\bar c}{2}
   - \int_{\R}(\e^{x}-1-x)\bar\nu(\ud x);
\end{align}
see \citeN{EberleinPapapantoleonShiryaev06} and
\citeN{Papapantoleon06} for all the details.

\begin{assuem}
We assume that the \lev process $L$ has finite first exponential
moment, i.e.
\begin{equation}
\bar\E[\e^{L_t}]<\infty.
\end{equation}
\end{assuem}

There are various ways to choose the martingale measure such that it
is equivalent to the real-world measure.  We refer to
\citeN{GollRueschendorf01} for a unified exposition -- in terms of
$f$-divergences -- of the different methods for selecting an
equivalent martingale measure (EMM). Note that, some of the proposed
methods to choose an EMM preserve the \lev property of log-returns;
examples are the Esscher transformation and the minimal entropy
martingale measure (cf. \citeNP{EscheSchweizer05}).

The market practice is to consider the choice of the martingale
measure as the result of a calibration to market data of vanilla
options. Hakala and Wystup \citeyear{HakalaWystup02b} describe the
calibration procedure in detail. \citeANP{ContTankov04}
\citeyear{ContTankov04,ContTankov05} and \citeN{BelomestnyReiss05}
present numerically stable calibration methods for \lev driven
models.

\subsection{On market incompleteness}\label{mark-inc}

In order to gain a better understanding of why the market is
incomplete, let us make the following observation. Assume that the
price process of a financial asset is modeled as an exponential \lev
process under both the real and the risk-neutral measure. Assume
that these measures, denoted $P$ and $\P$, are equivalent and denote
the triplet of the \lev \proc under $P$ and $\P$ by $(b,c,\nu)$ and
$(\bar b,\bar c,\bar\nu)$ respectively.

Now, applying Girsanov's theorem we get that these triplets are
related via $\bar c=c$, $\bar\nu=Y\cdot\nu$ and
\begin{align}\label{inc-G}
 \bar b = b + c\beta + x(Y-1)*\nu,
\end{align}
where $(\beta,Y)$ is the tuple of functions related to the density
process. On the other hand, from the martingale condition we get
that
\begin{align}\label{inc-M}
 \bar b = r -\frac{\bar c}{2} - (\e^{x}-1-x)*\bar\nu.
\end{align}
Equating \eqref{inc-G} and \eqref{inc-M} and using $c=\bar c$ and
$\nu=Y\cdot\bar\nu$, we have that
\begin{align}\label{inc-eq}
0 &= b + c\beta + x(Y-1)*\nu - r +\frac{\bar c}{2} + (\e^{x}-1-x)*\bar\nu
 \nonumber\\\Leftrightarrow
0 &= b - r + c(\beta+\frac12) + \big((\e^x-1)Y-x\big)*\nu;
\end{align}
therefore, we have \textit{one} equation but \textit{two} unknown
parameters, $\beta$ and $Y$ stemming from the change of measure.
Every \textit{solution} tuple $(\beta,Y)$ of equation \eqref{inc-eq}
corresponds to a different \textit{equivalent martingale measure},
which explains why the market is \textit{not complete}. The tuple
$(\beta,Y)$ could also be termed the tuple of \textit{`market price
of risk'}.

\begin{example}[Black--Scholes model]\label{exam-BM}
Let us consider the Black--Scholes model, where the driving process
is a Brownian motion with drift, i.e. $L_t=bt+\sqrt c W_t$. Then,
equation \eqref{inc-eq} has a \textit{unique} solution, namely
\begin{equation}\label{inc-BM}
 \beta = \frac{r-b}{c} - \frac12,
\end{equation}
the martingale measure is unique and the market is complete. We can
also easily check that plugging \eqref{inc-BM} into \eqref{inc-G},
we recover the martingale condition \eqref{inc-M}.
\end{example}

\begin{rem}
The quantity $\beta$ in \eqref{inc-BM} is nothing else than the
so-called \textit{market price of risk}. The difference from the
quantity often encountered in textbooks, i.e. $\frac{r-\mu}{c}$,
stems from the fact that we model using the \textit{natural} instead
of the \textit{stochastic} exponential, i.e. using SDE \eqref{SDE-L}
and not \eqref{SDE-BM}.
\end{rem}

\begin{example}[Poisson model]\label{exam-P}
Let us consider the Poisson model, where the driving motion is a
Poisson process with intensity $\lambda>0$ and jump size $\alpha$,
i.e. $L_t=bt+\alpha N_t$ and $\nu(\dx)=\lambda 1_{\{\alpha\}}(\dx)$.
Then, equation \eqref{inc-eq} has a \textit{unique} solution for
$Y$, which is
\begin{align}\label{inc-P}
0 &= b - r + \big((\e^x-1)Y-x\big)*\lambda 1_{\{\alpha\}}(\dx)
 \nonumber\\\Leftrightarrow
0 &= b - r + \big((\e^\alpha-1)Y-\alpha\big)\lambda
 \nonumber\\\Leftrightarrow
Y &= \frac{r-b+\alpha\lambda}{(\e^\alpha-1)\lambda};
\end{align}
therefore the martingale measure is unique and the market is
complete. By the analogy to the Black--Scholes case, we could call
the quantity $Y$ in \eqref{inc-P} the \textit{market price of jump
risk}.

Moreover, we can also check that plugging \eqref{inc-P} into
\eqref{inc-G}, we recover the martingale condition \eqref{inc-M};
indeed, we have that
\begin{align*}
\bar b
 &= b+ \alpha\lambda(Y-1)\\
 &= b+ \alpha\lambda(Y-1) + (\e^\alpha-1)Y\lambda - (\e^\alpha-1)Y\lambda\\
 &= r - (\e^\alpha-1-\alpha)\bar\lambda,
\end{align*}
where we have used \eqref{inc-P} and that $\bar\nu=Y\cdot\nu$, which
in the current framework translates to $\bar\lambda=Y\lambda$.
\end{example}

\begin{example}[A simple incomplete model]
Assume that the driving process consists of a drift, a Brownian
motion and a Poisson process, i.e. $L_t=bt+\sqrt c W_t+\alpha N_t$,
as in examples \ref{exam-BM} and \ref{exam-P}. Based on
\eqref{inc-BM} and \eqref{inc-P} we postulate that the solutions of
equation \eqref{inc-eq} are of the form
\begin{align}
\beta_\varepsilon = \varepsilon\frac{r-b}{c} - \frac12
\quad\text{ and }\quad
Y_\varepsilon = \frac{(1-\varepsilon)(r-b)+\alpha\lambda}{(\e^\alpha-1)\lambda}
\end{align}
for any $\varepsilon\in(0,1)$. One can easily verify that
$\beta_\varepsilon$ and $Y_\varepsilon$ satisfy \eqref{inc-eq}. But
then, to \textit{any} $\varepsilon\in(0,1)$ corresponds an
\textit{equivalent martingale measure} and we can easily conclude
that this simple market is \textit{incomplete}.
\end{example}

\section{Popular models}

In this section, we review some popular models in the mathematical
finance literature from the point of view of \lev processes. We
describe their \lev triplets and characteristic functions and
provide, whenever possible, their -- infinitely divisible -- laws.

\subsection{Black--Scholes}

The most famous asset price model based on a \lev process is that of
\citeN{Samuelson65}, \citeN{BlackScholes73} and Merton \citeyear{Merton73}.
The log-returns are normally distributed with mean $\mu$ and
variance $\sigma^2$, i.e. $L_1\sim\text{Normal} (\mu, \sigma^2)$ and
the density is
\begin{align*}
f_{L_1}(x) = \frac{1}{\sigma\sqrt{2\pi}}
         \exp\Big[-\frac{(x-\mu)^2}{2\sigma^2}\Big].
\end{align*}
The characteristic function is
\begin{align*}
\varphi_{L_1}(u) = \exp\Big[i\mu u -\frac{\sigma^2u^2}{2}\Big],
\end{align*}
the first and second moments are
\begin{align*}
\text{E}[L_1]=\mu, &\qquad \text{Var}[L_1]=\sigma^2,
\end{align*}
while the skewness and kurtosis are
\begin{align*}
\text{skew}[L_1]=0, &\qquad \text{kurt}[L_1]=3.
\end{align*}
The canonical decomposition of $L$ is
\begin{align*}
L_t = \mu t + \sigma W_t
\end{align*}
and the \lev triplet is $(\mu, \sigma^2, 0)$.

\subsection{Merton}

\citeN{Merton76} was one of the first to use a discontinuous price
process to model asset returns. The canonical decomposition of the
driving process is
\begin{align*}
L_t = \mu t + \sigma W_t + \sum_{k=1}^{N_t}J_k
\end{align*}
where $J_k\sim\text{Normal} (\mu_J, \sigma^2_J)$, $k=1,...$, hence
the distribution of the jump size has density
\begin{align*}
f_J(x) = \frac{1}{\sigma_J\sqrt{2\pi}}
         \exp\Big[-\frac{(x-\mu_J)^2}{2\sigma^2_J}\Big].
\end{align*}
The characteristic function of $L_1$ is
\begin{align*}
\varphi_{L_1}(u) = \exp\Big[i\mu u - \frac{\sigma^2u^2}{2}
                 + \lambda \big(\e^{i\mu_Ju-\sigma^2_Ju^2/2}-1\big)
                   \Big],
\end{align*}
and the \lev triplet is $(\mu, \sigma^2, \lambda\times f_J)$.

The density of $L_1$ is not known in closed form, while the first
two moments are
\begin{align*}
 \text{E}[L_1]=\mu+\lambda\mu_J
 &\qquad \text{and} \qquad
 \text{Var}[L_1]=\sigma^2+\lambda\mu_J^2+\lambda\sigma^2_J
\end{align*}

\subsection{Kou}

\citeN{Kou02} proposed a jump-diffusion model similar to Merton's,
where the jump size is double-exponentially distributed. Therefore,
the canonical decomposition of the driving process is
\begin{align*}
L_t = \mu t + \sigma W_t + \sum_{k=1}^{N_t}J_k
\end{align*}
where $J_k\sim\text{DbExpo} (p,\theta_1,\theta_2)$, $k=1,...$, hence
the distribution of the jump size has density
\begin{align*}
f_J(x) = p\theta_1\e^{-\theta_1x}\1_{\{x<0\}}
       + (1-p)\theta_2\e^{\theta_2x}\1_{\{x>0\}}.
\end{align*}
The characteristic function of $L_1$ is
\begin{align*}
\varphi_{L_1}(u) = \exp\Big[i\mu u - \frac{\sigma^2u^2}{2}
                 + \lambda \Big(\frac{p\theta_1}{\theta_1-iu}-\frac{(1-p)\theta_2}{\theta_2+iu}-1\Big)
                   \Big],
\end{align*}
and the \lev triplet is $(\mu, \sigma^2, \lambda\times f_J)$.

The density of $L_1$ is not known in closed form, while the first
two moments are
\begin{align*}
 \text{E}[L_1]=\mu+\frac{\lambda p}{\theta_1}-\frac{\lambda(1-p)}{\theta_2}
 &\qquad \text{and} \qquad
 \text{Var}[L_1]=\sigma^2+\frac{\lambda p}{\theta_1^2}+\frac{\lambda(1-p)}{\theta_2^2}.
\end{align*}

\subsection{Generalized Hyperbolic}\label{gh-section}

The generalized hyperbolic model was introduced by
\citeN{EberleinPrause02} following the seminal work on the
hyperbolic model by \citeN{EberleinKeller95}. The class of
hyperbolic distributions was invented by O. E. Barndorff-Nielsen in
relation to the so-called `sand project' (cf.
\citeNP{Barndorff-Nielsen77}). The increments of time length 1
follow a generalized hyperbolic distribution with parameters
$\alpha,\beta,\delta,\mu,\lambda$, i.e.
$L_1\sim\text{GH}(\alpha,\beta,\delta,\mu,\lambda)$ and the density
is
\begin{align*}
 f_{GH}(x) &= c(\lambda,\alpha,\beta,\delta)
               \big(\delta^2+(x-\mu)^2\big)^{(\lambda-\frac{1}{2})/2}\\
      &\qquad
               \times K_{\lambda-\frac{1}{2}}
               \big(\alpha\sqrt{\delta^2+(x-\mu)^2}\big)
               \exp\big(\beta(x-\mu)\big),
\end{align*}
where
\begin{align*}
  c(\lambda,\alpha,\beta,\delta) =
    \frac{(\alpha^2-\beta^2)^{\lambda/2}}
    {\sqrt{2\pi}\alpha^{\lambda-\frac{1}{2}}K_{\lambda}\big(\delta\sqrt{\alpha^2-\beta^2}\big)}
\end{align*}
and $K_{\lambda}$ denotes the Bessel function of the third kind with
index $\lambda$ (cf. \citeNP{AbramowitzStegun68}). Parameter
$\alpha>0$ determines the shape, $0\leq|\beta|<\alpha$ determines
the skewness, $\mu\in\R$ the location and $\delta>0$ is a scaling
parameter. The last parameter, $\lambda\in\R$ affects the heaviness
of the tails and allows us to navigate through different subclasses.
For example, for $\lambda=1$ we get the hyperbolic distribution and
for $\lambda=-\frac{1}{2}$ we get the normal inverse Gaussian (NIG).

The characteristic function of the GH distribution is
\begin{align*}
 \varphi_{GH}(u) = \e^{iu\mu}
   \Big(\frac{\alpha^2-\beta^2}{\alpha^2-(\beta+iu)^2}\Big)^{\frac{\lambda}{2}}
   \frac{K_{\lambda}\big(\delta\sqrt{\alpha^2-(\beta+iu)^2}\big)}
   {K_{\lambda}\big(\delta\sqrt{\alpha^2-\beta^2}\big)},
\end{align*}
while the first and second moments are
\begin{align*}
 \text{E}[L_1]= \mu +
      \frac{\beta\delta^2}{\zeta}\frac{K_{\lambda+1}(\zeta)}{K_{\lambda}(\zeta)}
\end{align*}
and
\begin{align*}
 \text{Var}[L_1]=\frac{\delta^2}{\zeta}\frac{K_{\lambda+1}(\zeta)}{K_{\lambda}(\zeta)}
     + \frac{\beta^2\delta^4}{\zeta^2}
       \Big(\frac{K_{\lambda+2}(\zeta)}{K_{\lambda}(\zeta)}-
            \frac{K^2_{\lambda+1}(\zeta)}{K^2_{\lambda}(\zeta)}\Big),
\end{align*}
where $\zeta=\delta\sqrt{\alpha^2-\beta^2}$.

The canonical decomposition of a \lev process driven by a
generalized hyperbolic distribution (i.e. $L_1\sim\text{GH}$) is
\begin{align*}
 L_t=t\text{E}[L_1] + \int_0^t\int_{\R}x(\mu^L-\nu^{GH})(\dsdx)
\end{align*}
and the \lev triplet is ($E[L_1],0,\nu^{GH}$). The \lev measure of
the GH distribution has the following form
\begin{displaymath}
\nu^{GH}(\dx) = \frac{\e^{\beta x}}{|x|}
   \left(\int_0^\infty \frac{\exp(-\sqrt{2y+\alpha^2}\,|x|)}
                            {\pi^2y(J_{|\lambda|}^2 (\delta\sqrt{2y}\kern1pt)
              +Y_{|\lambda|}^2(\delta\sqrt{2y}\kern1pt))}\ud y
            +\lambda \e^{-\alpha|x|} 1_{\{\lambda\geq0\}}\right);
\end{displaymath}
here $J_{\lambda}$ and $Y_{\lambda}$ denote the Bessel functions of
the first and second kind with index $\lambda$. We refer to
\citeN[section 2.4.1]{Raible00} for a fine analysis of this \lev
measure.

The GH distribution contains as special or limiting cases several
known distributions, including the normal, exponential, gamma,
variance gamma, hyperbolic and normal inverse Gaussian
distributions; we refer to Eberlein and v. Hammerstein
\citeyear{EberleinHammerstein04} for an exhaustive survey.

\subsection{Normal Inverse Gaussian}\label{NIG-section}

The normal inverse Gaussian distribution is a special case of the GH
for $\lambda=-\frac{1}{2}$; it was introduced to finance in
\citeN{Barndorff-Nielsen97}. The density is
\begin{align*}
 f_{NIG}(x) &= \frac{\alpha}{\pi}
   \exp\big(\delta\sqrt{\alpha^2-\beta^2}+\beta(x-\mu)\big)
   \frac{K_{1}\Big(\alpha\delta\sqrt{1+(\frac{x-\mu}{\delta})^2}\Big)}{\sqrt{1+(\frac{x-\mu}{\delta})^2}},
\end{align*}
while the characteristic function has the simplified form
\begin{align*}
 \varphi_{NIG}(u) = \e^{iu\mu}
   \frac{\exp(\delta\sqrt{\alpha^2-\beta^2})}{\exp(\delta\sqrt{\alpha^2-(\beta+iu)^2})}.
\end{align*}
The first and second moments of the NIG distribution are
\begin{align*}
 \text{E}[L_1]=\mu+\frac{\beta\delta}{\sqrt{\alpha^2-\beta^2}}
 &\qquad \text{and} \qquad
 \text{Var}[L_1]=\frac{\delta}{\sqrt{\alpha^2-\beta^2}}+\frac{\beta^2\delta}{(\sqrt{\alpha^2-\beta^2})^3},
\end{align*}
and similarly to the GH, the canonical decomposition is
\begin{align*}
 L_t=t\text{E}[L_1] + \int_0^t\int_{\R}x(\mu^L-\nu^{NIG})(\dsdx),
\end{align*}
where now the \lev measure has the simplified form
\begin{align*}
 \nu^{NIG}(\dx) = \e^{\beta x} \frac{\delta\alpha}{\pi|x|} K_1(\alpha|x|)\dx.
\end{align*}

The NIG is the only subclass of the GH that is closed under
convolution, i.e. if $X\sim\text{NIG}(\alpha,\beta,\delta_1,\mu_1)$
and $Y\sim\text{NIG}(\alpha,\beta,\delta_2,\mu_2)$ and $X$ is
independent of $Y$, then
\[
X+Y\sim\text{NIG}(\alpha,\beta,\delta_1+\delta_2,\mu_1+\mu_2).
\]
Therefore, if we estimate the returns distribution at some time
scale, then we know it -- in closed form -- for all time scales.

\subsection{CGMY}

The CGMY \lev process was introduced by Carr, Geman, Madan, and Yor
\citeyear{Carretal02}; another
name for this process is (generalized) tempered stable process (see
e.g. \citeNP{ContTankov03}). The characteristic function of $L_t$,
$t\in[0,T]$ is
\begin{align*}
 \varphi_{L_t}(u)
 = \exp\Big( tC\Gamma(-Y)\big[(M-iu)^Y+(G+iu)^Y-M^Y-G^Y\big] \Big).
\end{align*}
The \lev measure of this process admits the representation
\begin{align*}
 \nu^{CGMY}(\ud x) = C\frac{\e^{-Mx}}{x^{1+Y}}\1_{\{x>0\}}\ud x
            + C\frac{\e^{Gx}}{|x|^{1+Y}}\1_{\{x<0\}}\ud x,
\end{align*}
where $C>0$, $G>0$, $M>0$,  and $Y<2$. The CGMY process is a pure
jump \lev process with canonical decomposition
\begin{align*}
 L_t=t\text{E}[L_1] + \int_0^t\int_{\R}x(\mu^L-\nu^{CGMY})(\dsdx),
\end{align*}
and \lev triplet ($\text{E}[L_1],0,\nu^{CGMY}$), while the density is not
known in closed form.

The CGMY processes are closely related to stable processes; in fact,
the \lev measure of the CGMY process coincides with the \lev measure
of the stable process with index $\alpha\in(0,2)$ (cf. \citeNP[Def.
1.1.6]{SamorodnitskyTaqqu94}), \textit{but} with the additional
exponential factors; hence the name \textit{tempered} stable
processes. Due to the exponential tempering of the \lev measure, the
CGMY distribution has finite moments of all orders. Again, the class
of CGMY distributions contains several other distributions as
subclasses, for example the variance gamma distribution
(Madan and Seneta \citeyearNP{MadanSeneta90}) and the bilateral gamma
distribution (K\"uchler and Tappe \citeyearNP{KuechlerTappe06}).

\subsection{Meixner}

The Meixner process was introduced by Schoutens and Teugels \citeyear{SchoutensTeugels98},
see also \citeN{Schoutens02}. Let $L=(L_t)_{0\le t\le T}$ be a
Meixner process with
$\mathrm{Law}(H_1|P)=\mathrm{Meixner}(\alpha,\beta,\delta)$,
$\alpha>0$, $-\pi<\beta<\pi$, $\delta>0$, then the density is
\begin{align*}
 f_{\mathrm{Meixner}}(x)
  = \frac{\left(2\cos\frac\beta2\right)^{2\delta}}{2\alpha\pi\Gamma(2\delta)}
    \exp \left( \frac{\beta x}{\alpha} \right)
    \left| \Gamma\left(\delta+\frac{ix}{\alpha}\right) \right|^2.
\end{align*}
The characteristic function $L_t$, $t\in[0,T]$ is
\begin{align*}
 \varphi_{L_t}(u)
  = \left(\frac{\cos\frac\beta2}{\cosh\frac{\alpha u-i\beta}{2}}\right)^{2\delta t},
\end{align*}
and the \lev measure of the Meixner process admits the
representation
\begin{displaymath}
 \nu^{\mathrm{Meixner}}(\ud x)
   =\frac{\delta \exp\left(\frac{\beta}{\alpha}x\right)}{x\sinh(\frac{\pi x}{\alpha})}.
\end{displaymath}
The Meixner process is a pure jump \lev process with canonical
decomposition
\begin{align*}
 L_t=t\text{E}[L_1] + \int_0^t\int_{\R}x(\mu^L-\nu^{\mathrm{Meixner}})(\dsdx),
\end{align*}
and \lev triplet ($\text{E}[L_1],0,\nu^{\mathrm{Meixner}}$).

\section{Pricing European options}

The aim of this section is to review the three predominant methods
for pricing European options on assets driven by general \lev
processes. Namely, we review transform methods, partial
integro-differential equation (PIDE) methods and Monte Carlo
methods. Of course, all these methods can be used -- under certain
modifications -- when considering more general driving processes as
well.

The setting is as follows: we consider an asset \prozess[S] modeled
as an exponential \lev process, i.e.
\begin{align}
S_{t}=S_{0}\exp L_{t}, \quad \ott,
\end{align}
where \prozess has the \lev triplet $(b,c,\nu)$. We assume that the
asset is modeled directly under a martingale measure, cf. section
\ref{RNM}, hence the martingale restriction on the drift term $b$ is
in force. For simplicity, we assume that $r>0$ and $\delta=0$
throughout this section.

We aim to derive the price of a European option on the asset $S$
with payoff function $g$ maturing at time $T$, i.e. the payoff of
the option is $g(S_T)$.

\subsection{Transform methods}

The simpler, faster and most common method for pricing European
options on assets driven by \lev processes is to derive an integral
representation for the option price using Fourier or Laplace
transforms. This blends perfectly with \lev processes, since the
representation involves the characteristic function of the random
variables, which is explicitly provided by the \lev-Khintchine
formula. The resulting integral can be computed numerically very
easily and fast. The main drawback of this method is that exotic
derivatives cannot be handled so easily.

Several authors have derived valuation formulae using Fourier or
Laplace transforms, see e.g. \citeN{CarrMadan99},
\citeN{BorovkovNovikov02} and \citeN{EberleinGlauPapapantoleon08}. Here,
we review the method developed by S. Raible (cf. \citeNP[Chapter
3]{Raible00}).

Assume that the following conditions regarding the driving process
of the asset and the payoff function are in force.
\begin{description}
\item[(T1)] Assume that $\varphi_{L_T}(z)$,
 the extended characteristic function of $L_T$, exists
 for all $z\in\C$ with $\Im z\in I_1\supset[0,1]$.
\item[(T2)] Assume that $P_{L_T}$, the distribution
 of $L_T$, is absolutely continuous w.r.t. the Lebesgue measure
 $\llambda$ with density $\rho$.
\item[(T3)] Consider an integrable, European-style, payoff function $g(S_T)$.
\item[(T4)] Assume that $x\mapsto \e^{-Rx}|g(\e^{-x})|$ is bounded and
 integrable for all $R\in I_2\subset\R$.
\item[(T5)] Assume that $I_1\cap I_2\neq\emptyset$.
\end{description}

Furthermore, let $\mathfrak{L}_h(z)$ denote the bilateral Laplace
transform of a function $h$ at $z\in\C$, i.e. let
\begin{align*}
\mathfrak{L}_h(z):=\int_{\R} \e^{-zx}h(x)\ud  x.
\end{align*}

According to arbitrage pricing, the value of an option is equal to
its discounted expected payoff under the risk-neutral measure $P$.
Hence, we get
\begin{align*}
C_T(S,K)
 &= \e^{-rT}\E [g(S_T)]
  = \e^{-rT}\int_{\Omega} g(S_T) \ud P\\
 &= \e^{-rT}\int_{\R} g(S_0\e^x) \ud P_{L_T}(x)
  = \e^{-rT}\int_{\R} g(S_0\e^x) \rho(x)\ud x
\end{align*}
because $P_{L_T}$ is absolutely continuous with respect to the
Lebesgue measure. Define the function $\pi(x)=g(\e^{-x})$ and let
$\zeta=-\log S_0$, then
\begin{align}\label{conv-rep}
C_T(S,K)
 = \e^{-rT}\int_{\R}\pi(\zeta-x)\rho(x)\ud x
 = \e^{-rT}(\pi\ast\rho)(\zeta)
 =: C
\end{align}
which is a convolution of $\pi$ with $\rho$ at the point $\zeta$,
multiplied by the discount factor.

The idea now is to apply a Laplace transform on both sides of
\eqref{conv-rep} and take advantage of the fact that \emph{the
Laplace transform of a convolution equals the product of the Laplace
transforms of the factors}. The resulting Laplace transforms are
easier to calculate analytically. Finally, we can invert the Laplace
transforms to recover the option value.

Applying Laplace transforms on both sides of \eqref{conv-rep} for
$\C\ni z=R+iu, R\in I_1\cap I_2, u\in\R$, we get that
\begin{align*}
\mathfrak{L}_C(z)
 &= \e^{-rT}\int_{\R}\e^{-zx}(\pi\ast\rho)(x)\ud x\\
 &= \e^{-rT}\int_{\R}\e^{-zx}\pi(x)\ud x \int_{\R}\e^{-zx}\rho(x)\ud x\\
 &= \e^{-rT}\mathfrak{L}_\pi(z)\mathfrak{L}_{\rho}(z).
\end{align*}
Now, inverting this Laplace transform yields the option value, i.e.
\begin{align*}
C_T(S,K)
 &= \frac{1}{2\pi i} \int_{R-i\infty}^{R+i\infty}
    \e^{\zeta z}\mathfrak{L}_C(z)\ud z\\
 &= \frac{1}{2\pi} \int_{\R}
    \e^{\zeta (R+iu)} \mathfrak{L}_C(R+iu)\ud u\\
 &= \frac{\e^{\zeta R}}{2\pi} \int_{\R}
    \e^{i\zeta u}\e^{-rT} \mathfrak{L}_\pi(R+iu) \mathfrak{L}_{\rho}(R+iu)\ud u\\
 &= \frac{\e^{-rT+\zeta R}}{2\pi} \int_{\R}
    \e^{i\zeta u}\mathfrak{L}_{\pi}(R+iu) \varphi_{L_T}(iR-u)\ud u.
\end{align*}
Here, $\mathfrak{L}_{\pi}$ is the Laplace transform of the modified
payoff function $\pi(x)=g(\e^{-x})$ and $\varphi_{L_T}$ is provided
directly from the \lev-Khintchine formula. Below, we describe two
important examples of payoff functions and their Laplace transforms.

\begin{example}[Call and put option]
A European call option pays off $g(S_T)=(S_T-K)^+$, for some strike
price $K$. The Laplace transform of its modified payoff function
$\pi$ is
\begin{align}\label{lt-eucall}
\mathfrak{L}_\pi(z)=\frac{K^{1+z}}{z(z+1)}
\end{align}
for $z\in\C$ with $\Re z=R\in I_2=(-\infty, -1)$.\par Similarly, for
a European put option that pays off $g(S_T)=(K-S_T)^+$, the Laplace
transform of its modified payoff function $\pi$ is given by
\eqref{lt-eucall} for $z\in\C$ with $\Re z=R\in I_2=(0, \infty)$.
\end{example}

\begin{example}[Digital option]
A European digital call option pays off $g(S_T)=\1_{\{S_T>K\}}$. The
Laplace transform of its modified payoff function $\pi$ is
\begin{align}
\mathfrak{L}_\pi(z)=-\frac{K^z}{z}
\end{align}
for $z\in\C$ with $\Re z=R\in I_2=(-\infty, 0)$.

Similarly, for a European digital put option that pays off
$g(S_T)=\1_{\{S_T<K\}}$, the Laplace transform of its modified
payoff function $\pi$ is
\begin{align}
\mathfrak{L}_\pi(z)=\frac{K^z}{z}
\end{align}
for $z\in\C$ with  $\Re z=R\in I_2=(0,\infty)$.
\end{example}

\subsection{PIDE methods}

An alternative to transform methods for pricing options is to derive
and then solve numerically the partial integro-differential equation
(PIDE) that the option price satisfies. Note that in their seminal
paper Black and Scholes derive such a PDE for the price of a
European option. The advantage of PIDE methods is that complex and
exotic payoffs can be treated easily; the limitations are the slower
speed in comparison to transform methods and the computational
complexity when handling options on several assets.

Here, we derive the PIDE corresponding to the price of a European
option in a \lev-driven asset, using martingale techniques; of
course, we could derive the same PIDE by constructing a
self-financing portfolio.

Let us denote by $G(S_t,t)$ the time-$t$ price of a European option
with payoff function $g$ on the asset $S$; the price is given by
\begin{align}
G(S_t,t) = \e^{-r(T-t)}\E[g(S_T)] =: V_t, \quad \ott.
\end{align}
By arbitrage theory, we know that the discounted option price
process must be a martingale under a martingale measure. Therefore,
any decomposition of the price process as
\begin{align}
\e^{-rt}V_t = V_0 + M_t + A_t,
\end{align}
where $M\in\mathcal M_\text{loc}$ and $A\in\mathcal A_\text{loc}$,
must satisfy $A_t=0$ for all $t\in[0,T]$. This condition yields the
desired PIDE.

Now, for notational but also computational convenience, we work with
the driving process $L$ and not the asset price process $S$, hence
we derive a PIDE involving $f(L_t,t):=G(S_t,t)$, or in other words
\begin{align}
f(L_t,t) = \e^{-r(T-t)}\E[g(S_0\e^{L_T})] = V_t, \quad \ott.
\end{align}
Let us denote by $\partial_if$ the derivative of $f$ with respect to
the $i$-th argument, $\partial_i^2f$ the second derivative of $f$
with respect to the $i$-th argument, and so on.

Assume that $f\in C^{2,1}(\R\times[0,T])$, i.e. it is twice
continuously differentiable in the first argument and once
continuously differentiable in the second argument. An application
of It\^o's formula yields:
\begin{align*}
\ud (\e^{-rt}V_t)
 &= \ud (\e^{-rt}f(L_{t-},t)) \\
 &= -r\e^{-rt}f(L_{t-},t)\dt + \e^{-rt}\partial_2f(L_{t-},t)\dt\\
 &\quad + \e^{-rt}\partial_1f(L_{t-},t)\ud L_t
  + \frac12 \e^{-rt}\partial^2_1f(L_{t-},t)\ud\langle L_t^c\rangle\\
 &\quad + \e^{-rt}\int_\R \Big( f(L_{t_-}+z,t)-f(L_{t-},t)-\partial_1f(L_{t_-},t)z \Big)\mu^L(\ud z,\dt)\\
 &= \e^{-rt}\bigg\{ -rf(L_{t-},t)\dt + \partial_2f(L_{t-},t)\dt + \partial_1f(L_{t-},t)b\dt\\
 &\quad + \partial_1f(L_{t-},t)\sqrt{c}\ud W_t + \int_\R\partial_1f(L_{t-},t)z (\mu^L-\nu^L)(\ud z,\dt)\\
 &\quad + \frac12 \partial^2_1f(L_{t-},t)c\dt\\
 &\quad + \int_\R\!\!\Big( f(L_{t_-}+z,t)-f(L_{t-},t)-\partial_1f(L_{t_-},t)z \Big)(\mu^L-\nu^L)(\ud z,\dt)\\
 &\quad + \int_\R\!\!\Big( f(L_{t_-}+z,t)-f(L_{t-},t)-\partial_1f(L_{t_-},t)z \Big)\nu(\ud z)\dt\bigg\}.
\end{align*}
Now, the stochastic differential of the bounded variation part of
the option price process is
\begin{align*}
\e^{-rt}\bigg\{ -rf(L_{t-},t) + \partial_2f(L_{t-},t)
                + \partial_1f(L_{t-},t)b + \frac12 \partial^2_1f(L_{t-},t)c\\
 \qquad\qquad\quad + \int_\R\!\Big( f(L_{t_-}+z,t)-f(L_{t-},t)
                   -\partial_1f(L_{t_-},t)z \Big)\nu(\ud z) \bigg\},
\end{align*}
while the remaining parts constitute of the local martingale part.

As was already mentioned, the bounded variation part vanishes
identically. Hence, the price of the option satisfies the partial
integro-differential equation
\begin{align}
0 &= -rf(x,t) + \partial_2f(x,t) + \partial_1f(x,t)b + \frac{c}2 \partial^2_1f(x,t)\\
  &\quad + \int_\R\!\Big( f(x+z,t)-f(x,t)-\partial_1f(x,t)z \Big)\nu(\ud z), \nonumber
\end{align}
for all $(x,t)\in\R\times(0,T)$, subject to the terminal condition
\begin{align}
f(x,T)=g(\e^x).
\end{align}

\begin{rem}
Using the martingale condition \eqref{drift} to make the drift term
explicit, we derive an equivalent formulation of the PIDE:
\begin{align*}
0 &= -rf(x,t) + \partial_2f(x,t)
   + \big(r-\frac{c}2\big)\partial_1f(x,t) + \frac{c}2 \partial^2_1f(x,t)\\
  &\quad + \int_\R\!\Big( f(x+z,t)-f(x,t)-(\e^z-1)\partial_1f(x,t) \Big)\nu(\ud z),
\end{align*}
for all $(x,t)\in\R\times(0,T)$, subject to the terminal condition
\begin{align*}
f(x,T)=g(\e^x).
\end{align*}
\end{rem}

\begin{rem}
Numerical methods for solving the above partial integro-differential
equations can be found, for example, in
\shortciteN{MatachePetersdorffSchwab04}, in
\shortciteN{MatacheSchwabWihler05} and in \citeN[Chapter
12]{ContTankov03}.
\end{rem}

\subsection{Monte Carlo methods}

Another method for pricing options is to use a Monte Carlo
simulation. The main advantage of this method is that complex and
exotic derivatives can be treated easily -- which is very important
in applications, since little is known about functionals of \lev
processes. Moreover, options on several assets can also be handled
easily using Monte Carlo simulations. The main drawback of Monte
Carlo methods is the slow computational speed.

We briefly sketch the pricing of a European call option on a \lev
driven asset. The payoff of the call option with strike $K$ at the
time of maturity $T$ is $g(S_T)=(S_T-K)^+$ and the price is provided
by the discounted expected payoff under a risk-neutral measure, i.e.
\begin{align*}
C_T(S,K) = \e^{-rT}\E [(S_T-K)^+].
\end{align*}
The crux of pricing European options with Monte Carlo methods is to
simulate the terminal value of asset price $S_T=S_0\exp L_T$ -- see
section \ref{simulation} for simulation methods for \lev processes.
Let $S_{T_k}$ for $k=1,\dots,N$ denote the simulated values; then,
the option price $C_T(S,K)$ is estimated by the average of the
prices for the simulated asset values, that is
\begin{align*}
\widehat C_T(S,K) = \e^{-rT}\sum_{k=1}^N (S_{T_k}-K)^+,
\end{align*}
and by the Law of Large Numbers we have that
\begin{align*}
\widehat C_T(S,K) \rightarrow C_T(S,K)
 \quad \text{as} \quad N\rightarrow\infty.
\end{align*}

\section{Empirical evidence}

\lev processes provide a framework that can easily capture the
empirical observations both under the ``real world'' and under the
``risk-neutral'' measure. We provide here some indicative examples.

Under the ``real world'' measure, \lev processes are generated by
distributions that are flexible enough to capture the observed
fat-tailed and skewed (leptokurtic) behavior of asset returns. One
such class of distributions is the class of \emph{generalized
hyperbolic} distributions (cf. section \ref{gh-section}). In Figure
\ref{GH-examples}, various densities of generalized hyperbolic
distributions and a comparison of the generalized hyperbolic and
normal density are plotted.

\begin{figure}
\begin{center}
  \includegraphics[width=6.cm,keepaspectratio=true]{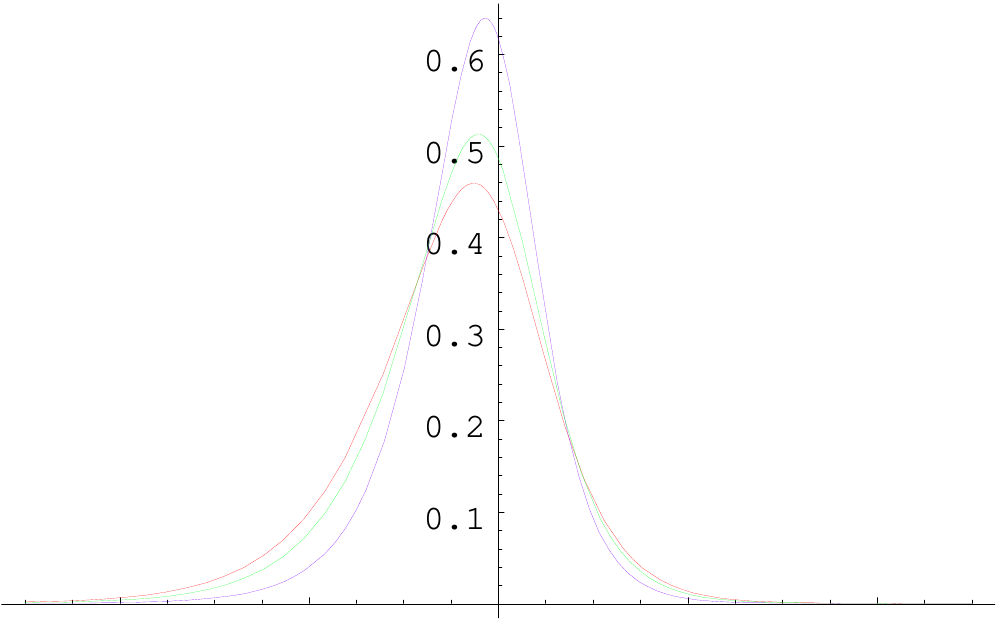}
  \includegraphics[width=6.cm,keepaspectratio=true]{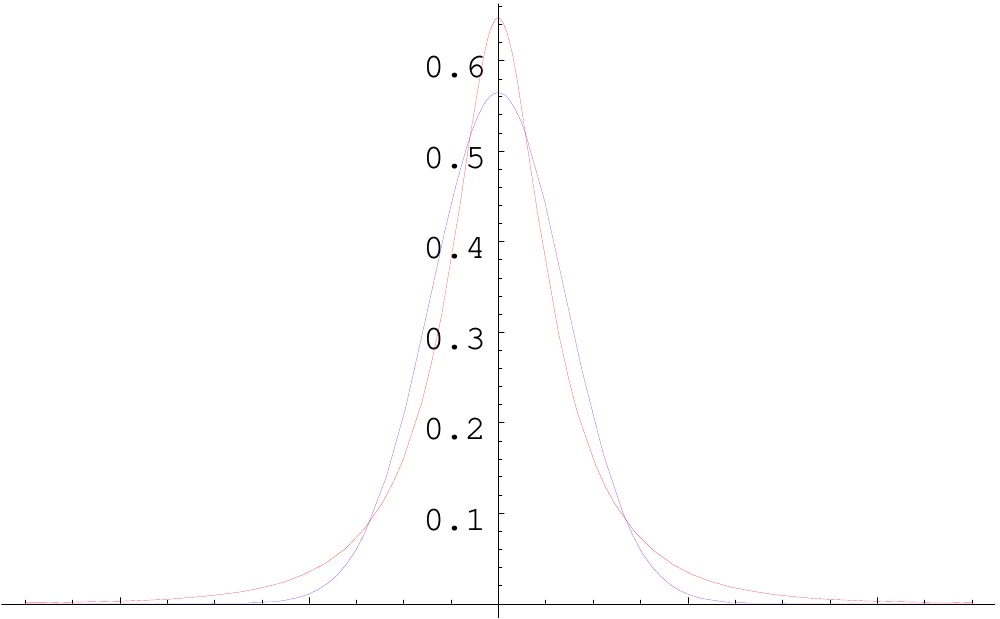}
  \caption{Densities of hyperbolic (red), NIG (blue) and hyperboloid 
           distributions (left). Comparison of the GH (red) and Normal 
           distributions (with equal mean and variance).}
   \label{GH-examples}
\end{center}
\end{figure}

A typical example of the behavior of asset returns can be seen in
Figures \ref{returns} and \ref{GH-fit}. The fitted normal
distribution has lower peak, fatter flanks and lighter tails than
the empirical distribution; this means that, in reality, tiny and
large price movements occur \textit{more frequently}, and small and
medium size movements occur \textit{less frequently}, than predicted
by the normal distribution. On the other hand, the generalized
hyperbolic distribution gives a very good statistical fit of the
empirical distribution; this is further verified by the
corresponding Q-Q plot.

\begin{figure}
 \begin{center}
 \includegraphics[width=6.cm,keepaspectratio=true]{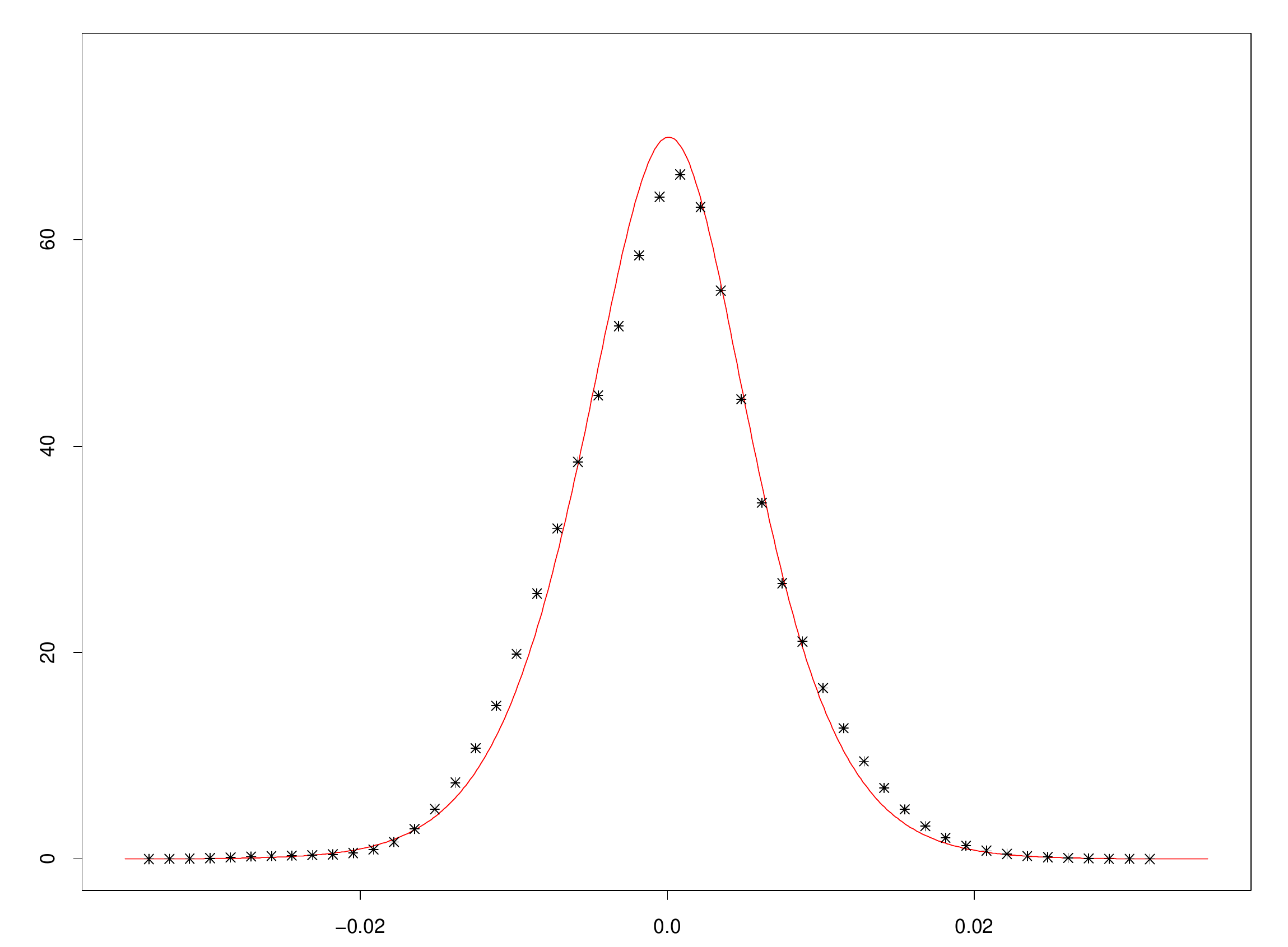}
 \includegraphics[height=6.cm,angle=90,keepaspectratio=true]{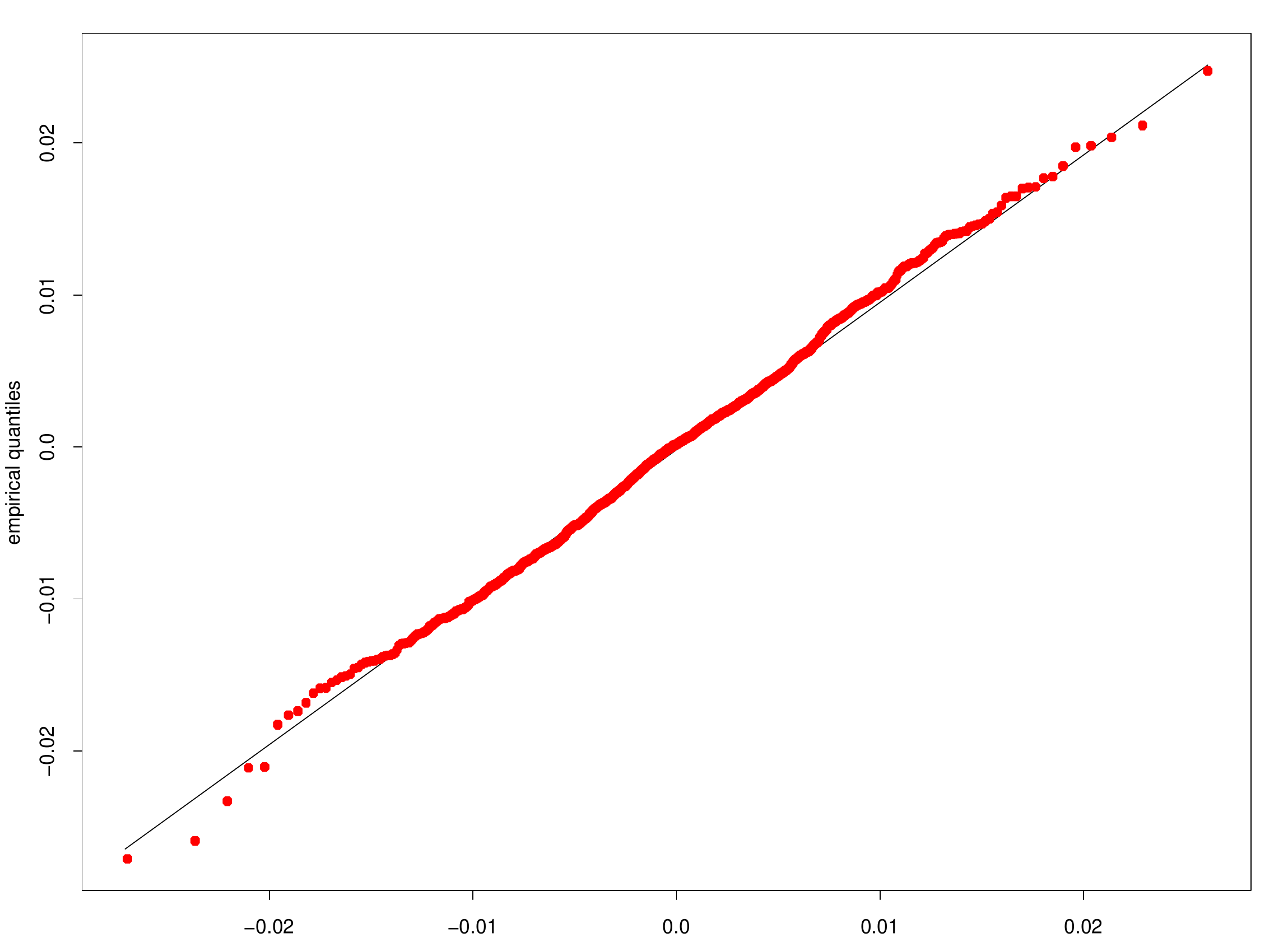}
   \caption{Empirical distribution and Q-Q plot of EUR/USD daily log-returns
           with fitted GH (red).}
   \label{GH-fit}
\end{center}
\end{figure}

Under the ``risk-neutral'' measure, the flexibility of the
generating distributions allows the implied volatility smiles
produced by a \lev model to accurately capture the shape of the
implied volatility smiles observed in the market. A typical
volatility surface can be seen in Figure \ref{surface}. Figure
\ref{smile-fit} exhibits the volatility smile of market data
(EUR/USD) and the calibrated implied volatility smile produced by
the NIG distribution; clearly, the resulting smile fits the data
particularly well.

\begin{figure}
\begin{center}
 \includegraphics[width=8.00cm,keepaspectratio=true]{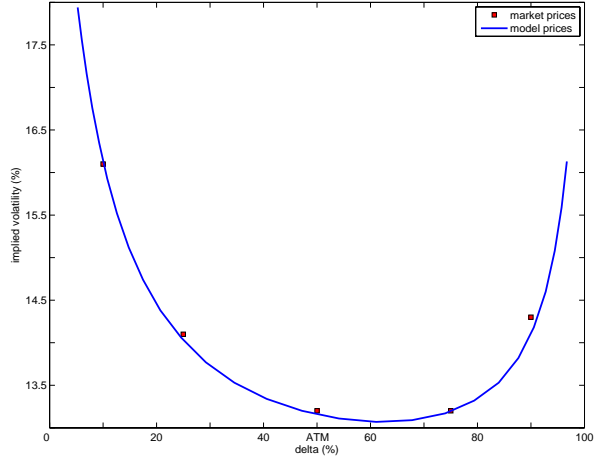}
  \caption{Implied volatilities of EUR/USD options and calibrated NIG smile.}
   \label{smile-fit}
\end{center}
\end{figure}

\appendix
\section{Poisson random variables and processes}

\begin{defin}
Let $X$ be a Poisson distributed random variable with parameter
$\lambda\in\R_{\geqslant0}$. Then, for $n\in\mathbb{N}$ the
probability distribution is
\begin{align*}
 P(X=n) = \e^{-\lambda}\frac{\lambda^n}{n!}
\end{align*}
and the first two centered moments are
\begin{align*}
 \text{E}[X]=\lambda
 &\qquad \text{and} \qquad
 \text{Var}[X]=\lambda.
\end{align*}
\end{defin}

\begin{defin}
A c\`{a}dl\`{a}g, adapted stochastic process \prozess[N] with
$N_t:\Omega\times\R_{\geqslant0}\rightarrow\mathbb{N}\cup\{0\}$ is
called a \emph{Poisson process} if
\begin{enumerate}
    \item $N_0=0$,
    \item $N_t - N_{s}$ is independent of $\F_s$ for any
          $0\leq s<t<T$,
    \item $N_t - N_{s}$ is Poisson distributed with parameter $\lambda(t-s)$
          for any $0\leq s<t<T$.
\end{enumerate}
Then, $\lambda\geq0$ is called the \textit{intensity} of the Poisson
process.
\end{defin}

\begin{defin}
Let $N$ be a Poisson process with parameter $\lambda$. We shall call
the process \prozess[\overline{N}] with
$\overline{N}_t:\Omega\times\R_{\geqslant0}\rightarrow\R$ where
\begin{align}\label{compensated-Poisson}
 \overline{N}_t:=N_t-\lambda t
\end{align}
a \emph{compensated Poisson} process.
\end{defin}

A simulated path of a Poisson and a compensated Poisson process can
be seen in Figure \ref{poisson-compoundpoisson}.
\begin{figure}
\begin{center}
 \includegraphics[width=6.cm,keepaspectratio=true]{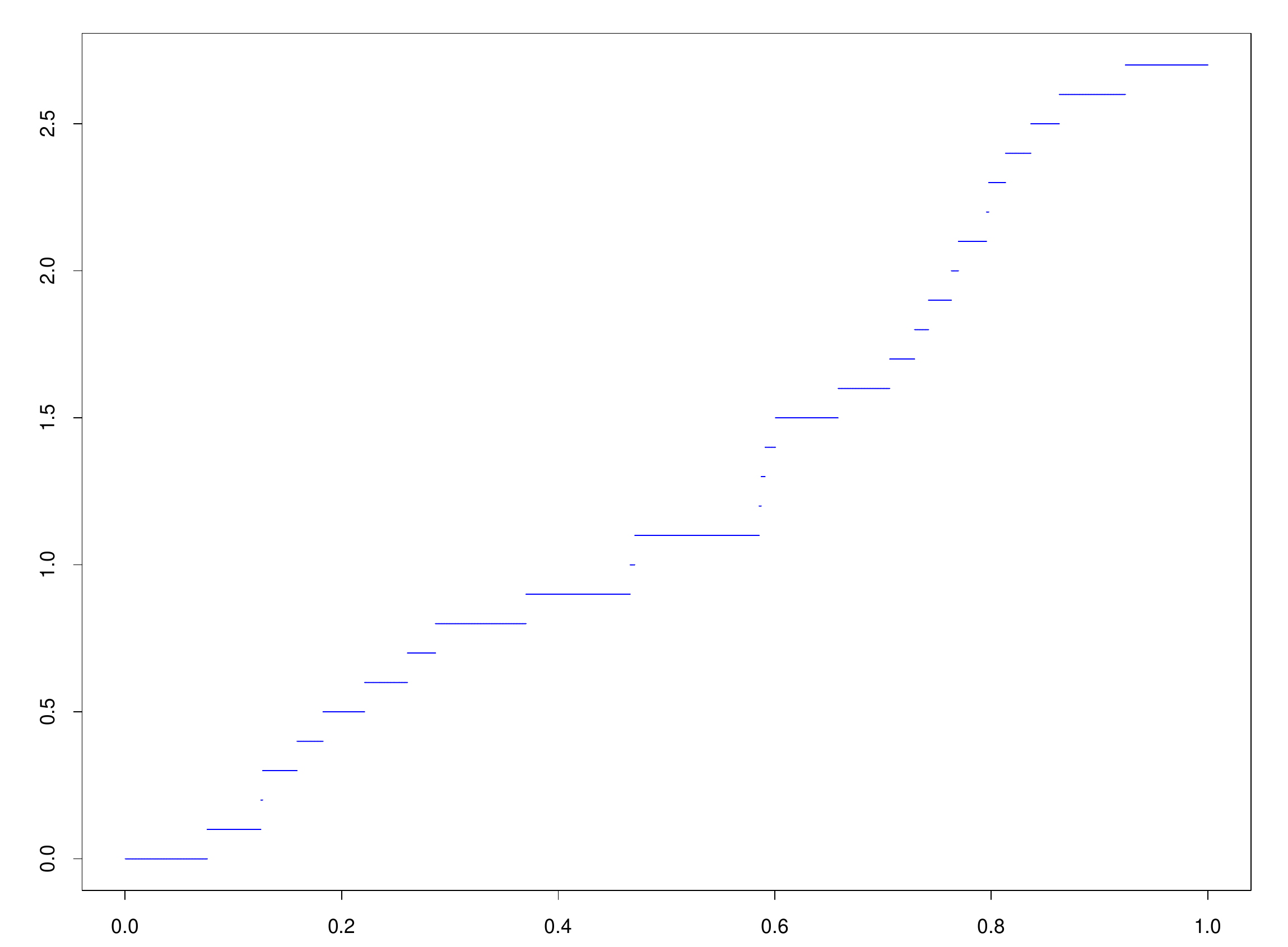}
 \includegraphics[width=6.cm,keepaspectratio=true]{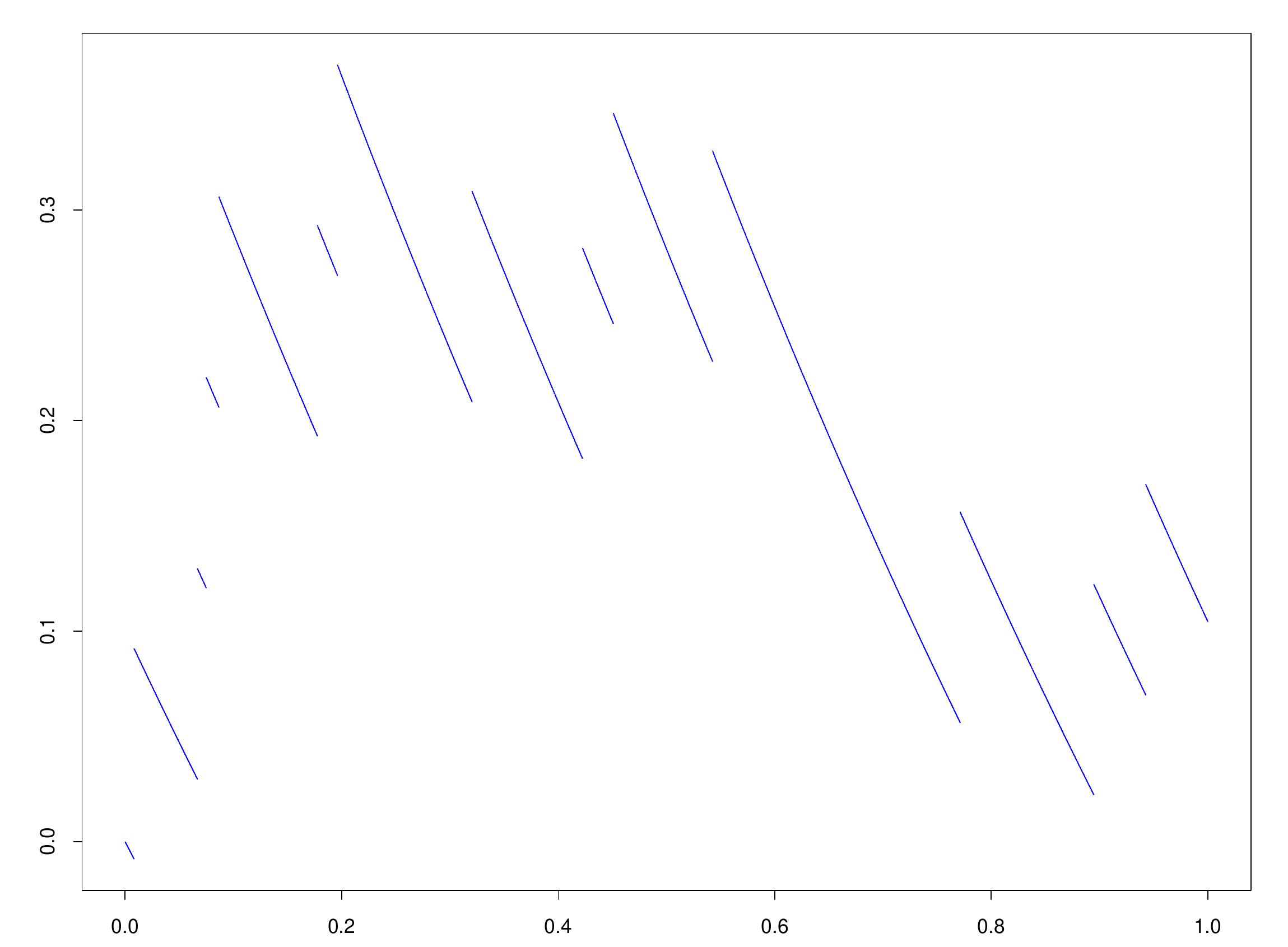}
  \caption{Plots of the Poisson (left) and compensated Poisson process.}
   \label{poisson-compoundpoisson}
\end{center}
\end{figure}

\begin{prop}
The compensated Poisson process defined by
\eqref{compensated-Poisson} is a martingale.
\end{prop}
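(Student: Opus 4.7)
The plan is to verify the three defining properties of a martingale for $\bar{N}$: adaptedness, integrability, and the conditional expectation identity. Adaptedness of $\bar{N}_t = N_t - \lambda t$ is inherited from $N$ since the drift $\lambda t$ is deterministic. For integrability, I would bound
\begin{align*}
\E[|\bar{N}_t|] \leq \E[N_t] + \lambda t = 2\lambda t < \infty,
\end{align*}
using that $N_t$ is Poisson distributed with parameter $\lambda t$.

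The central step is the conditional expectation identity $\E[\bar{N}_t \mid \F_s] = \bar{N}_s$ for $0 \leq s \leq t$. First I would decompose $\bar{N}_t = (N_t - N_s) + N_s - \lambda t$ and condition on $\F_s$. Since $N_s$ (and hence $\bar{N}_s$) is $\F_s$-measurable, while $N_t - N_s$ is independent of $\F_s$ by property (2) of the definition of a Poisson process, this reduces to $\E[N_t - N_s] + N_s - \lambda t$. Finally, invoking property (3), the increment $N_t - N_s$ is Poisson distributed with parameter $\lambda(t-s)$, so $\E[N_t - N_s] = \lambda(t-s)$, and the drift terms collapse to $N_s - \lambda s = \bar{N}_s$.

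There is no real obstacle here: every ingredient — adaptedness, independence of increments, and the Poisson mean formula — is built directly into the definition of a Poisson process recalled just above the statement. The construction $\bar{N}_t = N_t - \lambda t$ is designed precisely so that the linear drift compensates the expected growth $\E[N_t] = \lambda t$, making the martingale property an immediate consequence of the three-line computation outlined above.
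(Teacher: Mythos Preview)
Your proposal is correct and follows essentially the same three-step verification as the paper's proof: adaptedness inherited from $N$, integrability from the Poisson mean, and the martingale identity via the decomposition $N_t = N_s + (N_t - N_s)$ together with independence of increments. Your integrability bound $\E[|\bar N_t|] \le 2\lambda t$ is slightly more explicit than the paper's, and your write-up of the conditional-expectation step is in fact cleaner than the paper's (which carries harmless sign typos), but the argument is the same.
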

\begin{proof}
We have that
\begin{enumerate}
\item the process $\overline{N}$ is adapted to the filtration
      because $N$ is adapted (by definition);
\item $\E[|\overline{N}_t|]<\infty$ because
      $\E[|N_t|]<\infty$, for all $\ott$;
\item finally, let $0\leq s<t<T$, then
  \begin{align*}
     \E[\overline{N}_t|\F_s] &= \E[N_t-\lambda t|\F_s]\\
        &= \E[N_s-(N_t-N_s)|\F_s] -\lambda (s-(t-s))\\
        &= N_s-\E[(N_t-N_s)|\F_s] -\lambda (s-(t-s))\\
        &= N_s-\lambda s\\
        &= \overline{N}_s. \qedhere
  \end{align*}
\end{enumerate}
\end{proof}

\begin{rem}
The characteristic functions of the Poisson and compensated Poisson
random variables are respectively
\begin{align*}
 \E[\e^{iuN_t}] = \exp\big[\lambda t(\e^{iu}-1)\big]
\end{align*}
and
\begin{align*}
 \E[\e^{iu\overline{N}_t}] = \exp\big[\lambda t(\e^{iu}-1-iu)\big].
\end{align*}
\end{rem}

\section{Compound Poisson random variables}\label{comp-Poisson}

Let $N$ be a Poisson distributed random variable with parameter
$\lambda\ge0$ and $J=(J_k)_{k\geq1}$ an i.i.d. sequence of random
variables with law $F$. Then, by conditioning on the number of jumps
and using independence, we have that the characteristic function of
a compound Poisson distributed random variable is
\begin{align*}
\E\Big[\e^{iu\sum_{k=1}^{N}J_k}\Big]
  &= \sum_{n\geq0}\E\Big[\e^{iu\sum_{k=1}^{N}J_k}\big|N=n\Big]P(N=n)\\
  &= \sum_{n\geq0}\E\Big[\e^{iu\sum_{k=1}^{n}J_k}\Big]\e^{-\lambda}\frac{\lambda^n}{n!}\\
  &= \sum_{n\geq0}\left(\int_{\R}\e^{iux}F(\ud x) \right)^n\e^{-\lambda}\frac{\lambda^n}{n!}\\
  &= \exp\left(\lambda\int_{\R}(\e^{iux}-1)F(\ud x)\right).
\end{align*}

\section{Notation}

$\eqlaw$ equality in law, $\mathcal L(X)$ law of the random variable
$X$\par $a\wedge b=\min\{a,b\}$, $a\vee b=\max\{a,b\}$\par
$\mathbb{C}\ni z=\alpha+i\beta$, with $\alpha,\beta\in\R$; then $\Re
z=\alpha$ and $\Im z=\beta$\par $1_A$ denotes the indicator of the
generic event $A$, i.e.
\begin{align*}
\1_A(x) = \left\{%
\begin{array}{ll}
    1, & \hbox{if $x\in A$,} \\
    0, & \hbox{if $x\notin A$.} \\
\end{array}%
\right.
\end{align*}
\indent\textit{Classes:}\par
$\mathcal M_\text{loc}$ local martingales\par
$\mathcal A_\text{loc}$ processes of locally bounded variation\par
$\mathcal V$ processes of finite variation\par
$G_\text{loc}(\mu)$ functions integrable wrt the compensated random measure $\mu-\nu$

\section{Datasets}

The EUR/USD implied volatility data are from 5 November 2001. The
spot price was 0.93, the domestic rate (USD) 5\% and the foreign
rate (EUR) 4\%. The data are available at
\begin{center}
  \texttt{http://www.mathfinance.de/FF/sampleinputdata.txt}.
\end{center}

The USD/JPY, EUR/USD and GBP/USD foreign exchange time series
correspond to noon buying rates (dates: $22/10/1997-22/10/2004$,
$4/1/99-3/3/2005$ and $1/5/2002-3/3/2005$ respectively). The data
can be downloaded from
\begin{center}
  \texttt{http://www.newyorkfed.org/markets/foreignex.html}.
\end{center}

\section{Paul L\'evy}

Processes with independent and stationary increments are named
\emph{\lev processes} after the French mathematician Paul L\'evy
(1886-1971), who made the connection with infinitely divisible laws,
characterized their distributions (L\'evy-Khintchine formula) and
described their path structure (L\'evy-It\^o decomposition). Paul
L\'evy is one of the founding fathers of the theory of
\textit{stochastic processes} and made major contributions to the
field of \textit{probability theory}. Among others, Paul L\'evy
contributed to the study of Gaussian variables and processes, the
law of large numbers, the central limit theorem, stable laws,
infinitely divisible laws and pioneered the study of processes with
independent and stationary increments.

More information about Paul \lev and his scientific work, can be
found at the websites
\begin{center}
  \texttt{http://www.cmap.polytechnique.fr/~rama/levy.html}
\end{center}
and
\begin{center}
  \texttt{http://www.annales.org/archives/x/paullevy.html}
\end{center}
(in French).

\section*{Acknowledgments}

A large part of these notes was written while I was a Ph.D. student
at the University of Freiburg; I am grateful to Ernst Eberlein for
various interesting and illuminating discussions, and for the
opportunity to present this material at several occasions. I am
grateful to the various readers for their comments, corrections and
suggestions. Financial support from the Deutsche
Forschungsgemeinschaft (DFG, Eb\,66/9-2) and the Austrian Science
Fund (FWF grant Y328, START Prize) is gratefully acknowledged.

\bibliographystyle{chicago}
\bibliography{/home/famuser/papapan/Papers/references}

\end{document}